\newcommand{\Comment}[1]{\tcc*[f]{#1}}
\newcommand{\algand}{\textbf{and}\xspace}
\newcommand{\algor}{\textbf{or}\xspace}
\newcommand{\algnot}{\textbf{not}\xspace}
\newcommand{\algto}{\textbf{to}\xspace}
\newcommand{\algtrue}{\textbf{true}\xspace}
\newcommand{\algfalse}{\textbf{false}\xspace}
\newcommand{\query}{\mathrm{query}}
\newcommand{\cO}{\mathcal{O}}
\newcommand{\cT}{\mathcal{T}}
\newcommand{\oc}[1]{\text{\upshape\texttt{#1}}\xspace}
\newcommand{\ix}[1]{#1}
\newcommand{\pr}[1]{\mathfrak{#1}}
\newcommand{\st}[1]{\mathcal{#1}}
\newcommand{\rfirst}{\mathrm{first}}
\newcommand{\rlast}{\mathrm{last}}
\newcommand{\rmatch}{\mathrm{match}}
\newcommand{\rshare}{\mathrm{share}}
\newcommand{\dealii}{\texttt{deal.II}\xspace}
\newcommand{\pforest}{\texttt{p4est}\xspace}
\newcommand{\localtreeoctants}{\arry{O}^{\ix t}_{\pr p}}
\newcommand{\localoctants}{\mathcal{O}_{\pr p}}
\newcommand{\partoctants}{\mathcal{P}_{\Omega}}
\newcommand{\localpart}{\mathcal{P}_{\pr p}}
\newcommand{\treeoctants}{\mathcal{O}^{\ix t}}
\newcommand{\ternary}[3]{(#1) \;?\; #2 \;:\; #3}
\newcommand{\bitwand}{\;\&\;}
\newcommand{\bitwor}{\;\vert\;}
\newtheorem{theo}{Theorem}[section]
\newtheorem{defn}[theo]{Definition}
\newtheorem{propn}[theo]{Proposition}
\newtheorem{remark}[theo]{Remark}
\newcommand{\eqnlab}[1]{\label{eqn:#1}}
\newcommand{\figlab}[1]{\label{fig:#1}}
\newcommand{\figref}[1]{Figure~\ref{fig:#1}}
\newcommand{\alglab}[1]{\label{alg:#1}}
\newcommand{\algref}[1]{Algorithm~\ref{alg:#1}}
\newcommand{\seclab}[1]{\label{sec:#1}}
\newcommand{\secref}[1]{Section~\ref{sec:#1}}
\newcommand{\tablab}[1]{\label{tab:#1}}
\newcommand{\tabref}[1]{Table~\ref{tab:#1}}
\newcommand{\applab}[1]{\label{app:#1}}
\newcommand{\appref}[1]{Appendix~\ref{app:#1}}
\newcommand{\remlab}[1]{\label{rem:#1}}
\newcommand{\remref}[1]{Remark~\ref{rem:#1}}
\newcommand{\dtype}[1]{{#1}\xspace}
\newcommand{\Int}{\dtype{integer}}
\newcommand{\Octant}{\dtype{octant}}
\newcommand{\Octants}{\dtype{octants}}
\newcommand{\Array}{\dtype{array}}
\newcommand{\Boolean}{\dtype{boolean}}
\newcommand{\Callback}{\dtype{callback}\xspace}
\newcommand{\dfield}[1]{{\normalfont\texttt{#1}}}
\newcommand{\df}[1]{\dfield{#1}}
\newcommand{\arry}{\mathbf}
\newcommand{\fxn}[1]{\textup{\texttt{#1}}}
\newcommand{\asupp}{\mathrm{atom\,supp}}
\newcommand{\lsupp}{\mathrm{leaf\,supp}}
\newcommand{\bound}{\mathrm{bound}}
\newcommand{\child}{\mathrm{child}}
\newcommand{\desc}{\mathrm{desc}}
\newcommand{\clos}{\mathrm{clos}}
\newcommand{\dom}{\mathrm{dom}}
\newcommand{\Dom}{\mathrm{Dom}}
\newcommand{\level}{\mathrm{level}}
\newcommand{\locate}{\mathrm{locate}}
\newcommand{\owner}{\mathrm{owner}}
\newcommand{\range}{\mathrm{range}}
\newcommand{\Root}{\mathrm{root}}
\newcommand{\part}{\mathrm{part}}
\newcommand{\lmax}{\ix l_{\max}}
\newcommand{\algorule}{%
  \vspace{0.5ex}
  \hrule
  \vspace{0.5ex}
}
\title{Recursive Algorithms for Distributed\\ Forests of Octrees}
\author{%
Tobin Isaac\footnotemark[1]
\and
Carsten Burstedde\footnotemark[2]
\and
Lucas C.\ Wilcox\footnotemark[3]
\and
Omar Ghattas\footnotemark[1] \footnotemark[4] \footnotemark[5]}
\begin{document}

\thispagestyle{plain}

\maketitle

\renewcommand{\thefootnote}{\fnsymbol{footnote}}
\footnotetext[1]{Institute for Computational Engineering and Sciences,
  The University of Texas at Austin, USA}
\footnotetext[2]{Institut f\"ur Numerische Simulation,
  Rheinische Friedrich-Wilhelms-Universit\"at Bonn, Germany}
\footnotetext[3]{Department of Applied Mathematics,
  Naval Postgraduate School, USA}
\footnotetext[4]{Jackson School of Geosciences,
  The University of Texas at Austin, USA}
\footnotetext[5]{Department of Mechanical Engineering,
  The University of Texas at Austin, USA}
\renewcommand{\thefootnote}{\arabic{footnote}}

\begin{abstract}
  The forest-of-octrees approach to parallel adaptive mesh refinement and
coarsening (AMR) has recently been demonstrated in the context of a number of
large-scale PDE-based applications.  Efficient reference software has been
made freely available to the public both in the form of the standalone
\pforest library and more indirectly by the general-purpose finite element
library \dealii, which has been equipped with a \pforest backend. 

Although linear octrees, which store only leaf octants, have an underlying
tree structure by definition, it is not fully exploited in previously
published mesh-related algorithms.  This is because tree branches are not
explicitly stored, and because the topological relationships in meshes, such
as the adjacency between cells, introduce dependencies that do not respect the
octree hierarchy.  In this work we combine hierarchical and topological
relationships between octants to design efficient recursive algorithms
that operate on distributed forests of octrees.

We present three important algorithms with recursive implementations.  The
first is a parallel search for leaves matching any of a set of multiple search
criteria, such as leaves that contain points or intersect polytopes.  The
second is a ghost layer construction algorithm that handles arbitrarily
refined octrees that are not covered by previous algorithms, which require a
2:1 condition between neighboring leaves.  The third is a universal mesh
topology iterator.  This iterator visits every cell in a partition, as well as
every interface (face, edge and corner) between these cells.  The iterator
calculates the local topological information for every interface that it
visits, taking into account the nonconforming interfaces that increase the
complexity of describing the local topology.  To demonstrate the utility of
the topology iterator, we use it to compute the numbering and encoding of
higher-order $C^0$ nodal basis functions used for finite elements.

We analyze the complexity of the new recursive algorithms theoretically, and
assess their performance, both in terms of single-processor efficiency and in
terms of parallel scalability, demonstrating good weak and strong scaling up
to 458k cores of the JUQUEEN supercomputer.



\end{abstract}

\begin{keywords}
  forest of octrees, parallel adaptive mesh refinement, Morton code,
  recursive algorithms, large-scale scientific computing
\end{keywords}

\begin{AMS}
  65M50, 
  68W10, 
  65Y05, 
  65D18  
\end{AMS}

\section{Introduction}

The development of efficient and scalable parallel algorithms that modify
computational meshes is necessary for resolving features in large-scale
simulations.  These features may vanish and reappear, and/or evolve in shape
and location, which stresses the dynamic and in-situ aspects of adaptive mesh
refinement and coarsening (AMR).  Both stationary and time-dependent
simulations benefit from flexible and fast remeshing and repartitioning
capabilities, for example when using a-posteriori error estimation, building
mesh hierarchies for multilevel solvers for partial differential equations
(PDEs), or tracking of non-uniformly distributed particles by using an
underlying adaptive mesh.

Three main algorithmic approaches to AMR have emerged over time, which we may
call unstructured (U), block-structured (S), and hierarchical or tree-based
(T) AMR.  Just some examples that integrate parallel processing are (U)
\cite{NortonLouCwik01, LawlorChakravortyWilmarthEtAl06,
ZhouSahniDevineEtAl10}, (S)
\cite{MacNeiceOlsonMobarryEtAl00, GoodaleAllenLanfermannEtAl03,
  ColellaBellKeenEtAl07, LuitjensBerzinsHenderson08}, and (T)
\cite{Popinet03, TuOHallaronGhattas05, SundarSampathBiros08}.  While
these approaches have been developed independently of one another, there has
been a definite crossover of key technologies.  The graph-based partitioning
algorithms traditionally used in UAMR have for instance been supplemented by
fast algorithms based on coordinate partitioning and space-filling curves
(SFCs) \cite{DevineBomanHeaphyEtAl02}.  Hierarchical ideas and SFCs have also
been applied in SAMR packages to speed up and improve the partitioning
\cite{DreherGrauer05, ColellaGravesKeenEtAl07}.  Last but not least, the
unstructured meshing paradigm can be employed to create a root mesh of
connected trees when a nontrivial geometry needs to be meshed by
forest-of-octrees TAMR
\cite{StewartEdwards04,
BangerthHartmannKanschat07, BursteddeWilcoxGhattas11}.

The three approaches mentioned above differ in the way that the mesh topology
information is passed to applications.  With UAMR, the mesh is usually stored
in memory as an adjacency graph, and the application traverses the graph to
compute residuals, assemble system matrices, etc.  This approach has the
advantages that local graph traversal operations typically have constant
runtime complexity and that the AMR library can remain oblivious of the
details of the application, but the disadvantages of less efficient global
operations, such as locating the cell containing a point, and of unpredictable
memory access.  On the other hand, the SAMR approach allows for common
operations to be optimized and to use regular memory access patterns, but
requires more integration between the AMR package and the application, which
may not have access to the topology in a way not anticipated by the AMR
package.

Tree-based AMR can be integrated with an application for convenience
\cite{SampathAdavaniSundarEtAl08}, but can also be kept strictly modular
\cite{TuOHallaronGhattas05}.  Most TAMR packages implement
logarithmic-complexity algorithms for both global operations, such as point
location, and local operations, such as adjacency queries.  The paper
\cite{BursteddeWilcoxGhattas11} introduces the \pforest library, which
implements distributed forest-of-octree AMR with an emphasis on geometric and
topological flexibility and parallel scalability, and connects with
applications through a minimal interface.

The implementation of \pforest does not explicitly build a tree data
structure, so tree-based, recursive algorithms are largely absent from the
original presentation in \cite{BursteddeWilcoxGhattas11}.  Many topological
operations on octrees and quadtrees, however, are naturally expressed as
recursive algorithms, which have simple descriptions and often have good,
cache-oblivious memory access patterns.  In this paper, we present, analyze,
and demonstrate the efficiency of algorithms for important hierarchical and
topological operations: searching for leaves matching multiple criteria in
parallel, identifying neighboring (ghost) domains from minimal information, and
iterating over mesh cells and interfaces.  We include an example algorithm
that uses this iteration to create a high-order finite element node numbering.
Each algorithm has a key recursive
component that gives it an advantage over previously developed non-recursive
algorithms, such as improved efficiency, coverage of additional use cases, or
both.  We demonstrate the per-process efficiency of these algorithms, as well
as their parallel scalability, on the full size of JUQUEEN \cite{juqueen}, a
Blue Gene/Q \cite{HaringOhmachtFoxEtAl12} supercomputer.

\section{Forest of octree types and operations}
\seclab{prelim}

Here we present the important concepts on which we build our algorithms.  We
review the data structures for octants and distributed forests of octrees that
were presented in \cite{BursteddeWilcoxGhattas11}.  We also define a data type
to handle both octants and octant boundaries that will allow us to describe
the topology of forests of octrees.\footnote{%
  This data type is a notational convenience for this work, not part of the
  \pforest interface.%
} %
The definitions in this section are summarized in \tabref{assumeddata}.  For
the sake of correctness, the definitions in this section are given formally,
but the reader may find that the figures are just as helpful in understanding
our concepts, as they often correspond to geometrically intuitive ideas.

\begin{remark}[Notation]\upshape
  If we have defined an operation $\mathrm{op}(\cdot)$ for every $a\in\st A$,
  then $\mathrm{op}(\mathcal{A}) := \{\mathrm{op}(a): a\in\mathcal{A}\}$.
  $|\st A|$ is the cardinality of set $\st A$.  If $\{\st A_i\}_{i\in \st I}$
  are disjoint, their union is written $\bigsqcup_{i\in\st I}\st A_i$.  For a
  subset $A$ of a manifold, $\overline{A}$, $\partial A$, and $A^{\circ}$ are
  the closure, boundary, and interior of $A$.  We distinguish variable types
  with fonts:
  \begin{itemize}
    \item standard lower-case for integers and index sets
                                              ($\ix a,\ix b,\ix c,\dots$),
          except for $K,$ $N,$ and $P,$ which are the number of octrees,
          octants, and processes,
    \item typewriter for compound data types  ($\oc a,\oc b,\oc c,\dots$),
    \item Fraktur for MPI processes           ($\pr a,\pr b,\pr c,\dots$),
    \item upper-case for subsets of $\mathbb{R}^d$ and manifolds
                                              ($A, B, C, \dots$),
    \item calligraphic for finite sets        ($\st A,\st B, \st C,\dots$),
      and
    \item bold for finite sets represented as indexable arrays ($\arry A,
      \arry B, \arry C, \dots$).
  \end{itemize}
\end{remark}

\begin{table}
  \caption{A summary of \secref{prelim} and the locations of the definitions
    in the text.}
  \tablab{assumeddata}
  \renewcommand{\arraystretch}{1.1}
  \centering
  \begin{tabularx}{\textwidth}{|l|X|l|}
    \hline \multicolumn{1}{|c}{} &
    \multicolumn{2}{l|}{\S~\ref{sec:octant} Octants and points} \\ \hline
    octant $\oc o$, $\dom(\oc o)$ & Data type and the cube in
    $\mathbb{R}^d$ it represents & (\ref{eqn:octdomain}) \\
    $\Root(\ix t)$ & Root of the $\ix t$-th octree: side length $2^{\lmax}$ &
    (\ref{eqn:root})
    \\
    atom $\oc a$ & level-$\lmax$ octant: side length $1$ &
    \\
    %
    %
    $\{\dom_b(\oc o)\}_{b\in \st B \cup \{v_0\}}$ & Octant boundary domains and
    their indices &
    Fig. \ref{fig:boundarysets}
    \\
    point $\oc c = (\oc o,b)$ & Common data type for octants and interfaces &
    \\

    $\dom(\oc c)$ & A point's $n$-dim.\ ($n \leq d$) hypercube domain &
    (\ref{eqn:point})
    \\
    $\dim(\oc c)$ & Topological dimension of a point &
    \\
    $\level(\oc c)$ & Refinement level for points &
    (\ref{eqn:pointlevel})
    \\
    \hline \multicolumn{1}{|c}{} &
    \multicolumn{2}{l|}{\S~\ref{sec:relations} Hierarchical and topological relationships} \\ \hline
    $\desc(\oc c)$ & Descendants of point $\oc c$               &
    (\ref{eqn:pointdesc}) \\
    $\child(\oc c)$ & Children of point $\oc c$ \hfill(\tabref{sets}, 3rd
    column)               &
    (\ref{eqn:pointchildren}) \\
    $\part(\oc c)$  & Child partition of point $\oc c$ \hfill(", 4th
    column)       & (\ref{eqn:pointpart}) \\
    $\clos(\oc c)$  & Closure set of point $\oc c$            &
    (\ref{eqn:pointclos}) \\
    $\bound(\oc c)$ & Boundary set of point $\oc c$ \hfill(", 2nd
    column)           & (\ref{eqn:pointbound}) \\
    $\supp(\oc c)$  & Support set of point $\oc c$ \hfill(", 5th
    column)           & (\ref{eqn:pointsupp}) \\
    $\asupp(\oc c)$ & Atomic support set of $0$-point $\oc c$ &
    (\ref{eqn:pointasupp}) \\
    \hline \multicolumn{1}{|c}{} &
    \multicolumn{2}{l|}{\S~\ref{sec:forest} Forests of octrees} \\ \hline
    $\cT:=\{(T^{\ix t},\varphi^{\ix t})\}_{0\leq t < K}$ & Conformal
    macro mesh of $\Omega$ & (\ref{eqn:macro}) \\
    $\Dom(\oc c)$ & Point domain mapped by $\varphi^{\ix t}$ into $\Omega$ &
    (\ref{eqn:mapped})
    \\
    $\cO:= \bigsqcup_{0\leq t < K}\cO^{\ix t}$ & Non-conformal mesh via octree
  refinement & (\ref{eqn:micro}) \\
    \hline \multicolumn{1}{|c}{} &
    \multicolumn{2}{l|}{\S~\ref{sec:distributed} Distributed forests of octrees} \\ \hline
    $\oc o \leq \oc r$ & SFC-based total octant order & Alg.~\ref{alg:comparison} \\
    $\localoctants := \bigsqcup_{0 \leq t < K} \localtreeoctants$ & Sorted arrays of leaves owned
    by $\pr p$ for each tree & (\ref{eqn:localoctants}) \\
    $\Omega_{\pr p}$, $\Omega^{\ix t}$, $\Omega_{\pr p}^{\ix t}$ & Subdomains
    of $\localoctants$, $\treeoctants$, $\localtreeoctants$ &
    (\ref{eqn:subdomains}) \\
    $\locate(\oc a)$ & Process $\pr q$ such that $\Dom(\oc
      a)\subseteq\Omega_{\pr q}$ for atom $\oc a$ & (\ref{eqn:locate}) \\
    $(\oc f_{\pr q},\oc l_{\pr q}) := \range(\pr q)$ & First and
    last atoms located in $\Omega_{\pr q}$ & (\ref{eqn:procrange}) \\
    $(\oc f_{\oc o},\oc l_{\oc o}) := \range(\oc o)$ & First and last atoms
      in octant $\oc o$'s descendants & (\ref{eqn:octrange}) \\
    $\arry f:=\{\oc f_{\pr q}\}_{0\leq \pr q < P}$ & Sorted array of the
    first atoms of all processes & (\ref{eqn:firstatoms}) \\
    $\oc F_{\pr p}:=(\st T,\localoctants,\arry f)$ &
    Distributed forest of octrees & (\ref{eqn:distributed}) \\
    \hline
  \end{tabularx}
\end{table}

\subsection{Octants and points}
\seclab{octant}

\begin{figure}
  \begin{center}
    \includegraphics[scale=1.]{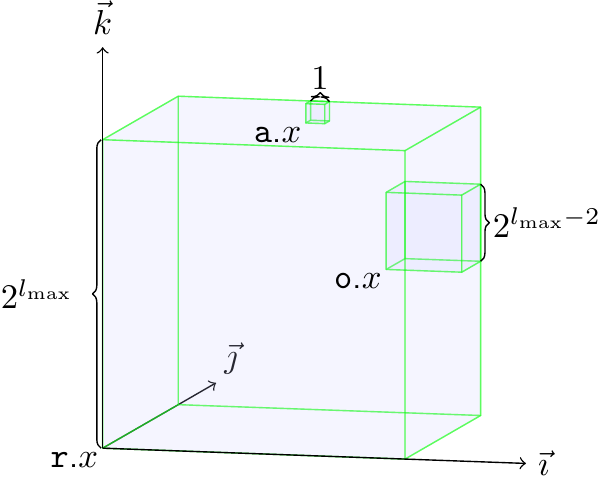}
    \hfill
    \includegraphics[scale=1.]{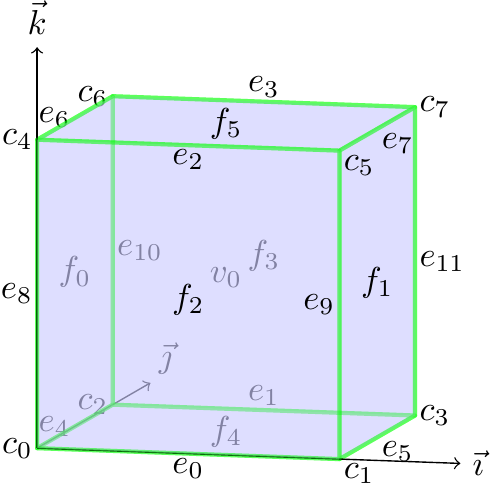}
  \end{center}
  \caption{%
    (left) An illustration of the domains of a root octant $\oc r$, a level-2
    octant $\oc o$, and an atom $\oc a$.  (right) The correlation between the
    boundary indices in $\st B$ (see the definition at the bottom of this
    page) and the lower-dimensional hypercubes---squares, line segments, and
    vertices---in the boundary of an octant, with the central cube labeled
    with the volume index, $v_0$ (adapted with permission from \cite[Fig.\
    2]{BursteddeWilcoxGhattas11}).  These indices are used to define points.%
  }%
  \figlab{boundarysets}%
\end{figure}%

Here we define the octant data type, which we will use in our algorithms, and
some special octants, which are illustrated in \figref{boundarysets}.

  An \emph{octant} $(d=3)$ or \emph{quadrant}
  $(d=2)$ $\oc o$ has the following data fields:
  \begin{itemize}
    \item
      $\oc o.\ix t\in \mathbb{N}$---$\oc o$'s \emph{tree index}, relevant to
      forests of octrees (see \secref{forest});%
      \footnote{In \pforest, the tree index is always available from context,
        not stored with the octant.}
    \item
      $\oc o.\ix l \in \{0,1,\dots,\lmax\}$---$\oc o$'s \emph{level of refinement} (or
      just \emph{level});
    \item
      $\oc o.\ix x\in \mathbb{Z}^d$---$\oc o$'s \emph{coordinates}, whose
      components must be multiples of $2^{\lmax - \oc o.\ix l}$.
  \end{itemize}
  The fields $\oc o.\ix l$ and $\oc o.\ix x$ encode an open cube in
  $\mathbb{R}^d$---$\oc o$'s \emph{domain}---with sides of length
  $2^{\lmax - \oc o.\ix l}$,
  \begin{equation}\eqnlab{octdomain}
    \mathop{\mathrm{dom}}(\oc o):=\{X\in\mathbb{R}^d: \oc o.\ix x_i < X_i <
    \oc o.\ix x_i + 2^{\lmax - \oc o.\ix l},\ 0\leq i < d\}.
  \end{equation}
%
%
  For every tree index $\ix t$, the \emph{root} is the level-0
  octant whose domain is $(0,2^{\lmax})^d$:
  \begin{equation}\eqnlab{root}
    \Root(\ix t).\ix t := t,\quad \Root(\ix t).\ix l := 0,\quad \Root(\ix
    t).\ix x := (0)^d.
  \end{equation}
%
  An \emph{atom} $\oc a$ is a smallest-possible octant, which has $\oc
  a.\ix l= \lmax$ and sides of length 1.


The algorithms we present involve both the hierarchical aspect of octrees and
the topological aspect of their domains.  Here we define a data type, which we
will call a \emph{point}, that encompasses both octants and their interfaces.
We will then define topological and hierarchical operations for points in
\secref{relations}.  We present these definitions in the context of a single
octree.  Minor modifications will be necessary for a forest of octrees, which
we will discuss in \secref{forest} (see \remref{modify}).

  %
  The boundary of a cube in $\mathbb{R}^d$ has a standard partition into
  lower-dimensional hypercubes, which contains $2^{d-n}\binom{d}{n}$
  $n$-dimensional hypercubes for $0\leq n <d$: since $(2 + 1)^d = \sum_{n=0}^d
  \binom{d}{n}2^{d-n}1^n$, there are $3^d - 1$ in all. We
  index these hypercubes with a set of \emph{boundary indices} $\st B$. The
  \emph{boundary domain} $\dom_{\ix b}(\oc o)$, $\ix b\in \st B$, is the
  corresponding hypercube in the boundary of $\dom(\oc o)$.  
%
For $d=3$, $\st B$ is made of eight corner indices $\{\ix c_i\}_{0 \leq i <8}$,
twelve edge indices $\{\ix e_i\}_{0\leq i < 12}$, and six face indices $\{\ix
f_i\}_{0\leq i < 6}$, which are all illustrated in \figref{boundarysets}.  For
convenience, we define one additional index, the \emph{volume index} $v_0$
corresponding to the volume of an octant, which defines an alias of an
octant's domain, $\dom_{\ix v_0}(\oc o) := \dom(\oc o)$.
%


  A \emph{point} is a tuple $(\oc o,\ix b)$, where $\oc o$ is an octant and
  $\ix b \in \st B \cup \{\ix v_0\}$.  The \emph{domain of a point} $\oc
  c=(\oc o,\ix b)$ is
  \begin{equation}\eqnlab{point} \dom(\oc c):=\dom_{\ix b}(\oc o).
  \end{equation}
  Two points are equal if and only if their domains are equal.
%
  The \emph{dimension} $\dim(\oc c)$ of a point $\oc c$ is the topological
  dimension of its domain. If $\dim(\oc c)=n$, $\oc c$ is an \emph{$n$-point}.
%
  The \emph{level} of a point $\oc c$ is the minimum refinement level of all
  octants in the vicinity of $\oc c$ that appear in a point tuple equal to
  $\oc c$,
  \begin{equation}\eqnlab{pointlevel} \level(\oc c):=\min\{\oc o.\ix
    l:\exists\ix b\in\st B \cup \{\ix v_0\}, \ (\oc o,\ix b) = \oc c\}.
  \end{equation}
%
If $\dim(\oc c)>0$, there is no need to use the minimum refinement level in
the definition of $\level(\oc c)$, because all octants that have $\oc c$ as a
boundary point have the same refinement level: $(\oc c = (\oc o, \ix b))
\Leftrightarrow (\level(\oc c) = \oc o.\ix l)$.  A $0$-point, however, may be
a corner point for octants with different refinement levels: by choosing the
minimum we make a $0$-point's level match that of the biggest neighboring
octant.

\begin{remark}\upshape
  We consider octants to be points: when an operation on a point is applied to
  an octant $\oc o$, one should understand $(\oc o,\ix v_0)$.
\end{remark}


\subsection{Hierarchical and topological relationships}
\seclab{relations}

Here we define the hierarchical and topological relationships used in our
algorithms and proofs below.  We show what these relationships look like in
\tabref{sets}.

\begin{table}
  \caption{%
    For $d=3$, illustrations of the boundary sets $(\bound(\oc c))$, children
    $(\child(\oc c))$, child partition sets $(\part(\oc c))$, and support sets
    $(\supp(\oc c))$ of octants and lower-dimensional $n$-points.  The closure
    set, not illustrated, is the union of the point with its boundary set.
    For a $0$-point, the atomic support set $\asupp(\oc c)$ looks like the
    support set, only scaled down.
  }
  \tablab{sets}
  \tikzset{face1/.style={gray}}
  \tikzset{face2/.style={gray!67}}
  \tikzset{face3/.style={gray!33}}
  \tikzset{corner/.style={ circle
                         , fill=black
                         , inner sep=0cm
                         , minimum width=0.1cm
                         }}
  \newcommand{\mytikzbbox}%
  {%
    \path[use as bounding box](-1cm,-1.25cm) (-1cm,1.25cm) (1cm,1.25cm)
    (1cm,-1.25cm);
  }
  \newcommand{\littleoctant}%
  {%
      \fill[face1] (-.4cm,-.4cm) rectangle (-.1cm,-.1cm);
      \draw[draw=black] (-.4cm,-.4cm) rectangle (-.1cm,-.1cm);
      \fill[face2] (-.1cm,-.4cm) -- (-.025cm,-.3cm) --
                        (-.025cm,0cm) -- (-.1cm,-.1cm) -- cycle;
      \draw[draw=black] (-.1cm,-.4cm) -- (-.025cm,-.3cm) --
                        (-.025cm,0cm) -- (-.1cm,-.1cm) -- cycle;
      \fill[face3] (-.4cm,-.1cm) -- (-.1cm,-.1cm) --
                        (-.025cm,0cm) -- (-.325cm,0cm) -- cycle;%
      \draw[draw=black] (-.4cm,-.1cm) -- (-.1cm,-.1cm) --
                        (-.025cm,0cm) -- (-.325cm,0cm) -- cycle;%
  }
  \newcommand{\midoctant}%
  {%
      \fill[face1] (-.9cm,-.4cm) rectangle (-.1cm,.4cm);
      \draw[draw=black] (-.9cm,-.4cm) rectangle (-.1cm,.4cm);
      \fill[face2] (-.1cm,-.4cm) -- (.1cm,-0.133cm) --
                        (.1cm,.667cm) -- (-.1cm,.4cm) -- cycle;
      \draw[draw=black] (-.1cm,-.4cm) -- (.1cm,-0.133cm) --
                        (.1cm,.667cm) -- (-.1cm,.4cm) -- cycle;
      \fill[face3] (-.9cm,.4cm) -- (-.1cm,.4cm) --
                        (.1cm,.667cm) -- (-.7cm,.667cm) -- cycle;%
      \draw[draw=black] (-.9cm,.4cm) -- (-.1cm,.4cm) --
                        (.1cm,.667cm) -- (-.7cm,.667cm) -- cycle;%
  }
  \newcommand{\bigoctant}%
  {%
      \fill[face1] (-.5cm,-.5cm) rectangle (.5cm,.5cm);
      \draw[draw=black] (-.5cm,-.5cm) rectangle (.5cm,.5cm);
      \fill[face2] (.5cm,-.5cm) -- (.75cm,-0.167cm) --
                        (.75cm,.833cm) -- (.5cm,.5cm) -- cycle;
      \draw[draw=black] (.5cm,-.5cm) -- (.75cm,-0.167cm) --
                        (.75cm,.833cm) -- (.5cm,.5cm) -- cycle;
      \fill[face3] (-.5cm,.5cm) -- (.5cm,.5cm) --
                        (.75cm,.833cm) -- (-0.25cm,.833cm) -- cycle;%
      \draw[draw=black] (-.5cm,.5cm) -- (.5cm,.5cm) --
                        (.75cm,.833cm) -- (-0.25cm,.833cm) -- cycle;%
  }
  \begin{center}
  \begin{tabular}[c]{|c|>{\centering\arraybackslash}m{2cm}|%
                        >{\centering\arraybackslash}m{2cm}|%
                        >{\centering\arraybackslash}m{2cm}|%
                        >{\centering\arraybackslash}m{2cm}|%
                        >{\centering\arraybackslash}m{2cm}|}
    \hline
    $n$ & $\oc c$ & $\bound(\oc c)$ & $\child(\oc c)$ & $\part(\oc c)$ &
    $\supp(\oc c)$
    \\ \hline
    $0$ &
    \begin{tikzpicture}
      \mytikzbbox
      \node[corner] {} node [below] {(corner)};
    \end{tikzpicture}
    &
    $\emptyset$
    &
    $\emptyset$
    &
    $\emptyset$
    &
    \begin{tikzpicture}
      \mytikzbbox

      \begin{scope}[shift={(0.025cm,-.467cm)}]
        \midoctant
      \end{scope}
      \begin{scope}[shift={(1.025cm,-.467cm)}]
        \midoctant
      \end{scope}
      \begin{scope}[shift={(-0.225cm,-0.8cm)}]
        \midoctant
      \end{scope}
      \begin{scope}[shift={(0.775cm,-0.8cm)}]
        \midoctant
      \end{scope}
      \begin{scope}[shift={(0.025cm,.533cm)}]
        \midoctant
      \end{scope}
      \begin{scope}[shift={(1.025cm,.533cm)}]
        \midoctant
      \end{scope}
      \begin{scope}[shift={(-0.225cm,0.2cm)}]
        \midoctant
      \end{scope}
      \begin{scope}[shift={(0.775cm,0.2cm)}]
        \midoctant
      \end{scope}
    \end{tikzpicture}
    \\
    \hline
    $1$ &
    \begin{tikzpicture}
      \mytikzbbox
      \draw (0cm,0.5cm) -- (0cm,-0.5cm) node [below] {(edge)};
    \end{tikzpicture}
    &
    \begin{tikzpicture}
      \mytikzbbox
      \path (0cm,0.5cm) node [corner] {} (0cm,-0.5cm) node [corner] {};
    \end{tikzpicture}
    &
    \begin{tikzpicture}
      \mytikzbbox
      \draw (0cm,0.4cm) -- (0cm,0.1cm);
      \draw (0cm,-0.1cm) -- (0cm,-0.4cm);
    \end{tikzpicture}
    &
    \begin{tikzpicture}
      \mytikzbbox
      \draw (0cm,0.4cm) -- (0cm,0.1cm);
      \draw (0cm,-0.1cm) -- (0cm,-0.4cm);
      \path (0cm,0cm) node [corner] {};
    \end{tikzpicture}
    &
    \begin{tikzpicture}
      \mytikzbbox
      \begin{scope}[shift={(0.025cm,0.033cm)}]
        \midoctant
      \end{scope}
      \begin{scope}[shift={(1.025cm,0.033cm)}]
        \midoctant
      \end{scope}
      \begin{scope}[shift={(-0.225cm,-0.3cm)}]
        \midoctant
      \end{scope}
      \begin{scope}[shift={(0.775cm,-0.3cm)}]
        \midoctant
      \end{scope}
    \end{tikzpicture}
    \\
    \hline
    $2$ &
    \begin{tikzpicture}
      \mytikzbbox
      \fill[fill=gray] (-.5cm,-.5cm) rectangle (.5cm,.5cm);
      \draw[draw=black] (-.5cm,-.5cm) rectangle (.5cm,.5cm);
      \path (0cm,-.5cm) node [below] {(face)};
    \end{tikzpicture}
    &
    \begin{tikzpicture}
      \mytikzbbox
      \draw (-0.5cm,0.4cm) -- (-0.5cm,-0.4cm);
      \draw (0.5cm,0.4cm) -- (0.5cm,-0.4cm);
      \draw (-0.4cm,-0.5cm) -- (0.4cm,-0.5cm);
      \draw (-0.4cm,0.5cm) -- (0.4cm,0.5cm);
      \path (-0.5cm,-0.5cm) node [corner] {};
      \path (0.5cm,-0.5cm) node [corner] {};
      \path (-0.5cm,0.5cm) node [corner] {};
      \path (0.5cm,0.5cm) node [corner] {};
    \end{tikzpicture}
    &
    \begin{tikzpicture}
      \mytikzbbox
      \fill[fill=gray] (-.4cm,-.4cm) rectangle (-.1cm,-.1cm);
      \draw[draw=black] (-.4cm,-.4cm) rectangle (-.1cm,-.1cm);
      \fill[fill=gray] (0.1cm,-.4cm) rectangle (0.4cm,-.1cm);
      \draw[draw=black] (0.1cm,-.4cm) rectangle (0.4cm,-.1cm);
      \fill[fill=gray] (-.4cm,0.1cm) rectangle (-.1cm,0.4cm);
      \draw[draw=black] (-.4cm,0.1cm) rectangle (-.1cm,0.4cm);
      \fill[fill=gray] (0.1cm,0.1cm) rectangle (0.4cm,0.4cm);
      \draw[draw=black] (0.1cm,0.1cm) rectangle (0.4cm,0.4cm);
    \end{tikzpicture}
    &
    \begin{tikzpicture}
      \mytikzbbox
      \fill[fill=gray] (-.4cm,-.4cm) rectangle (-.1cm,-.1cm);
      \draw[draw=black] (-.4cm,-.4cm) rectangle (-.1cm,-.1cm);
      \fill[fill=gray] (0.1cm,-.4cm) rectangle (0.4cm,-.1cm);
      \draw[draw=black] (0.1cm,-.4cm) rectangle (0.4cm,-.1cm);
      \fill[fill=gray] (-.4cm,0.1cm) rectangle (-.1cm,0.4cm);
      \draw[draw=black] (-.4cm,0.1cm) rectangle (-.1cm,0.4cm);
      \fill[fill=gray] (0.1cm,0.1cm) rectangle (0.4cm,0.4cm);
      \draw[draw=black] (0.1cm,0.1cm) rectangle (0.4cm,0.4cm);
      \draw (0cm,0.4cm) -- (0cm,0.1cm);
      \draw (0cm,-0.1cm) -- (0cm,-0.4cm);
      \draw (0.4cm,0cm) -- (0.1cm,0cm);
      \draw (-0.1cm,0cm) -- (-0.4cm,0cm);
      \path (0cm,0cm) node [corner] {};
    \end{tikzpicture}
    &
    \begin{tikzpicture}
      \mytikzbbox
      \begin{scope}[shift={(0.525cm,0.033cm)}]
        \midoctant
      \end{scope}
      \begin{scope}[shift={(0.275cm,-0.3cm)}]
        \midoctant
      \end{scope}
    \end{tikzpicture}
    \\
    \hline
    $3$ &
    \begin{tikzpicture}
      \mytikzbbox
      \begin{scope}[shift={(-.125cm,-.167cm)}]
        \bigoctant
      \end{scope}
      \path (0cm,-.667cm) node [below] {(octant)};
    \end{tikzpicture}
    &
    \begin{tikzpicture}
      \mytikzbbox

      \begin{scope}[scale=1.5,shift={(-.125cm,-.167cm)}]
        \path (-.25cm,-.167cm) node [corner] {};
        \path (0.75cm,-.167cm) node [corner] {};
        \path (-.25cm,0.833cm) node [corner] {};
        \path (0.75cm,0.833cm) node [corner] {};
        \draw (-.25cm,0.733cm) -- (-.25cm,-0.067cm);
        \draw (0.75cm,0.733cm) -- (0.75cm,-0.067cm);
        \draw (-0.15cm,-.167cm) -- (0.65cm,-.167cm);
        \draw (-0.15cm,0.833cm) -- (0.65cm,0.833cm);
        \fill[face1] (-.15cm,-.076cm) rectangle (.65cm,.733cm);
        \draw[draw=black] (-.15cm,-.076cm) rectangle (.65cm,.733cm);

        \draw (-.475cm,-0.467cm) -- (-.275cm,-0.2cm);
        \draw (0.525cm,-0.467cm) -- (0.725cm,-0.2cm);
        \fill[face3] (-.375cm,-.467cm) -- (.425cm,-.467cm) --
        (.625cm,-.2cm) -- (-0.175cm,-.2cm) -- cycle;
        \draw[draw=black] (-.375cm,-.467cm) -- (.425cm,-.467cm) --
        (.625cm,-.2cm) -- (-0.175cm,-.2cm) -- cycle;

        \fill[face2] (-.475cm,-.367cm) -- (-.275cm,-0.1cm) --
        (-.275cm,.7cm) -- (-.475cm,.433cm) -- cycle;
        \draw[draw=black] (-.475cm,-.367cm) -- (-.275cm,-0.1cm) --
        (-.275cm,.7cm) -- (-.475cm,.433cm) -- cycle;

        \fill[face2] (.525cm,-.367cm) -- (.725cm,-0.1cm) --
        (.725cm,.7cm) -- (.525cm,.433cm) -- cycle;
        \draw[draw=black] (.525cm,-.367cm) -- (.725cm,-0.1cm) --
        (.725cm,.7cm) -- (.525cm,.433cm) -- cycle;

        \draw (0.525cm,0.533cm) -- (0.725cm,0.8cm);
        \draw (-0.475cm,0.533cm) -- (-.275cm,0.8cm);
        \fill[face3] (-.375cm,.533cm) -- (.425cm,.533cm) --
        (.625cm,.8cm) -- (-0.175cm,.8cm) -- cycle;
        \draw[draw=black] (-.375cm,.533cm) -- (.425cm,.533cm) --
        (.625cm,.8cm) -- (-0.175cm,.8cm) -- cycle;

        \path (-.5cm,-.5cm) node [corner] {};
        \path (0.5cm,-.5cm) node [corner] {};
        \path (-.5cm,0.5cm) node [corner] {};
        \path (0.5cm,0.5cm) node [corner] {};
        \draw (-0.5cm,0.4cm) -- (-0.5cm,-0.4cm);
        \draw (0.5cm,0.4cm) -- (0.5cm,-0.4cm);
        \draw (-0.4cm,-0.5cm) -- (0.4cm,-0.5cm);
        \draw (-0.4cm,0.5cm) -- (0.4cm,0.5cm);
        \fill[face1] (-.4cm,-.4cm) rectangle (.4cm,.4cm);
        \draw[draw=black] (-.4cm,-.4cm) rectangle (.4cm,.4cm);

      \end{scope}
    \end{tikzpicture}
    &
    \begin{tikzpicture}
      \mytikzbbox

      \begin{scope}[scale=1.5]
        \begin{scope}[shift={(0.025cm,0.033cm)}]
          \littleoctant
        \end{scope}
        \begin{scope}[shift={(0.525cm,0.033cm)}]
          \littleoctant
        \end{scope}
        \begin{scope}[shift={(0.025cm,0.533cm)}]
          \littleoctant
        \end{scope}
        \begin{scope}[shift={(0.525cm,0.533cm)}]
          \littleoctant
        \end{scope}
        \begin{scope}[shift={(-.1cm,-.133cm)}]
          \littleoctant
        \end{scope}
        \begin{scope}[shift={(.4cm,-.133cm)}]
          \littleoctant
        \end{scope}
        \begin{scope}[shift={(-.1cm,.367cm)}]
          \littleoctant
        \end{scope}
        \begin{scope}[shift={(.4cm,.367cm)}]
          \littleoctant
        \end{scope}
      \end{scope}
    \end{tikzpicture}
    &
    \begin{tikzpicture}
      \mytikzbbox

      \begin{scope}[scale=2.0]
        \begin{scope}[shift={(0.025cm,0.033cm)}]
          \littleoctant
        \end{scope}
        \begin{scope}[shift={(0.125cm,0.033cm)}]
          \fill[face2] (-.1cm,-.4cm) -- (-.025cm,-.3cm) --
                            (-.025cm,0cm) -- (-.1cm,-.1cm) -- cycle;
          \draw[draw=black] (-.1cm,-.4cm) -- (-.025cm,-.3cm) --
                            (-.025cm,0cm) -- (-.1cm,-.1cm) -- cycle;
        \end{scope}
        \begin{scope}[shift={(0.525cm,0.033cm)}]
          \littleoctant
        \end{scope}
        \begin{scope}[shift={(0.025cm,0.133cm)}]
          \fill[face3] (-.4cm,-.1cm) -- (-.1cm,-.1cm) --
                            (-.025cm,0cm) -- (-.325cm,0cm) -- cycle;%
          \draw[draw=black] (-.4cm,-.1cm) -- (-.1cm,-.1cm) --
                            (-.025cm,0cm) -- (-.325cm,0cm) -- cycle;%
        \end{scope}
        \begin{scope}[shift={(0.125cm,0.133cm)}]
          \draw (-.1cm,-.1cm) -- (-.025cm,0cm);
        \end{scope}
        \begin{scope}[shift={(0.525cm,0.133cm)}]
          \fill[face3] (-.4cm,-.1cm) -- (-.1cm,-.1cm) --
                            (-.025cm,0cm) -- (-.325cm,0cm) -- cycle;%
          \draw[draw=black] (-.4cm,-.1cm) -- (-.1cm,-.1cm) --
                            (-.025cm,0cm) -- (-.325cm,0cm) -- cycle;%
        \end{scope}
        \begin{scope}[shift={(0.025cm,0.533cm)}]
          \littleoctant
        \end{scope}
        \begin{scope}[shift={(0.125cm,0.533cm)}]
          \fill[face2] (-.1cm,-.4cm) -- (-.025cm,-.3cm) --
                            (-.025cm,0cm) -- (-.1cm,-.1cm) -- cycle;
          \draw[draw=black] (-.1cm,-.4cm) -- (-.025cm,-.3cm) --
                            (-.025cm,0cm) -- (-.1cm,-.1cm) -- cycle;
        \end{scope}
        \begin{scope}[shift={(0.525cm,0.533cm)}]
          \littleoctant
        \end{scope}
        \begin{scope}
          \fill[fill=gray] (-.4cm,-.4cm) rectangle (-.1cm,-.1cm);
          \draw[draw=black] (-.4cm,-.4cm) rectangle (-.1cm,-.1cm);
          \fill[fill=gray] (0.1cm,-.4cm) rectangle (0.4cm,-.1cm);
          \draw[draw=black] (0.1cm,-.4cm) rectangle (0.4cm,-.1cm);
          \fill[fill=gray] (-.4cm,0.1cm) rectangle (-.1cm,0.4cm);
          \draw[draw=black] (-.4cm,0.1cm) rectangle (-.1cm,0.4cm);
          \fill[fill=gray] (0.1cm,0.1cm) rectangle (0.4cm,0.4cm);
          \draw[draw=black] (0.1cm,0.1cm) rectangle (0.4cm,0.4cm);
          \draw (0cm,0.4cm) -- (0cm,0.1cm);
          \draw (0cm,-0.1cm) -- (0cm,-0.4cm);
          \draw (0.4cm,0cm) -- (0.1cm,0cm);
          \draw (-0.1cm,0cm) -- (-0.4cm,0cm);
          \path (0cm,0cm) node [corner] {};
        \end{scope}
        \begin{scope}[shift={(-.1cm,-.133cm)}]
          \littleoctant
        \end{scope}
        \begin{scope}[shift={(0cm,-.133cm)}]
          \fill[face2] (-.1cm,-.4cm) -- (-.025cm,-.3cm) --
                            (-.025cm,0cm) -- (-.1cm,-.1cm) -- cycle;
          \draw[draw=black] (-.1cm,-.4cm) -- (-.025cm,-.3cm) --
                            (-.025cm,0cm) -- (-.1cm,-.1cm) -- cycle;
        \end{scope}
        \begin{scope}[shift={(.4cm,-.133cm)}]
          \littleoctant
        \end{scope}
        \begin{scope}[shift={(-.1cm,-.033cm)}]
          \fill[face3] (-.4cm,-.1cm) -- (-.1cm,-.1cm) --
                            (-.025cm,0cm) -- (-.325cm,0cm) -- cycle;%
          \draw[draw=black] (-.4cm,-.1cm) -- (-.1cm,-.1cm) --
                            (-.025cm,0cm) -- (-.325cm,0cm) -- cycle;%
        \end{scope}
        \begin{scope}[shift={(0.cm,-0.033cm)}]
          \draw (-.1cm,-.1cm) -- (-.025cm,0cm);
        \end{scope}
        \begin{scope}[shift={(0.4cm,-.033cm)}]
          \fill[face3] (-.4cm,-.1cm) -- (-.1cm,-.1cm) --
                            (-.025cm,0cm) -- (-.325cm,0cm) -- cycle;%
          \draw[draw=black] (-.4cm,-.1cm) -- (-.1cm,-.1cm) --
                            (-.025cm,0cm) -- (-.325cm,0cm) -- cycle;%
        \end{scope}
        \begin{scope}[shift={(-.1cm,.367cm)}]
          \littleoctant
        \end{scope}
        \begin{scope}[shift={(0cm,.367cm)}]
          \fill[face2] (-.1cm,-.4cm) -- (-.025cm,-.3cm) --
                            (-.025cm,0cm) -- (-.1cm,-.1cm) -- cycle;
          \draw[draw=black] (-.1cm,-.4cm) -- (-.025cm,-.3cm) --
                            (-.025cm,0cm) -- (-.1cm,-.1cm) -- cycle;
        \end{scope}
        \begin{scope}[shift={(.4cm,.367cm)}]
          \littleoctant
        \end{scope}
      \end{scope}
    \end{tikzpicture}
    &
    \begin{tikzpicture}
      \mytikzbbox
      \begin{scope}[shift={(-.125cm,-.167cm)}]
        \bigoctant
      \end{scope}
    \end{tikzpicture}
    \\
    \hline
  \end{tabular}
\end{center}
\end{table}

The hierarchical relationships between points are determined by set inclusion
of their domains.
%
  The \emph{descendants} of a point $\oc c$ are all of the points with the
  same dimension whose domains are contained in the domain of $\oc c$,
  \begin{equation}\eqnlab{pointdesc}
    \desc(\oc c):=\{\oc e:\dim(\oc e)=\dim(\oc c),\ \dom(\oc e) \subseteq
    \dom(\oc c)\}.
  \end{equation}
  The \emph{children} of a point $\oc c$ are descendants that are
  more refined than $\oc c$ by one level,
  \begin{equation}\eqnlab{pointchildren}
    \child(\oc c):=\{\oc h:\oc h\in\desc(\oc c),\ \level(\oc h) = \level(\oc
    c)+1\}.
  \end{equation}

The requirement that an octant $\oc o$'s coordinates must be multiples of
$2^{\lmax- \oc o.\ix l}$ has the consequence that the domains of two distinct
points with the same level do not overlap, and that every point's domain is
tiled by the domains of its children
%
  (a collection $\mathcal{U}$ of subsets of a set $S$ in a topological space
  tiles $S$ if
  %
$
    S \subseteq \bigcup_{U\in \mathcal{U}} \overline{U}
    \text{ and }
    (U,V\in\mathcal{U},U\neq V) \Rightarrow (U\cap V = \emptyset)
    $).

\begin{propn}
  If $\dim(\oc c)>0$, then $|\child(\oc c)| = 2^{\dim(\oc c)}$ and
  $\dom(\child(\oc c))$ tiles $\dom(\oc c)$.
\end{propn}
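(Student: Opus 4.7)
The plan is a direct geometric counting argument exploiting the fact that an $n$-hypercube bisected along each of its $n$ axes decomposes into $2^n$ equal sub-hypercubes. First, by the observation following \eqref{eqn:pointlevel}, since $\dim(\oc c) = n > 0$ I may fix a representation $\oc c = (\oc o, \ix b)$ with $\oc o.\ix l = \level(\oc c) =: l$ and $\ix b \in \st B$. Setting $s := 2^{\lmax - l}$, the closure $\overline{\dom(\oc c)}$ is the product of $n$ closed intervals of length $s$ (the ``free'' coordinates of $\dom(\oc o)$) with $d - n$ singletons (the ``fixed'' coordinates set to the extremal values selected by $\ix b$).

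Next I would construct the children explicitly. Each of the $2^d$ level-$(l+1)$ children of $\oc o$ bisects $\dom(\oc o)$ along each axis. When restricted to $\overline{\dom(\oc c)}$, the $d-n$ bisections orthogonal to the fixed directions are trivial, because the fixed coordinate of $\dom(\oc c)$ already sits at an extremum; hence the induced decomposition of $\overline{\dom(\oc c)}$ consists of exactly $2^n$ closed $n$-hypercubes of side $s/2$, one per binary choice along the $n$ free axes. Each such sub-hypercube is the closure of some $\dom_{\ix b'}(\oc o')$ for a child $\oc o'$ of $\oc o$ and some $\ix b' \in \st B$ of the same dimensional type as $\ix b$, and therefore defines a point $\oc c' = (\oc o', \ix b')$ of dimension $n$ and level $l + 1$. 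The corresponding $2^n$ open domains are pairwise disjoint and their closures cover $\overline{\dom(\oc c)}$, so $\dom(\child(\oc c))$ tiles $\dom(\oc c)$ in the sense of the definition just stated.

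To close the count I would argue conversely: any $\oc c' \in \child(\oc c)$ is an $n$-point of level $l+1$, so $\overline{\dom(\oc c')}$ is an axis-aligned $n$-hypercube of side $s/2$ contained in $\overline{\dom(\oc c)}$ whose $n$ free coordinates are multiples of $s/2$. There are exactly $2^n$ such positions, and every one is realized by the construction above; since two points with equal domains are equal, this forces $|\child(\oc c)| = 2^n$. The main obstacle is conceptual rather than technical: one must be comfortable with the convention that, for $n < d$, several different pairs $(\oc o', \ix b')$ may represent the same point because an edge or corner is shared by up to $2^{d-n}$ children of $\oc o$, and then verify that the identification of such pairs yields the stated count without over- or under-counting. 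The underlying geometry is simply the standard subdivision of a hypercube.
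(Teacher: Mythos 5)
Your proof is correct and simply makes explicit the geometric argument that the paper treats as immediate. The paper does not give a proof of this proposition: it presents it as a consequence of the requirement that octant coordinates are multiples of $2^{\lmax - \oc o.\ix l}$, so there is no ``paper's proof'' against which to compare; yours is a reasonable realization of that hint, decomposing the $n$-hypercube $\dom(\oc c)$ by bisecting each free axis, verifying that the $2^n$ open sub-hypercubes are pairwise disjoint with closures covering $\overline{\dom(\oc c)}$, and then counting axis-aligned sub-hypercubes on the $s/2$-grid to show no other $n$-points of level $l+1$ fit inside $\dom(\oc c)$.

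One small inaccuracy: you write $\ix b \in \st B$, but the proposition covers $1 \le \dim(\oc c) \le d$, and when $\dim(\oc c) = d$ the only representative has $\ix b = \ix v_0 \notin \st B$. The argument is unchanged (all $d$ coordinates are free, giving $2^d$ sub-hypercubes), so this is a slip in notation rather than a gap, but the hypothesis should read $\ix b \in \st B \cup \{\ix v_0\}$ to cover the octant case, and likewise $\ix b'$ may equal $\ix v_0$.
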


A point's domain is tiled by its children's domains, but it is not partitioned
by them, as they are open sets.  To define a partition, we must add
lower-dimensional points between them.
%
  The \emph{child partition} is the set of all points whose domains are
  contained in $\dom(\oc c)$ and whose levels are greater by one,
  \begin{equation}\eqnlab{pointpart}
    \part(\oc c) :=
    \{\oc h:\level(\oc h)=\level(\oc c) + 1,
    \ \dom(\oc h)\subset\dom(\oc c)\}.
  \end{equation}

\begin{propn}
  If $\dim(\oc c)>0$, then $|\part(\oc c)| = 3^{\dim(\oc c)}$ and $\part(\oc
  c)$ defines a partition, $\dom(\oc c) = \bigsqcup\dom(\part(\oc c))$.
\end{propn}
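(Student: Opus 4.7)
The plan is to pick a representative octant $\oc o$ of level $\ix l:=\level(\oc c)$ with $\oc c=(\oc o,\ix b)$, and parameterize $\dom(\oc c)$ as an axis-aligned open hypercube. Setting $n:=\dim(\oc c)\geq 1$ and $h:=2^{\lmax-\ix l}$, one obtains an index set $I\subseteq\{0,\dots,d-1\}$ with $|I|=n$ and integer multiples $a_i,c_i$ of $h$ such that
\begin{equation*}
  \dom(\oc c)=\{X\in\sR^d : a_i<X_i<a_i+h\text{ for }i\in I,\ X_i=c_i\text{ for }i\notin I\}.
\end{equation*}
Writing $m_i:=a_i+h/2$, the strategy is to exhibit a bijection between $\part(\oc c)$ and the set of choice functions on $I$ that pick, in each free direction, one of the three pieces $(a_i,m_i)$, $\{m_i\}$, or $(m_i,a_i+h)$ that together tripartition $(a_i,a_i+h)$.

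Next I would characterize an arbitrary $\oc h\in\part(\oc c)$. The inclusion $\dom(\oc h)\subset\dom(\oc c)$ forces $X_i=c_i$ on $\dom(\oc h)$ for every $i\notin I$. For each $i\in I$, the level constraint $\level(\oc h)=\ix l+1$ puts the extreme values of $X_i$ on $\dom(\oc h)$ on the grid of multiples of $h/2$, and strict containment in the open interval $(a_i,a_i+h)$ then leaves exactly the three options above. Since each of the $3^n$ choice vectors yields a valid point and distinct vectors correspond to distinct domains, $|\part(\oc c)|=3^n$; moreover $\dim(\oc h)$ equals the number of directions that selected an interval rather than the midpoint.

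The main obstacle is the lowest-dimensional element of $\part(\oc c)$: the candidate $0$-point $\oc h_0$ at the centroid $(m_i)_{i\in I}$. For $0$-points $\level$ is defined in \eqref{eqn:pointlevel} as a minimum over all octants having $\oc h_0$ as a corner, so I must verify this minimum is exactly $\ix l+1$ rather than something smaller. Since $n\geq 1$, at least one coordinate $m_i$ with $i\in I$ fails to be a multiple of $h$; because every level-$\ix l$ octant has its corners on the $h$-grid, no such octant can have $\oc h_0$ as a corner, so the minimum is at least $\ix l+1$, while any level-$(\ix l+1)$ octant neighboring $\oc h_0$ realizes this bound.

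Finally, for the partition $\dom(\oc c)=\bigsqcup\dom(\part(\oc c))$ I would argue pointwise: every $X\in\dom(\oc c)$ has each $X_i$ (for $i\in I$) in exactly one of $(a_i,m_i)$, $\{m_i\}$, or $(m_i,a_i+h)$, which selects a unique $\oc h\in\part(\oc c)$ whose domain contains $X$; conversely $\dom(\oc h)\subseteq\dom(\oc c)$ for every $\oc h\in\part(\oc c)$ by construction, and distinct choice vectors produce pairwise disjoint subsets of $\sR^d$, which gives the claimed disjoint decomposition.
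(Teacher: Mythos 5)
The paper states this proposition without proof --- it is one of several geometric facts in the subsection on hierarchical and topological relationships that are asserted as consequences of the definitions --- so there is no in-paper argument to compare against. Your coordinate parameterization is a clean way to establish it: the free-index set $I$ with $|I| = \dim(\oc c)$, the tripartition $(a_i, m_i) \sqcup \{m_i\} \sqcup (m_i, a_i + h)$ of each free interval, and the induced bijection with choice vectors together give both the count $3^{\dim(\oc c)}$ and the disjoint decomposition of $\dom(\oc c)$. Two small points are worth tightening. First, when bounding $\level(\oc h_0)$ from below you only exclude level-$\ix l$ octants as candidate parents of the centroid $0$-point, but the level in \eqref{eqn:pointlevel} is a minimum over octants of \emph{all} levels, so you must also rule out levels strictly less than $\ix l$; the same reasoning applies a fortiori, since a level-$l$ octant with $l < \ix l$ has corners on multiples of $2^{\lmax - l}$, a sub-lattice of the $h$-grid. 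Second, the claim that each choice vector yields a valid point of level $\ix l + 1$ is left implicit for the positive-dimensional cases: one should note that the resulting hypercube, having side $h/2$ with coordinates on the $(h/2)$-grid, is a facet or the interior of a level-$(\ix l+1)$ octant, hence of the form $(\oc o', \ix b')$ with $\oc o'.\ix l = \ix l + 1$, and that for positive-dimensional points the level equals $\oc o'.\ix l$. Both additions are one-liners; the argument is otherwise sound.
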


The two basic topological sets we need for a point $\oc c$ are the
lower-dimensional points that surround $\oc c$---its boundary points---and the
octants that surround $\oc c$---its support octants.  To define boundary
points, we first define closure points.
%
  The \emph{closure set of an octant} $\oc o$ is the set of all points in which $\oc o$
  may appear in a point tuple,
  \begin{equation}\eqnlab{octclos} \clos(\oc o) := \{(\oc o,\ix b):\ix b\in\st
    B\cup\{\ix v_0\}\}.  \end{equation}
  The \emph{closure set of a point} $\oc c$ is the intersection of all octant
  closure sets containing $\oc c$,
  \begin{equation}\eqnlab{pointclos} \clos(\oc c):=\bigcap \{\clos(\oc o):\oc
    c\in \clos(\oc o)\}.  \end{equation}
%
  The \emph{boundary set} of a point $\oc c$ is its closure less itself,
  \begin{equation}\eqnlab{pointbound} \bound(\oc c):= \clos(\oc
    c)\backslash\{\oc c\}.  \end{equation}

\begin{propn}[Point closure matches $\mathbb{R}^d$ closure]
  \label{prop:closure}
  The domains of points in $\oc c$'s closure set $\clos(\oc c)$
  partition the closure of its domain, $\overline{\dom(\oc c)} = \bigsqcup
  \dom(\clos(\oc c))$.
\end{propn}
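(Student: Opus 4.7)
The plan is to reduce everything to the standard partition of a closed $n$-hypercube into its open sub-hypercubes of all dimensions $0\le m\le n$, and then to identify $\clos(\oc c)$ with the set of points enumerating precisely those sub-hypercubes.

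First I would dispatch the case $\dim(\oc c)=d$. Here $\oc c$ is (identified with) an octant $\oc o$, and $\clos(\oc c)=\clos(\oc o)=\{(\oc o,\ix b):\ix b\in\st B\cup\{\ix v_0\}\}$ by \eqnref{octclos}. By the indexing convention introduced for $\st B$, the $3^d$ indices in $\st B\cup\{\ix v_0\}$ label precisely the $2^{d-m}\binom{d}{m}$ open $m$-faces of the cube for $0\le m<d$ together with the open interior, and $\dom_{\ix b}(\oc o)$ picks out the corresponding open sub-hypercube; these pieces form the standard partition of $\overline{\dom(\oc o)}$.

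For $\oc c$ with $n=\dim(\oc c)<d$, I would first compute the intersection of $\overline{\dom(\oc o)}$ taken over all octants $\oc o$ with $\oc c\in\clos(\oc o)$. When $n>0$, the remark after \eqnref{pointlevel} ensures that all such $\oc o$ have level $\level(\oc c)$, and the coordinate-multiplicity requirement forces them to be exactly the $2^{d-n}$ same-level octants meeting along $\overline{\dom(\oc c)}$ as a shared boundary face, so their closed domains intersect in $\overline{\dom(\oc c)}$. When $n=0$, the singleton $\dom(\oc c)=\{X\}$ is a corner of octants of every level $\ell\ge\level(\oc c)$, and the intersection of their closures collapses to $\{X\}=\overline{\dom(\oc c)}$. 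In either case the intersection equals $\overline{\dom(\oc c)}$. I would then translate this into a statement about $\clos(\oc c)$: because $\oc c'\in\clos(\oc o)$ is equivalent to $\dom(\oc c')$ being one of the tiles of the standard partition of $\overline{\dom(\oc o)}$, and because any two such octants induce identical tilings on their shared face $\overline{\dom(\oc c)}$, the set $\clos(\oc c)$ consists of exactly those points whose domains are tiles of the standard partition of $\overline{\dom(\oc c)}$. Applying the $d$-dimensional base case to $\overline{\dom(\oc c)}$ then yields $\overline{\dom(\oc c)}=\bigsqcup\dom(\clos(\oc c))$, with the disjoint-union property guaranteed by the convention that distinct points have distinct domains.

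The main obstacle is the $n=0$ case, where \eqnref{pointlevel} identifies infinitely many representatives $(\oc o,\ix c_i)$ at arbitrarily high levels as the same $0$-point; one has to argue carefully that shrinking octants leave no spurious sub-hypercube in every level's boundary tiling, so that $\clos(\oc c)=\{\oc c\}$ and the partition is trivial. Away from this subtlety, the argument is essentially a geometric reindexing of the standard tiling of a hypercube.
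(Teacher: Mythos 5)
Your argument is correct. The paper states this proposition without proof, treating it as an elementary consequence of the definitions, so there is no paper argument to compare against; on its own terms, your reduction to the standard decomposition of a closed $n$-hypercube into open sub-hypercubes of all dimensions $0\le m\le n$, split by $\dim(\oc c)\in\{d\}$, $\{1,\dots,d-1\}$, and $\{0\}$, is sound. The one place that deserves the extra care you flag is the $0$-point case: the intersection $\bigcap\{\clos(\oc o):\oc c\in\clos(\oc o)\}$ runs over octants of all levels $\ge\level(\oc c)$, and any two of these with different levels (or two same-level octants on opposite sides of the corner) have boundary tilings sharing only the corner tile itself, so $\clos(\oc c)=\{\oc c\}$ and the partition of the singleton $\overline{\dom(\oc c)}$ is trivial; you indicate this but could state the two-octant comparison explicitly. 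A small wording nit: pairwise disjointness of $\dom(\clos(\oc c))$ follows from identifying $\clos(\oc c)$ with the tiles of a standard partition, not merely from the convention that distinct points have distinct domains.
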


  The \emph{support set} of a point $\oc c$ is the set of octants with the
  same refinement level as $\oc c$ whose closures include $\oc c$,
  \begin{equation}\eqnlab{pointsupp}
    \supp(\oc c):= \{\oc o:\oc c\in \clos(\oc o),\oc o.\ix l=\level(\oc c)\}.
  \end{equation}

\begin{propn}[$\mathbb{R}^d$ intersection implies support intersection]
  \label{prop:suppisect}
  If $\oc o$ is an octant, $\oc c$ is a point, and $\overline{\dom(\oc o)}
  \cap \dom(\oc c) \neq \emptyset$, then there exists $\oc s\in\supp(\oc c)$
  such that $\oc s\in\desc(\oc o)$ or $\oc o\in\desc(\oc s)$.
\end{propn}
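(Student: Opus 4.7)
The plan is to use a witness point $X\in\overline{\dom(\oc o)}\cap\dom(\oc c)$ to locate, in the geometric neighborhood of $X$, a support octant $\oc s\in\supp(\oc c)$ whose \emph{open} domain meets $\dom(\oc o)$. Once this is done, the power-of-two alignment of octant coordinates forces one of $\dom(\oc o)$, $\dom(\oc s)$ to contain the other, which translates directly into the descendant relation that must be exhibited.

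First I would fix any $X\in\overline{\dom(\oc o)}\cap\dom(\oc c)$ and set $n=\dim(\oc c)$. The central geometric claim to establish is a neighborhood property for the support set: there exists $\varepsilon>0$ such that
\[
  B(X,\varepsilon) \;\subseteq\; \bigcup_{\oc s\in\supp(\oc c)} \overline{\dom(\oc s)}.
\]
The intuition is that the $2^{d-n}$ support octants are precisely the level-$\level(\oc c)$ cubes occupying every perpendicular orthant of the affine hull of $\dom(\oc c)$; since $X$ lies in $\dom(\oc c)$, which is open within that affine hull, a sufficiently small ball around $X$ decomposes into the same $2^{d-n}$ orthants, each covered by the closure of one support octant. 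I expect this to be the main obstacle of the proof: deriving the claim rigorously from the coordinate definition~\eqnref{octdomain} and the boundary parameterization~\figref{boundarysets} is essentially a local strengthening of Proposition~\ref{prop:closure}, passing from ``the closure of $\dom(\oc c)$'' to ``a full $d$-dimensional ball around each interior point of $\dom(\oc c)$,'' and it tacitly uses the single-octree setting so that every perpendicular orthant is in fact populated by an actual support octant.

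Granted the neighborhood property, the rest follows quickly. Because $X\in\overline{\dom(\oc o)}$ and $\dom(\oc o)$ is open, every neighborhood of $X$ meets $\dom(\oc o)$, so $B(X,\varepsilon)\cap\dom(\oc o)\neq\emptyset$ and hence $\dom(\oc o)\cap\overline{\dom(\oc s)}\neq\emptyset$ for at least one $\oc s\in\supp(\oc c)$. Since $\dom(\oc o)$ is open and $\dom(\oc s)$ is open with closure $\overline{\dom(\oc s)}$, this in turn forces $\dom(\oc o)\cap\dom(\oc s)\neq\emptyset$. Finally, two octant open domains with nonempty intersection must be nested: the coordinates at levels $\oc o.\ix l$ and $\level(\oc c)$ are multiples of $2^{\lmax-\oc o.\ix l}$ and $2^{\lmax-\level(\oc c)}$ respectively, so the finer octant's grid is a refinement of the coarser one's and the finer open cube is either contained in the coarser one or disjoint from it. Whichever of $\oc o$, $\oc s$ is finer is therefore a descendant of the other, giving $\oc s\in\desc(\oc o)$ or $\oc o\in\desc(\oc s)$, as required.
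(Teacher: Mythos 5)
The paper states Proposition~\ref{prop:suppisect} (like the other propositions in \secref{relations}) without proof, treating it as a geometrically intuitive consequence of the definitions, so there is no paper argument to compare against; what matters is whether your argument stands on its own. It does: the strategy of covering a ball around a witness point $X$ by closures of support octants, pulling $\dom(\oc o)$ into one of those closures, and then invoking the dyadic alignment of octant coordinates to force nesting is exactly the right elementary route, and every step you sketch can be made rigorous.

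Two spots deserve tightening. First, your caveat that the neighborhood property ``tacitly uses the single-octree setting so that every perpendicular orthant is in fact populated by an actual support octant'' is not quite accurate even in a single octree: on the boundary of $\dom(\Root(t))$ some perpendicular orthants have no valid octant at all (a face of the root has only one support octant, a corner of the root only one, etc.), and \remref{modify} makes the same observation for the inter-tree boundaries of a forest. The correct fix is to weaken the covering claim to what you actually use: $B(X,\varepsilon)\cap\dom(\Root(t))\subseteq\bigcup_{\oc s\in\supp(\oc c)}\overline{\dom(\oc s)}$. Since $\dom(\oc o)\subseteq\dom(\Root(t))$, the intersection $B(X,\varepsilon)\cap\dom(\oc o)$ then still lands inside some $\overline{\dom(\oc s)}$, and the rest of the argument is unchanged. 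Second, in the final nesting step you should note the degenerate case $\oc o.\ix l=\level(\oc c)$: two same-level octants with overlapping open domains are the same octant, and $\oc o\in\desc(\oc o)$ holds because $\desc(\cdot)$ in \eqref{eqn:pointdesc} uses $\subseteq$, not strict containment. This is implicitly covered by your phrasing but worth stating so the conclusion holds in all three sub-cases of $\oc o.\ix l$ versus $\level(\oc c)$.
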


The following proposition shows the duality between $\clos(\cdot)$ and
$\supp(\cdot)$.

\begin{propn}\label{prop:duality1}
  If $\dim(\oc c)>0$, then $(\oc o\in \supp(\oc c)) \Leftrightarrow
  (\oc c \in \clos(\oc o))$.
\end{propn}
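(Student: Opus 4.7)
The plan is to prove the biconditional by handling each direction separately. The forward implication is immediate from the definition of $\supp(\oc c)$ in \eqref{eqn:pointsupp}, which explicitly requires $\oc c\in \clos(\oc o)$ as one of its two defining conditions; so no work is needed there.

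All of the content lies in the reverse direction. I would start from the assumption $\oc c \in \clos(\oc o)$. By the definition of $\clos(\oc o)$ in \eqref{eqn:octclos}, this is equivalent to the existence of some $\ix b\in\st B\cup\{\ix v_0\}$ with $\oc c = (\oc o,\ix b)$. To conclude $\oc o\in\supp(\oc c)$, the only remaining obligation is to verify the level condition $\oc o.\ix l = \level(\oc c)$.

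This is precisely where the hypothesis $\dim(\oc c)>0$ enters. The remark immediately following \eqref{eqn:pointlevel} states the biconditional $(\oc c = (\oc o,\ix b)) \Leftrightarrow (\level(\oc c) = \oc o.\ix l)$ under exactly this hypothesis: for an $n$-point with $n>0$, every octant that contributes to the minimum in \eqref{eqn:pointlevel} actually attains it, because octants of differing refinement levels cannot share a positive-dimensional boundary hypercube. Applying this remark to the tuple representation $\oc c=(\oc o,\ix b)$ from the previous step yields $\level(\oc c) = \oc o.\ix l$ and finishes the argument.

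The only real subtlety is to track where the dimension hypothesis is used and to confirm that it is unavoidable: for a $0$-point $\oc c$, a strictly coarser octant $\oc o$ can still have $\oc c$ as a corner (so $\oc c\in\clos(\oc o)$) without belonging to $\supp(\oc c)$, which is why the proposition is stated only for $\dim(\oc c)>0$. I do not expect any other obstacle: once the remark after \eqref{eqn:pointlevel} is in hand, the proof reduces to unpacking the definitions of $\clos$ and $\supp$.
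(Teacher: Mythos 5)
Your proof is correct and is essentially the only natural argument: the forward direction is definitional, and the reverse direction reduces to checking the level condition $\oc o.\ix l = \level(\oc c)$, which for $\dim(\oc c)>0$ follows from the remark after \eqref{eqn:pointlevel}. The paper states this proposition without proof, so there is nothing to compare against, but your unpacking of $\supp$ and $\clos$ is exactly what the definitions warrant.

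One small correction in your closing remark: the reason the duality fails for $0$-points is that a strictly \emph{finer} octant (one with $\oc o.\ix l > \level(\oc c)$) can have $\oc c$ as a corner without lying in $\supp(\oc c)$, not a coarser one. By construction $\level(\oc c)$ is already the minimum level of any octant containing $\oc c$ in its closure, so a strictly coarser octant with $\oc c\in\clos(\oc o)$ cannot exist; the slack is entirely on the more-refined side, as the paper's sentence about matching the level of the ``biggest neighboring octant'' indicates. This does not affect the validity of the main argument.
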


For $0$-points, this duality does not hold because a $0$-point can be in the
closure set of an octant with a more refined level.  In fact, every $0$-point
is in the closure of an atom.
%
  The \emph{atomic support set} of a $0$-point $\oc c$ is the set of atoms
  whose closures include $\oc c$,
  \begin{equation}\eqnlab{pointasupp}
    \asupp(\oc c):= \{\oc a:\oc c\in \clos(\oc a),\oc a.\ix l=\lmax\}.
  \end{equation}
%
The support and atomic support sets of a $0$-point contain and
are contained in all neighboring octants, respectively.

\begin{propn}\label{prop:duality2}
  If $\dim(\oc c)=0$, $\oc o$ is a octant, and $\oc c\in \clos(\oc o)$, then
  there are $\oc a\in\asupp(\oc c)$ and $\oc s\in\supp(\oc c)$ such that $\oc
  a\in\desc(\oc o)$ and $\oc o\in \desc(\oc s)$.
\end{propn}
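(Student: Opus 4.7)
My plan is to exploit the integer-coordinate structure of octants and give two explicit constructions, one descending and one ascending from $\oc o$.

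Since $\dim(\oc c) = 0$ and $\oc c \in \clos(\oc o)$, the point $\oc c$ must equal $(\oc o, \ix b)$ for some corner index $\ix b$, so its coordinates satisfy $c_i = \oc o.\ix x_i + \delta_i \cdot 2^{\lmax - \oc o.\ix l}$ for some $\delta \in \{0,1\}^d$. In particular $\level(\oc c) \leq \oc o.\ix l$.

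For the atom $\oc a$, I would descend recursively towards $\oc c$. Set $\oc o_0 := \oc o$, and given $\oc o_k$ having $\oc c$ at its $\delta$-corner, let $\oc o_{k+1}$ be the child of $\oc o_k$ at that same corner position. A direct integer-arithmetic check shows that the $\delta$-corner of this child again equals $\oc c$: with $h' := 2^{\lmax - \oc o_k.\ix l - 1}$, the child's origin is $\oc o_k.\ix x + \delta h'$ and its $\delta$-corner is $\oc o_k.\ix x + 2\delta h' = \oc o_k.\ix x + \delta \cdot 2h' = c$. Iterating $\lmax - \oc o.\ix l$ times produces an atom $\oc a := \oc o_{\lmax - \oc o.\ix l}$ that is a descendant of $\oc o$ by construction and has $\oc c \in \clos(\oc a)$, hence $\oc a \in \asupp(\oc c)$.

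For the support octant $\oc s$, I would take the unique ancestor of $\oc o$ at level $\level(\oc c)$; this is well defined by the observation above. The only nontrivial step is to verify $\oc c \in \clos(\oc s)$. Writing $H := 2^{\lmax - \level(\oc c)}$, the coordinates $\oc s.\ix x_i$ are multiples of $H$ by the coordinate constraint on octants, and the $c_i$ are multiples of $H$ by the definition of $\level(\oc c)$. Since $\oc o \subseteq \oc s$ and $\oc c$ is a corner of $\oc o$, each $c_i$ lies in the closed interval $[\oc s.\ix x_i, \oc s.\ix x_i + H]$; but the only multiples of $H$ in this interval are its two endpoints, so $c_i \in \{\oc s.\ix x_i, \oc s.\ix x_i + H\}$ for every $i$. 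Hence $\oc c$ is one of the $2^d$ corners of $\oc s$, giving $\oc c \in \clos(\oc s)$ and $\oc s \in \supp(\oc c)$; finally $\oc o \in \desc(\oc s)$ holds by the choice of $\oc s$ as an ancestor of $\oc o$.

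The descent is essentially mechanical, so the main obstacle is this alignment argument for the ancestor: one needs to combine the nesting $\oc o \subseteq \oc s$, the divisibility of $c_i$ by $H$, and the coarseness of $\oc s$'s grid to force $\oc c$ to fall on a corner of $\oc s$ rather than somewhere in its interior or on a lower-dimensional face that is not a corner.
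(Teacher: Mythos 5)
The paper states this proposition without proof, so there is no reference argument to compare against; the question is simply whether your construction is sound, and it is.

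Your two constructions are the natural ones. For the atom, the key invariant — that the child of $\oc o_k$ at corner position $\delta$ again has $\oc c$ at its $\delta$-corner — follows from the integer arithmetic you display: the child's origin is $\oc o_k.\ix x + \delta h'$ and its far corner is $\oc o_k.\ix x + 2\delta h'$, which coincides with $\oc o_k$'s $\delta$-corner. Iterating until level $\lmax$ yields an atom in $\asupp(\oc c)$ that is by construction a descendant of $\oc o$; the degenerate case $\oc o.\ix l = \lmax$ just gives $\oc a = \oc o$, which is fine since an octant is its own descendant under $\eqref{eqn:pointdesc}$.

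For the support octant, you correctly identify the crux: the unique ancestor $\oc s$ of $\oc o$ at level $\level(\oc c)$ exists (because $\oc c\in\clos(\oc o)$ forces $\level(\oc c)\le\oc o.\ix l$ by the minimum in $\eqref{eqn:pointlevel}$), and the three ingredients — $\oc s.\ix x_i$ is a multiple of $H$ by the octant coordinate constraint, $c_i$ is a multiple of $H$ because some level-$\level(\oc c)$ octant witnesses $\oc c$ at one of its corners, and $c_i\in[\oc s.\ix x_i,\oc s.\ix x_i + H]$ by nesting — together force $c_i$ onto one of the two endpoints, so $\oc c$ is a corner of $\oc s$. That gives $\oc s\in\supp(\oc c)$ with $\oc o\in\desc(\oc s)$. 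The argument is correct and complete in the single-octree setting of \secref{relations}; as the paper notes in \remref{modify}, the forest version follows by working with mapped domains, where the same reasoning applies tree by tree.
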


\subsection{Forests of octrees}
\seclab{forest}

A forest of quadtrees ($d=2$) or octrees ($d=3$) is a mesh of a
$d$-dimensional domain $\Omega$ with two layers, a macro layer and a micro
layer.
%
  The \emph{macro layer} is a geometrically conformal mesh\footnote{%
    By ``geometrically conformal mesh'' we mean that $\{T^t\}_{0\leq t < K}$
    are the cells of a CW complex (a generalization of simplicial complex to
    other polytopes, where C stands for closure-finite and W for weak
    topology; see
    e.g.\ \cite[Chapter 10]{May99}): informally, each $T^{\ix t}$ is open,
    $\{T^{\ix t}\}_{0\leq t < K}$ tiles $\Omega$, and if the intersection
    $\overline{T^{\ix s}} \cap \overline{T^{\ix t}}$ has dimension $(d-1)$,
    then it is equal to a whole face of $T^{\ix s}$ and a whole face of
    $T^{\ix t}$.
  } %
  of $K$ mapped cells (quadrilaterals ($d=2$) or hexahedra ($d=3$)),
  \begin{equation}\eqnlab{macro}
    \cT:=\{(T^{\ix t},\varphi^{\ix t})\}_{0\leq t < K},
  \end{equation}
  where each $T^{\ix t}$ has an associated map $\varphi^{\ix t}:
  \overline{\dom(\Root(\ix t))} \to \overline{T^{\ix t}}$, which is a
  continuous bijection between the domain of the root octant and $T^{\ix t}$.
%
  We define the \emph{mapped domain of a point}
  $\oc c=(\oc o,\ix b)$ by its image under the map for $\oc o$'s tree index,
  \begin{equation}\eqnlab{mapped}
    \Dom(\oc c):=\varphi^{o.\ix t}(\dom_{\ix b}(\oc o)).
  \end{equation}

\begin{remark}[Modifications to definitions for forests]\upshape
  \remlab{modify}
  In \secref{octant} we defined points in the context of a single octree.  In
  the forest-of-octrees context, we consider two points equal if their mapped
  domains are equal.  If one substitutes mapped domains for unmapped domains
  in the definitions and propositions in \secref{relations}, they hold in the
  forest-of-octrees context.  If a point $\oc c$'s mapped domain is on the
  boundary between macro-layer cells, then $\oc c$'s support set $\supp(\oc
  c)$ no longer has the regular shape shown in \tabref{sets}, but depends on
  the macro layer topology.   A face on the boundary of $\Omega$, for example,
  has only one support octant.  We emphasize that the sets defined in
  \secref{relations} do not depend on the exact nature of the maps
  $\{\varphi^{\ix t}\}_{0\leq t < K}$, but can be constructed, in time
  proportional to their sizes, from the point data of their arguments and the
  mesh topology of $\cT$, i.e., which cells are neighbors, which of their
  faces correspond, and how those faces are oriented relative to each other.
  These issues are covered in \cite[Section 2.2]{BursteddeWilcoxGhattas11}.
\end{remark}

  For each $0 \leq t < K$, the \emph{tree-$t$ leaves} $\treeoctants \subset
  \desc(\Root(t))$ are a set of $N^{(t)}$ octants whose 
  domains tile $\dom(\Root(\ix t))$.  The \emph{micro layer} $\cO$ is the
  union of these sets, and its size is $N$,
  \begin{equation}\eqnlab{micro}
    \cO:= \bigsqcup_{0\leq t < K} \cO^{\ix t},
    \ N := |\cO| = \sum_{0\leq t < K}N^{(t)}.
  \end{equation}
%
The mapped domains of the micro layer octants tile $\Omega$, but this tiling
is not a geometrically conforming mesh: when neighboring octants have
different levels, their faces (and edges if $d=3$) do not conform to each
other.  If neighboring octants differ by at most one level, the forest is said
to satisfy a 2:1 balance condition
\cite{SampathAdavaniSundarEtAl08,IsaacBursteddeGhattas12}.

We call $\treeoctants$ the leaves of octree $\ix t$ because one could build a
tree structure, starting with $\Root(\ix t)$ and using the $\child(\cdot)$
operation, whose leaves would be $\treeoctants$.  The \pforest library does
not store this tree structure in memory.  Storing just $\treeoctants$ is an
approach known as a linear octree representation \cite{SundarSampathBiros08}.

\subsection{Distributed forests of octrees}
\seclab{distributed}

In \pforest, the macro layer $\cT$ is static and replicated on each process,
while the micro layer $\cO$ is dynamic---it may be adaptively refined,
coarsened, and repartitioned frequently over the life of a forest---and
distributed, with each process owning a distinct subset of leaves.  We
describe the distribution method here, and illustrate it in \figref{forest}.

\begin{figure}
  \centering
  \usetikzlibrary{shapes.geometric}
\usetikzlibrary{positioning}
\tikzstyle{intnode}=[draw=black,fill=white]
\tikzstyle{intnode0}=[circle,draw=black,fill=white,inner sep=0.09cm]
\tikzstyle{intnode1}=[diamond,draw=black,fill=white,inner sep=0.07cm]
\tikzstyle{leaffirst}=[draw=black,fill=red!15]
\tikzstyle{leaffirst0}=[circle,draw=black,fill=red!15,inner sep=0.09cm]
\tikzstyle{atomfirst0}=[circle,solid,draw=black,fill=red,inner sep=0.06cm]
\tikzstyle{leafsecond}=[draw=black,fill=green!15]
\tikzstyle{leafsecond0}=[circle,draw=black,fill=green!15,inner sep=0.09cm]
\tikzstyle{atomsecond0}=[circle,solid,draw=black,fill=green,inner sep=0.06cm]
\tikzstyle{leafsecond1}=[diamond,draw=black,fill=green!15,inner sep=0.07cm]
\tikzstyle{leafthird}=[draw=black,fill=blue!15]
\tikzstyle{leafthird1}=[diamond,draw=black,fill=blue!15,inner sep=0.07cm]
\tikzstyle{atomthird1}=[diamond,solid,draw=black,fill=blue,inner sep=0.05cm]

\tikzstyle{zcurve}=[thick,draw=black]
\tikzstyle{connect}=[densely dashed,draw=black]
\tikzstyle{partition}=[thick,dotted,draw=black]

\tikzstyle{first}=[very thin,draw=gray,fill=red!15]
\tikzstyle{firstatom}=[very thin,draw=gray,fill=red]
\tikzstyle{second}=[very thin,draw=gray,fill=green!15]
\tikzstyle{secondatom}=[very thin,draw=gray,fill=green]
\tikzstyle{third}=[very thin,draw=gray,fill=blue!15]
\tikzstyle{thirdatom}=[very thin,draw=gray,fill=blue]

\tikzstyle{edge from parent}=[draw,gray]
\tikzstyle{level 1}=[level distance=1cm, sibling distance=11mm]
\tikzstyle{level 2}=[level distance=1cm, sibling distance=5mm]
\tikzstyle{level 3}=[level distance=1cm, sibling distance=5mm]
\tikzstyle{atombranch}=[edge from parent/.style={draw,gray,dashed}]

\begin{tikzpicture}[scale=.65625]

  \draw (0,0)
  node (top0) [left] {} node [above] (root0) {$\mathop{\mathrm{root}}(s)$} node[intnode0,below=0.0cm of root0] (k0e0) {} 
  child { node[leaffirst0] (k0e00) {}
    child [atombranch] {
      child {
        child {
          node[atomfirst0] (atom0) {} node[right] {$\oc f_{\pr p}$}
        }
      }
    }
  }
  child {
    node[intnode0] (k0e01) {}
    child { node[leaffirst0] (k0e010) {} }
    child { node[leaffirst0] (k0e011) {} }
    child { node[leaffirst0] (k0e012) {} }
    child { node[leaffirst0] (k0e013) {} }
  }
  child { node[leaffirst0] (k0e02) {} }
  child {
    node[intnode0] (k0e03) {}
    child { node[leaffirst0] (k0e030) {} }
    child { node[leafsecond0] (k0e031) {}
      child [atombranch] {
        child {
          node[atomsecond0] (atom1) {} node[right] {$\oc f_{\pr q}$}
        }
      }
    }
    child { node[leafsecond0] (k0e032) {} }
    child {
      node[intnode0] (k0e033) {}
      child { node[leafsecond0] (k0e0330) {} }
      child { node[leafsecond0] (k0e0331) {} }
      child { node[leafsecond0] (k0e0332) {} }
      child { node[leafsecond0] (k0e0333) {} }
    }
  }
  ;

  \draw(4.5,0)
  node [right=-1.2cm] (top1) {} node [above] (root1) {$\mathop{\mathrm{root}}(t)$} node[intnode1,below=0cm of root1] (k1e0) {}
  child { node[leafsecond1] (k1e00) {} }
  child { node[leafsecond1] (k1e01) {} }
  child {
    node[intnode1] (k1e02) {}
    child { node[leafthird1] (k1e020) {}
      child [atombranch] {
        child {
          node[atomthird1] (atom2) {} node[right] {$\oc f_{\pr r}$}
        }
      }
    }
    child { node[leafthird1] (k1e021) {} }
    child {
      node[intnode1] (k1e031) {}
      child { node[leafthird1] (k1e0220) {} }
      child { node[leafthird1] (k1e0221) {} }
      child { node[leafthird1] (k1e0222) {} }
      child { node[leafthird1] (k1e0223) {} }
    }
    child { node[leafthird1] (k1e023) {} }
  }
  child { node[leafthird1] (k1e03) {} }
  ;

  \draw[zcurve,->]
  (k0e00.center) --
  (k0e010.center) --
  (k0e011.center) --
  (k0e012.center) --
  (k0e013.center) --
  (k0e02.center) --
  (k0e030.center) --
  (k0e031.center) --
  (k0e032.center) --
  (k0e0330.center) --
  (k0e0331.center) --
  (k0e0332.center) --
  (k0e0333.center);

  \draw[connect,<-] (k1e00.center) -- (k0e0333.center);

  \draw[zcurve,->]
  (k1e00.center) --
  (k1e01.center) --
  (k1e020.center) --
  (k1e021.center) --
  (k1e0220.center) --
  (k1e0221.center) --
  (k1e0222.center) --
  (k1e0223.center) --
  (k1e023.center) --
  (k1e03.center);

  \path (k0e02) -- (k0e03) node[midway] (m0) {};
  \path (k0e030) -- (k0e031) node[midway] (m1) {};
  \draw[partition] (k0e0) -- (m0.center) -- (m1.center) -- +(0,-2.);
  \path (k1e01) -- (k1e02) node[midway] (m2) {};
  \node (m3) [node distance=2mm,left of=k1e020] {};
  \draw[partition] (k1e0) -- (m2.center) -- (m3.center) -- +(0,-2.);

  \begin{scope}[yshift=-0.75cm]
    \begin{scope}
      \draw [fill=white,drop shadow] (-2cm,-5cm) rectangle (0.25cm,-4cm);

      \draw (-1.5cm,-4.5cm) node [leaffirst0] (leg0) {} node [right=0.1cm of leg0,leafsecond0] (leg1) {}
      node [right=0.0cm of leg1] (legtext0) {$\strut\mathcal{O}^{s}$};
    \end{scope}

    \begin{scope}[xshift=2.5cm]
      \draw [fill=white,drop shadow] (-2cm,-5cm) rectangle (0.25cm,-4cm);

      \draw (-1.5cm,-4.5cm) node [leafsecond1] (leg2) {} node [right=0.1cm of leg2,leafthird1] (leg3) {}
      node [right=0.0cm of leg3] (legtext1) {$\strut\mathcal{O}^{t}$};
    \end{scope}

    \begin{scope}[xshift=5cm]
      \draw [fill=white,drop shadow] (-2cm,-5cm) rectangle (-0.25cm,-4cm);

      \draw (-1.5cm,-4.5cm) node [leaffirst0] (leg4) {}
      node [right=0.0cm of leg4] (legtext2) {$\strut\mathcal{O}_{\mathfrak{p}}$};
    \end{scope}

    \begin{scope}[xshift=7cm]
      \draw [fill=white,drop shadow] (-2cm,-5cm) rectangle (0.25cm,-4cm);

      \draw (-1.5cm,-4.5cm) node [leafsecond0] (leg5) {} node [right=0.1cm of leg5,leafsecond1] (leg6) {}
      node [right=0.0cm of leg6] (legtext3) {$\strut\mathcal{O}_{\mathfrak{q}}$};
    \end{scope}

    \begin{scope}[xshift=9.5cm]
      \draw [fill=white,drop shadow] (-2cm,-5cm) rectangle (-0.25cm,-4cm);

      \draw (-1.5cm,-4.5cm) node [leafthird1] (leg7) {}
      node [right=0.0cm of leg7] (legtext4) {$\strut\mathcal{O}_{\mathfrak{r}}$};
    \end{scope}

    \begin{scope}[xshift=11.5cm]
      \draw [fill=white,drop shadow] (-2cm,-5cm) rectangle (0.5cm,-4cm);

      \draw (-1.5cm,-4.5cm) node [atomfirst0] (lega0) {}
      node [right=0.1cm of lega0,atomsecond0] (lega1) {}
      node [right=0.1cm of lega1,atomthird1] (lega2) {}
      node [right=0.0cm of lega2] (legatext) {$\strut\arry f$};
    \end{scope}
  \end{scope}

  \begin{scope}[xshift=82mm,yshift=-38mm,scale=0.5]

  \draw[first] (0,4) rectangle +(4,4);
  \draw[first] (4,6) rectangle +(2,2);
  \draw[first] (6,6) rectangle +(2,2);
  \draw[first] (4,4) rectangle +(2,2);
  \draw[first] (6,4) rectangle +(2,2);
  \draw[first] (0,0) rectangle +(4,4);
  \draw[first] (4,2) rectangle +(2,2);
  \draw[firstatom] (0,7.5) rectangle +(0.5,0.5);
  \draw[second] (6,2) rectangle +(2,2);
  \draw[second] (4,0) rectangle +(2,2);
  \draw[second] (6,1) rectangle +(1,1);
  \draw[second] (7,1) rectangle +(1,1);
  \draw[second] (6,0) rectangle +(1,1);
  \draw[second] (7,0) rectangle +(1,1);
  \draw[secondatom] (6,3.5) rectangle +(0.5,0.5);

  \draw[second] (12,0) rectangle +(4,4);
  \draw[second] (12,4) rectangle +(4,4);
  \draw[third] (10,0) rectangle +(2,2);
  \draw[third] (10,2) rectangle +(2,2);
  \draw[third] (9,0) rectangle +(1,1);
  \draw[third] (9,1) rectangle +(1,1);
  \draw[third] (8,0) rectangle +(1,1);
  \draw[third] (8,1) rectangle +(1,1);
  \draw[third] (8,2) rectangle +(2,2);
  \draw[third] (8,4) rectangle +(4,4);
  \draw[thirdatom] (11.5,0) rectangle +(0.5,0.5);

  \draw[zcurve,->]
  (2,6) -- (5,7) -- +(2,0) -- +(0,-2) -- +(2,-2) -- (2,2) -- (5,3) --
  +(2,0) -- +(0,-2) -- (6.5,1.5) -- +(1,0) -- +(0,-1) -- +(1,-1);
  \draw[connect,<-]
  (14,2) -- (7.5,0.5);
  \draw[zcurve,->]
  (14,2) -- (14,6) -- (11,1) -- +(0,2) -- (9.5,0.5) -- +(0,1) -- +(-1,0)
  -- +(-1,1) -- (9,3) -- (10,6);

  \draw[partition] (4,0) -- (4,2) -- (6,2) -- (6,4) -- (8,4);
  \draw[partition] (8,0) -- (8,8);
  \draw[partition] (12,0) -- (12,8);

  \begin{scope}[scale=4]

  \draw (0,0) rectangle (4,2); 

  \draw (2,2) node [above] {$\Omega$};

  \draw (.5,.95) node [rectangle,draw,fill=white,opacity=.5] {$T^{s}$};
  \draw (3,1.4) node [rectangle,draw,fill=white,opacity=.5] {$T^{t}$};

  \draw[->,thick] (0,2) -- +(.6,0) node[above] {$\varphi^{s}(\vec\imath)$};
  \draw[->,thick] (0,2) -- +(0,-.6) node[left] {$\varphi^{s}(\vec\jmath)$};
  \draw[->,thick] (4,0) -- +(0,.6) node[right] {$\varphi^{t}(\vec\imath)$};
  \draw[->,thick] (4,0) -- +(-.6,0) node[below] {$\varphi^{t}(\vec\jmath)$};

  \end{scope}

  \end{scope}

\end{tikzpicture}

  \caption{%
    (Adapted with permission from \cite[Fig. 2.1]{BursteddeWilcoxGhattas11}.)
    A $d=2$ example of the relationship between the implicit tree structure
    (left) and the domain tiling (right) of a forest of octrees with two trees
    $\ix s$ and $\ix t$.  The bijections $\varphi^{\ix s}$ and $\varphi^{\ix
      t}$ map the implicit coordinate systems of the unmapped octant domains
    onto the cells $T^{\ix s}$ and $T^{\ix t}$.  The left-to-right
    traversal of the leaves (black ``zig-zag'' line) demonstrates the total
    order (\algref{comparison}).  In this example the forest is partitioned
    among three processes $\pr p$, $\pr q$ and $\pr r$ by sectioning the
    leaves into $\cO_{\pr p}$, $\cO_{\pr q}$, and $\cO_{\pr r}$.  Color
    conveys this partition, while the node shapes convey the division of the
    leaves into the trees $\cO^{\df{s}}$ and $\cO^{\ix t}$.  The small,
    brightly colored nodes represent the first atoms located in
    each process's subdomains \eqref{eqn:firstatoms}.%
  }%
  \figlab{forest}
\end{figure}

The partitioning of leaves between processes and the layout of leaves in
memory is determined by the total order induced by the comparison operation in
\algref{comparison}.  The comparison of coordinates in line
\ref{coordcomparison} is defined by a space-filling curve; the \pforest
library uses the so-called $z$-ordering which corresponds to the Morton curve
\cite{Morton66}.

\begin{algorithm}
  \caption{ $\oc o \leq \oc r$ (\Octant $\oc o$, \Octant $\oc r$)}
  \alglab{comparison}

  \lIf%
    (\Comment{If $s<t$, all of octree $s$'s leaves come before $t$'s.})
    {$\oc o.\ix t \neq \oc r.\ix t$}%
    {\Return{$(\oc o.\ix t \leq \oc r.\ix t)$}}%

  \lIf%
    (\Comment{Coordinates are ordered by SFC index.})
    {$\oc o.\ix x \neq \oc r.\ix x$}%
    {\Return{$(\oc o.\ix x \leq \oc r.\ix x)$}}%
    \label{coordcomparison}

  \Return{$(\oc o.l \leq \oc r.l)$}%
  \Comment{Ancestors precede descendants (preordering).}
\end{algorithm}%

\begin{remark}\upshape
  When we index a sorted array of octants ($\arry A[i]$), children
  ($\child(\oc o)[i]$), or a support set ($\supp(\oc c)[i]$), we mean the
  $i$th octant with respect to the total order.  
\end{remark}

Using this total order, each process is assigned a contiguous (with respect to
the total order) section of leaves in MPI-rank order.
%
  For each $0 \leq t < K$ and $0 \leq \pr p < P$, the subset of $\treeoctants$
  assigned to process $\pr p$ is in an array $\localtreeoctants$, which has
  size $N^{(t)}_{\pr p}$; these arrays form a partition,
  $\treeoctants=\bigsqcup_{0 \leq \pr p < P}\localtreeoctants$.  The set of all
  leaves assigned to $\pr p$ is
  \begin{equation}
    \eqnlab{localoctants}
    \localoctants := \bigsqcup_{0 \leq t < K} \localtreeoctants;
  \end{equation}
  its size is $N_{\pr p} := |\localoctants| = \sum_{0 \leq t < K} N^{(t)}_{\pr
  p}>0$.%
  \footnote{%
    In \pforest partitions may be empty, but for simplicity we assume
    here that they are all non-empty.%
  } %
%
  %
  The \emph{subdomain} of $\Omega$ tiled by $\Dom(\localtreeoctants)$ is the
  interior of its closure,
  \begin{equation}\eqnlab{subdomains} \Omega_{\pr p}^{\ix t} :=
    \left(\overline{\bigcup\Dom(\localtreeoctants)}\right)^{\circ};
  \end{equation}
  the subdomains $\Omega^{\ix t}$ and $\Omega_{\pr p}$ are analogously
  defined.
%


Because the leaves are partitioned, process $\pr p$ cannot determine (without
communication) if an octant $\oc o$, whose mapped domain $\Dom(\oc o)$ is
outside $\Omega_{\pr p}$, is a leaf.  We do, however, want $\pr p$ to be able
to locate the subdomains that overlap $\Dom(\oc o)$.  To allow this, each
process in \pforest has some information about other processes' subdomains,
which we describe here.  We start from the fact that an atom's mapped domain
is \emph{located} in the subdomain of only one process's subdomain,
%
  %
  \begin{equation}\eqnlab{locate}
    (\locate(\oc a) := \pr q) \Leftrightarrow (\Dom(\oc a) \subseteq \Omega_{\pr
      q}).
  \end{equation}
%
Note that $\pr q=\locate(\oc a)$ does not imply $\oc a\in\cO_{\pr q}$: $\oc a$
could be a descendant of a leaf in $\cO_{\pr q}$.  To test whether $\pr
q=\locate(\oc a)$ for an arbitrary atom $\oc a$, it is only necessary to
precompute a process's \emph{range}: a tuple of its \emph{first atom} $\oc
f_{\pr q}$ and its \emph{last atom} $\oc l_{\pr q}$ with respect to the total
order of octants,
  \begin{equation}\eqnlab{procrange}
    \begin{aligned}
      \oc f_{\pr q} &:= \min \{\oc a:\oc a.\ix l=\lmax,\ \Dom(\oc a)\subseteq
      \Omega_{\pr q}\},
      \\
      \oc l_{\pr q} &:= \max \{\oc a:\oc a.\ix l=\lmax,\ \Dom(\oc a)\subseteq
      \Omega_{\pr q}\},
      \\
      \range(\pr q) &:= (\oc f_{\pr q},\ \oc l_{\pr q}).
    \end{aligned}
  \end{equation}

\begin{propn}
  $(\pr q = \locate(\oc a)) \Leftrightarrow (\oc f_{\pr q} \leq \oc a \leq
  \oc l_{\pr q}).$
\end{propn}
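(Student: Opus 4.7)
The plan is to prove the forward direction directly from the definitions and then prove the backward direction by establishing that the atoms located in $\Omega_{\pr q}$ form a contiguous block under the total order on octants.

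For the forward direction ($\Rightarrow$), I would simply note that if $\locate(\oc a) = \pr q$, then $\Dom(\oc a) \subseteq \Omega_{\pr q}$, so $\oc a$ belongs to the set $\{\oc a' : \oc a'.\ix l = \lmax,\ \Dom(\oc a') \subseteq \Omega_{\pr q}\}$ whose minimum and maximum are $\oc f_{\pr q}$ and $\oc l_{\pr q}$ by \eqnref{procrange}. The inequality $\oc f_{\pr q} \leq \oc a \leq \oc l_{\pr q}$ then follows immediately.

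For the backward direction ($\Leftarrow$), the key step is to show that the atom range $[\oc f_{\pr q}, \oc l_{\pr q}]$ contains exactly the atoms located in $\Omega_{\pr q}$. I would proceed in two substeps. First, I would show that for any leaf $\oc o \in \localoctants$, every atomic descendant of $\oc o$ satisfies $\oc f_{\oc o} \leq \oc a \leq \oc l_{\oc o}$ (this is the definition of $\range(\oc o)$ in \eqnref{octrange}) and lies in $\Omega_{\pr q}$ (since $\Dom(\oc a) \subseteq \Dom(\oc o) \subseteq \Omega_{\pr q}$). Second, and this is the essential point, I would show that the atom ranges of the leaves assigned to $\pr q$ cover $[\oc f_{\pr q}, \oc l_{\pr q}]$ without gaps. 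This uses two properties of the total order in \algref{comparison}: (i) because ancestors precede descendants in the preorder, the descendant atoms of a single leaf $\oc o$ form a contiguous block $[\oc f_{\oc o}, \oc l_{\oc o}]$ among all atoms; (ii) because the leaves $\treeoctants$ tile $\dom(\Root(\ix t))$ and the SFC comparison orders children by the Morton index, consecutive leaves in the total order have atom ranges whose upper and lower endpoints are consecutive atoms. Therefore the union of atom ranges over $\localoctants$ is itself contiguous and equals $[\oc f_{\pr q}, \oc l_{\pr q}]$.

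Finally, combining the two substeps: if $\oc f_{\pr q} \leq \oc a \leq \oc l_{\pr q}$, then $\oc a$ lies in the atom range of some leaf in $\localoctants$, so $\Dom(\oc a) \subseteq \Omega_{\pr q}$ and $\locate(\oc a) = \pr q$ by \eqnref{locate} (uniqueness of $\locate(\oc a)$ following from the fact that distinct processes' subdomains intersect only on a measure-zero set). The principal obstacle is the gap-free claim (ii), which relies on the specific interaction between the tiling property of $\treeoctants$ and the SFC-based comparison in \algref{comparison}; this is essentially a standard property of the Morton ordering, but it is the step that requires the most care to state precisely, and I would likely invoke it as a lemma about the SFC rather than re-deriving it from scratch.
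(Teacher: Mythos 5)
The paper states this proposition without proof, so your argument stands on its own rather than being compared against an official one. Your overall decomposition is sound: the forward direction is immediate from \eqnref{procrange}, and the backward direction correctly reduces to showing that the atoms located in $\Omega_{\pr q}$ form a contiguous interval in the total order, which follows from the contiguity of each leaf's atom range together with the contiguity of $\localoctants$ itself in the total leaf order and the fact that the leaves tile the root domains. Your treatment of the uniqueness of $\locate(\oc a)$ (open subdomains overlap only on measure-zero sets, while $\Dom(\oc a)$ has positive measure) is also correct.

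One correction is warranted in the justification of your claim (i). You attribute the contiguity of $\oc o$'s atom descendants ``because ancestors precede descendants in the preorder,'' but that tie-breaking rule in \algref{comparison} is only invoked when two octants have identical coordinates and different levels; it never applies when comparing two atoms, which always share level $\lmax$. The contiguity of $\desc(\oc o)\cap\{\text{atoms}\}$ as a block among all atoms is instead a locality property of the Morton SFC itself: an octant's Morton index is a prefix of its descendant atoms' indices, and atoms sharing a fixed prefix form a contiguous range of indices. This is the same SFC property that drives your claim (ii), and you do correctly identify (ii) as resting on the SFC, so the fix is simply to ground (i) in the same source rather than in the preorder convention. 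You may also want to say a word about the multi-tree case: since \algref{comparison} orders first by tree index, a process owning leaves spanning a contiguous range of trees still yields a contiguous atom range, so the argument extends across tree boundaries without modification.
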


\begin{propn}[A range describes a subdomain]
  $\overline{\Omega_{\pr q}} = \overline{\bigcup\Dom([\oc f_{\pr q}, \oc l_{\pr
      q}])}$, where $[\oc f_{\pr q}, \oc l_{\pr q}]$ is the set of all atoms
  $\oc a$ such that $\oc f_{\pr q} \leq \oc a \leq \oc l_{\pr q}$.
\end{propn}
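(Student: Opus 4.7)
My plan is to prove the two inclusions of closures separately, leveraging the previous proposition ($\pr q=\locate(\oc a) \Leftrightarrow \oc f_{\pr q}\leq\oc a\leq\oc l_{\pr q}$) together with the fact that the atomic descendants of any octant $\oc o$ tile $\dom(\oc o)$. As a preliminary step, I would note that $\bigcup\Dom(\localoctants)$ is an open subset of $\Omega$ (a union of images of open cubes under the bijective maps $\varphi^{\ix t}$), and for any open set $U$ one has $\overline{(\overline{U})^{\circ}}=\overline{U}$. This reduces the claim to
\[
  \overline{\bigcup\Dom(\localoctants)}
  \;=\;
  \overline{\bigcup\Dom([\oc f_{\pr q},\oc l_{\pr q}])}.
\]

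For the $\supseteq$ direction, I take an arbitrary atom $\oc a$ with $\oc f_{\pr q}\leq\oc a\leq\oc l_{\pr q}$. By the previous proposition this is equivalent to $\locate(\oc a)=\pr q$, and then \eqnref{locate} gives $\Dom(\oc a)\subseteq\Omega_{\pr q}\subseteq\overline{\bigcup\Dom(\localoctants)}$. Taking the union over all such $\oc a$ and then the closure yields the inclusion.

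For the reverse inclusion, I fix a leaf $\oc o\in\localoctants$ and let $\mathcal{A}_{\oc o}$ denote its atomic descendants. Each $\oc a\in\mathcal{A}_{\oc o}$ satisfies $\Dom(\oc a)\subseteq\Dom(\oc o)\subseteq\Omega_{\pr q}$, so $\locate(\oc a)=\pr q$ and hence $\oc a\in[\oc f_{\pr q},\oc l_{\pr q}]$. Because $\mathcal{A}_{\oc o}$ tiles $\dom(\oc o)$ in the unmapped octree and $\varphi^{\oc o.\ix t}$ is a continuous bijection on $\overline{\dom(\Root(\oc o.\ix t))}$, the closures $\overline{\Dom(\oc a)}$ cover $\overline{\Dom(\oc o)}$. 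Thus $\overline{\Dom(\oc o)}\subseteq\overline{\bigcup\Dom([\oc f_{\pr q},\oc l_{\pr q}])}$, and unioning over $\oc o\in\localoctants$ completes the proof.

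The only real obstacle is the topological bookkeeping in passing between open and closed sets across the tree maps: one must justify the identity $\overline{(\overline{U})^{\circ}}=\overline{U}$ for the open set $U=\bigcup\Dom(\localoctants)$, and one must lift the unmapped atomic tiling of $\dom(\oc o)$ to a closed covering of $\overline{\Dom(\oc o)}$ via continuity of $\varphi^{\oc o.\ix t}$ on a compact set. Both are routine, but should be stated explicitly so that the proof does not implicitly assume each $\Omega_{\pr q}$ is already ``nice.''
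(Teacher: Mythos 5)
The paper does not spell out a proof of this proposition---it is one of the geometrically intuitive statements stated without argument in \secref{distributed}---so there is no in-text proof to compare against. Your argument is correct. The reduction via the identity $\overline{(\overline U)^{\circ}}=\overline U$ for the open set $U=\bigcup\Dom(\cO_{\pr q})$ is the right move, and the two containments of closures then follow cleanly from the preceding proposition $(\pr q=\locate(\oc a))\Leftrightarrow(\oc f_{\pr q}\le\oc a\le\oc l_{\pr q})$ together with the atomic tiling of each leaf's domain. One small notational slip worth fixing: you repeatedly write $\localoctants$, which the paper reserves for the \emph{local} process $\pr p$, where $\cO_{\pr q}$ is intended, since the proposition concerns $\Omega_{\pr q}$. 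The topological bookkeeping you flag at the end is indeed routine given the paper's hypotheses: each $\varphi^{\ix t}$ is a continuous bijection on the compact set $\overline{\dom(\Root(\ix t))}$ and hence a homeomorphism, so it commutes with closure; and $\Dom(\oc o)$ is open in $\mathbb{R}^d$ because $\dom(\oc o)\subseteq\dom(\Root(\ix t))$ is open and $\varphi^{\ix t}$ restricted there is an injective continuous map into $\mathbb{R}^d$, hence open by invariance of domain. You are right that the paper silently assumes all of this.
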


We also apply the range operator to individual octants:
%
  the \emph{range of an octant} $\oc o$ is a tuple of the first and last
  atoms, $\oc f_{\oc o}$ and $\oc l_{\oc o}$, in its descendants,
  \begin{equation}\eqnlab{octrange}
    \begin{aligned}
      \oc f_{\oc o} &:= \min \{\oc a:\ \oc a.l=\lmax,\ \oc a\in\desc(\oc o)\},\\
      \oc l_{\oc o} &:= \max \{\oc a:\ \oc a.l=\lmax,\ \oc a\in\desc(\oc o)\},\\
      \range(\oc o) &:= (\oc f_{\oc o},\ \oc l_{\oc o}).
    \end{aligned}
  \end{equation}

\begin{propn}
  $(\Dom(\oc o)\subseteq \Omega_{\pr q})
  \Leftrightarrow (\pr q = \locate(\oc f_{\oc o}) = \locate(\oc l_{\oc o}))$.
\end{propn}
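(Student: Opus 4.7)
The plan is to prove the two implications separately, using the preceding propositions on ranges and subdomains together with the contiguity of descendants under the Morton total order.

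First I would dispatch the forward direction. Assume $\Dom(\oc o)\subseteq\Omega_{\pr q}$. By \eqref{eqn:octrange} both $\oc f_{\oc o}$ and $\oc l_{\oc o}$ are atom descendants of $\oc o$, so their unmapped domains sit inside $\dom(\oc o)$; pushing forward by the continuous bijection $\varphi^{\oc o.\ix t}$ (a homeomorphism on root closures, being a continuous bijection from a compact space into a Hausdorff space) gives $\Dom(\oc f_{\oc o}),\Dom(\oc l_{\oc o})\subseteq\Dom(\oc o)\subseteq\Omega_{\pr q}$, whence $\locate(\oc f_{\oc o})=\locate(\oc l_{\oc o})=\pr q$ by \eqref{eqn:locate}.

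For the converse, suppose $\locate(\oc f_{\oc o})=\locate(\oc l_{\oc o})=\pr q$. The preceding proposition $\pr q=\locate(\oc a)\Leftrightarrow\oc f_{\pr q}\leq\oc a\leq\oc l_{\pr q}$ yields $\oc f_{\pr q}\leq\oc f_{\oc o}\leq\oc l_{\oc o}\leq\oc l_{\pr q}$. I would next invoke the key hierarchical property of the order in \algref{comparison}: descendants of any single octant occupy a contiguous block in the Morton order, so the atoms in $[\oc f_{\oc o},\oc l_{\oc o}]$ are exactly the atom descendants of $\oc o$, and their mapped domains tile $\overline{\Dom(\oc o)}$. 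Combining this with the ``range describes a subdomain'' proposition gives
$$\overline{\Dom(\oc o)}=\overline{\textstyle\bigcup\Dom([\oc f_{\oc o},\oc l_{\oc o}])}\subseteq\overline{\textstyle\bigcup\Dom([\oc f_{\pr q},\oc l_{\pr q}])}=\overline{\Omega_{\pr q}}.$$

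The only non-routine step, which I expect to be the main obstacle, is upgrading this closure inclusion to the open-set inclusion $\Dom(\oc o)\subseteq\Omega_{\pr q}$. My plan is to observe that $\Dom(\oc o)$ is open in $\Omega$ (image of an open cube under the above homeomorphism, contained in the open cell $T^{\oc o.\ix t}$), hence $\Dom(\oc o)\subseteq(\overline{\Omega_{\pr q}})^{\circ}$, and then to verify the identity $(\overline{\Omega_{\pr q}})^{\circ}=\Omega_{\pr q}$. Writing $U:=\bigcup\Dom(\mathcal{O}_{\pr q})$ so that $\Omega_{\pr q}=(\overline U)^{\circ}$ is an open subset of $\overline U$, one has $\Omega_{\pr q}\subseteq(\overline{\Omega_{\pr q}})^{\circ}\subseteq(\overline U)^{\circ}=\Omega_{\pr q}$, which is just the familiar idempotency of interior-of-closure on regular open sets. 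This closes the argument.
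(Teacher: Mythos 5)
The paper states this proposition without proof (like the other propositions in \secref{distributed}, it is left as an immediate consequence of the definitions), so there is no paper proof to compare against; your argument therefore stands on its own merits, and it is correct.

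The forward direction is elementary and does not even need the homeomorphism observation---$\dom(\oc f_{\oc o})\subseteq\dom(\oc o)$ gives $\Dom(\oc f_{\oc o})\subseteq\Dom(\oc o)$ simply because $\varphi^{\oc o.\ix t}$ is a function, and then \eqref{eqn:locate} applies. The converse is the substantive part, and you have correctly identified and supplied the two ingredients that make it work: (i) the Morton contiguity of $\desc(\oc o)$, which together with the ``range describes a subdomain'' proposition gives the closure inclusion $\overline{\Dom(\oc o)}\subseteq\overline{\Omega_{\pr q}}$, and (ii) the regular-open idempotency $(\overline{\Omega_{\pr q}})^{\circ}=\Omega_{\pr q}$, which you derive cleanly from $\Omega_{\pr q}=(\overline U)^{\circ}$ by the sandwich $\Omega_{\pr q}\subseteq(\overline{\Omega_{\pr q}})^{\circ}\subseteq(\overline U)^{\circ}$. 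The only implicit assumption is that $\Dom(\oc o)$ is open in $\Omega$, which requires $\varphi^{\oc o.\ix t}(\dom(\Root(\oc o.\ix t)))=T^{\oc o.\ix t}$; this is true under the paper's standing hypotheses on the macro mesh but is not spelled out there, so flagging it would be reasonable. An equally valid alternative to the regular-open step is to observe that the subdomains $\{\Omega_{\pr r}\}_{\pr r}$ are pairwise disjoint open sets and that the atom domains are dense in $\Dom(\oc o)$, so $\Dom(\oc o)$ cannot meet any $\Omega_{\pr r}$ with $\pr r\neq\pr q$; your route is arguably cleaner because it reuses the ``range describes a subdomain'' proposition directly.
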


To locate atoms, it is not necessary to store both $\oc f_{\pr q}$ and $\oc
l_{\pr q}$: $\oc l_{\pr q}$ can be computed from $\oc f_{\pr q +1}$. In
\pforest, we store a sorted array called the \emph{first-atoms array} $\arry
f$, where
\begin{equation}\eqnlab{firstatoms}
  \arry f[\pr p] := \oc f_{\pr p},\ 0 \leq \pr p < P,
\end{equation}
and $\arry f[P]$ is a phony ``terminal'' octant whose tree index is $K$.  This
array is shown in \figref{forest}.
The first atom $\oc f_{\pr q}$ is independent of the leaves in $\cO_{\pr q}$,
so $\arry f$ is up-to-date even if other processes have refined or coarsened
their leaves.  Using $\arry f$, a process can compute $\locate(\oc a)$ in
$O(\log P)$ time and test $(\pr q = \locate(\oc a)?)$ in $O(1)$ time. 

For the purposes of this paper, we have described all components of a
\emph{distributed forest of octrees}, which is, for process $\pr p$, the
combination of macro layer \eqref{eqn:macro}, local leaves
\eqref{eqn:localoctants},
and the first-atoms array \eqref{eqn:firstatoms},
\begin{equation}\eqnlab{distributed}
  \oc F_{\pr p} := (\st T,\localoctants,\arry f).
\end{equation}

\begin{remark}\upshape
  $\oc F_{\pr p}$ is an assumed argument of the
  algorithms we present.
\end{remark}


\section{Parallel multiple-item search via array splitting}
\seclab{search}

We can optimize the search for a leaf that matches a given condition if we
begin at the root of an octree and recursively descend to all children that
could possibly be a match.  This is a lazy exclusion principle which is
motivated by a practical consideration: Often an over-optimistic approximate
check can be significantly faster than an exact check, which applies to
bounding-box checks in computational geometry or to checking the surrounding
sphere of a nonlinearly warped octant volume in space.

\subsection{Searching in a single octree and in a forest}

We assume that the user has a set of arbitrary matching queries indexed
by $\st Q$: For each $\ix q\in\st Q$, $\mathrm{match}_{\ix q}()$
returns true or false for any given octant $\oc o\in \cO$.
We also pass a boolean parameter to $\mathrm{match}_{\ix q}()$ that
specifies whether $\oc o$ is a leaf or not.  This may be used to execute
over-optimistic, cheap matches for non-leaves and strict matches for leaves.
This approach is more general than searching for a
single leaf that matches each query: Our framework encompasses, for example,
the search for all the leaves that intersect a set of polytopes (indexed by
$\st Q$) embedded in $\Omega$.  More generally, it is entirely legal that one
leaf matches multiple queries, or one query matches multiple leaves.

In \fxn{Search} (\algref{search}), we use recursion and lazy exclusion to
track multiple simultaneous queries during one traversal.  At each recursion
into children we only retain the queries that have returned a possible match
on the previous level.  We implement this by passing as a callback a
user-defined \emph{lazy matching function} \fxn{Match}, which is a boolean
operator that takes as arguments an octant $\oc o$, a boolean
$\mathrm{isLeaf}$ that indicates if $\oc o\in \cO$, and a query index $\ix
q\in\st Q$ and satisfies the following properties:

\begin{itemize}
  \item
    $\fxn{Match} (\oc o,\mathrm{isLeaf}, \ix q)$ returns true if there is a
    leaf $\oc r\in\cO$ that is a descendant of $\oc o$ such that
    $\mathrm{match}_{\ix q}(\oc r) = \mathrm{true}$, and is allowed to return
    a false positive (i.e., true even if $\mathrm{match}_{q}(\oc r)$ is
    false for all descendant leaves of $\oc o$);
  \item
    if $\mathrm{isLeaf}=\mathrm{true}$, then the return value of $\fxn{Match}$
    is irrelevant.  The functionality of the algorithm resides in the action
    of the user-defined callback, which is expected to execute appropriate code
    for strictly matched leaves.
\end{itemize}


\begin{algorithm}
  \caption{\newline\fxn{Search} (\Octant\ \Array\ $\arry A$, \Octant\ $\oc a$,
                         index set\ $\st Q$, \Callback \fxn{Match})}
  \alglab{search}
  \Input{%
    $\arry A$ is a sorted subset of leaves, $\arry A \subseteq \cO$;\\
    $\oc a$ is an ancestor of $\arry A[j]$ for each $j$, $\arry A\subseteq
    \desc(\oc a)$.\\
  }
  \Result{%
    $\forall \ix q\in\st Q$,
    $\mathrm{match}_{\ix q}(\oc o)$ is called for all leaves $o$\\
    whose ancestors $\oc a$ have also returned true
    from $\mathrm{match}_{\ix q}(\oc a)$.
  }

  \algorule

  \lIf{$\arry A = \emptyset$}{\Return}
  \Boolean\ isLeaf $\leftarrow (\arry A = \{ \oc a \})$

    index set $\st Q_{\mathrm{match}} \leftarrow \emptyset$
      \Comment{queries that pass the lazy criteria at $\oc a$}

      \ForAll{$q\in\st Q$}{%
        \lIf(\Comment{\fxn{Match}
        runs
        user action
        for leaves})
        {\fxn{Match} ($\oc a$, isLeaf, $q$) }{%
          $\st Q_{\mathrm{match}} \leftarrow \st Q_{\mathrm{match}} \cup \{ q
            \}$
          }
        }

        \If{$\st Q_{\mathrm{match}} \neq \emptyset$ \algand \algnot isLeaf}{%
          $\arry H \leftarrow$ \fxn{Split\_array} ($\arry A$, $\oc a$)
            \Comment{divide $\arry A$ between the children of $\oc a$: see
              \secref{split}}

            \lForAll{$0 \leq i < 2^d$}{%
              \fxn{Search} ($\arry H[i]$, $\child(\oc a)[i]$, $\st
              Q_{\rmatch}$, $\fxn{Match}$)
            }
          }
\end{algorithm}%


To extend the action of \fxn{Search} to the whole forest, it can be called
once for each tree index $0\leq t \leq K$ with $\localtreeoctants$ and
$\Root(t)$ as arguments. The resulting algorithm
is communication-free and every leaf is queried on only one process, although
the ancestors of leaves may appear as arguments to \fxn{Match} for multiple
processes.

%

\subsection{Array splitting}
\seclab{split}

\fxn{Search} requires an algorithm \fxn{Split\_array} that we have not yet
specified.  \fxn{Split\_array} takes a sorted array of leaves $\arry A$ and an
octant $\oc a$ such that each leaf $\arry A[j]$ is a descendant of $\oc a$ and
partitions $\arry A$ into sorted arrays $\arry H[0]$, $\arry H[1]$, \dots,
$\arry H[2^d-1]$ such that $\arry H[i]$ contains the descendants of
$\child(\oc a)[i]$.

%
Because $\arry A$ is sorted, the subarrays can be indicated by a
non-decreasing sequence of indices $0=\arry k[0] \leq \arry k[1] \leq ... \leq
\arry k[2^d] = |\arry A|$, such that $\arry H[i] = \arry
A[\arry k[i], \dots, \arry k[i + 1] -1]$.  If $\child(\oc a)[i]$ has no
descendants in $\arry A$, this is indicated by $\arry k[i] = \arry k[i + 1]$.

Let us assume that the children of $\oc a$ have level $l$.  If we know that an
octant $\oc o$ is a descendant of $\child(\oc a)[i]$ for some $i$, then we can
compute $i$ from $\oc o.\ix x$ using \algref{ancestorid}, which works because
we use the Morton order as our space-filling curve.  We call this algorithm
\fxn{Ancestor\_id}, because it is a simple generalization of the algorithm
\fxn{Child\_id} \cite[Algorithm 1]{BursteddeWilcoxGhattas11}.

\begin{algorithm}
  \caption{\fxn{Ancestor\_id} (\Octant\ $\oc o$, \Int\ $l$)}
  \alglab{ancestorid}%
  \Input{%
    $0< l \le \oc o.\ix l$
  }
  \Result{%
    $i$ such that if $\oc a.\ix l=l-1$ and $\oc o\in\desc(\oc a)$,
    then $\oc o\in\desc(\child(\oc a)[i])$
  }

  \algorule

  $h \leftarrow 2^{\lmax-l}$
  \Comment{%
    the $(\lmax-l)$th bits of the coordinates $\oc o.x$ describe the
    ancestor with level $l$%
  }

  $i \leftarrow 0$

  \ForAll{$0 \leq j < d$}{%
    $i \leftarrow i ~\bitwor~ (\ternary{\oc o.\ix x_j ~\bitwand~ h}{2^j}{0})$
    \Comment{%
      ``$\bitwor$'' and ``$\bitwand$'' are bitwise OR and AND%
    }
  }

  \Return $i$
\end{algorithm}

If we applied \fxn{Ancestor\_id} to each octant in $\arry A$, we would get a
monotonic sequence of integers, so if we search $\arry A$ with the key $i$ and
use \fxn{Ancestor\_id} to test equality, the lowest matching index will give
the first descendant of $\child(\oc a)[i]$ in $\arry A$.  The split operation,
however, is used repeatedly, both by \fxn{Search} and by the algorithm
\fxn{Iterate} we will present in \secref{iterate}.  To make this procedure as
efficient as possible, we combine these searches into one algorithm
\fxn{Split\_array} (\algref{split}), which is essentially an efficient binary
search for a sorted list of keys.

\begin{algorithm}
  \caption{\fxn{Split\_array} (\Octant\ \Array\ $\arry A$, \Octant\ $\oc a$)}
  \alglab{split}%

  \Input{%
    $\arry A$ is sorted; 
    $\oc a$ is a strict ancestor of $\arry A[j]$ for each $j$,
    $\arry A \subseteq \desc(\oc a)\backslash\{\oc a\}$.
  }
  \Result{%
    $\forall 0 \leq i < 2^d$, $\arry H[i]$ is a sorted array containing
    $\desc(\child(\oc a)[i])\cap \arry A$.
  }

  \algorule

  $\arry k[0] \leftarrow 0$
  \Comment{%
    \textbf{invariant 1 $\forall i$:} if $j \geq \arry k[i]$, \dots
    \hspace{1.0cm}%
  }

  \lForAll%
  {$1 \leq i \leq 2^d$}%
  {%
    $\arry k[i] \leftarrow |\arry A|$
    \Comment{%
      \dots then \fxn{Ancestor\_id} $(\arry A[j], \oc a.l + 1) \geq i$%
    }%
  }

  \For%
  {$i=1$ \algto $2^d-1$}{%
    $m \leftarrow \arry k[i - 1]$
    \Comment{%
      \textbf{invariant 2:} if $j < m$, then \fxn{Ancestor\_id} $(\arry A[j],
      \oc a.l + 1) < i$%
    }

    \While{$m < \arry k[i]$}{%
      $n \leftarrow m + \lfloor (\arry k[i] - m)/2 \rfloor$
      \Comment{\,$\arry k[i-1] \le m \leq n < \arry k[i]$}

      $c \leftarrow $ \fxn{Ancestor\_id} $(\arry A[n],\ \oc a.l + 1)$
      \Comment{%
        $\arry A[n]\in\desc(\child(\oc a)[c])$%
      }

      \eIf%
      (\Comment{%
        $\arry A[n]$ is a descendant of a previous child%
      })%
      {$c < i$}{%
        $m \leftarrow n + 1$
        \Comment{%
          increase lower bound to maintain invariant 2%
        }
      }
      (\Comment{%
        $\arry A[n]$ is a descendant of the $c$th child, $c\geq i$
      })%
      {%
        \lForAll%
        {$i\leq j \leq c$}{%
          $\arry k[j] \leftarrow n$
          \Comment{%
            decrease $\arry k[j]$ to maintain invariant 1%
          }%
        }
      }
    }
  }

  \lForAll{$0 \leq i < 2^d$}{%
    $\arry H[i]\leftarrow$ alias $\arry A[\arry k[i],\dots,\arry k[i + 1]
      - 1]$
  }

  \Return{$\arry H$}
\end{algorithm}%

%

\section{Constructing ghost layers for unbalanced forests}
\seclab{ghost}

As discussed in \secref{distributed}, there is no a-priori knowledge on any
given process about what leaves might be in a neighboring process's partition.
This knowledge, however, is necessary to determine the local neighborhoods of
leaves that are adjacent to inter-process boundaries, which is crucial to many
application-level algorithms.  If a forest of octrees obeys a 2:1 balance
condition, it is known that a leaf's neighbors in other partitions can differ
by at most one refinement level.  The previously presented algorithm
\fxn{Ghost} \cite[Algorithm 20]{BursteddeWilcoxGhattas11} uses this fact to
identify neighboring processes and communicate leaves between them.
\fxn{Ghost} is short and effective, but not usable for an unbalanced forest.
Here we present an algorithm for ghost layer construction that works for all
forests.  Its key component is a recursive algorithm that determines when a
leaf's domain and a process's subdomain are adjacent to each other.

\subsection{Ghost layer construction using intersection tests}

A leaf $\oc o\not\in \cO_{\pr q}$ is in the
\emph{full ghost layer} for process $\pr q$ if its boundary intersects the
subdomain's closure, $\partial\Dom(\oc o)\cap \overline{\Omega_{\pr q}}\neq
\emptyset$. 
This definition includes leaves whose intersection with
$\overline{\Omega_{\pr q}}$ is a single vertex.
Some applications, such as discontinuous Galerkin finite element methods, only
require a ghost layer to include leaves whose intersections with
$\overline{\Omega_{\pr q}}$ have codimension 1.  The boundary set $\bound(\oc
o)$ \eqref{eqn:pointbound} allows us to define a ghost layer parametrized by
codimension.
%
  For $1\leq k \leq d$, the \emph{$\pr p$-to-$\pr
    q$ ghost layer} $\arry G_{\pr p\to \pr q}^k$ is a sorted array containing
  the subset of the leaves $\localoctants$ whose boundaries intersect $\pr
  q$'s subdomain at a point with codimension less than or equal to $k$,
  \begin{equation}\eqnlab{ptoqghost}
    \arry G_{\pr p\to\pr q}^k := \{\oc o\in \localoctants:\exists\
    \oc c\in \bound(\oc o),\ \dim(\oc c) \geq d - k,\ \Dom(\oc c) \cap
    \overline{\Omega_{\pr q} }\neq \emptyset \}.
  \end{equation}
  The \emph{$k$-ghost layer} $\arry G_{\pr p}^k$ for process $\pr p$ is the
  sorted union of all $\pr q$-to-$\pr p$ ghost layers,
  \begin{equation}\eqnlab{ghost}
    \arry G_{\pr p}^k := \bigsqcup_{0 \leq \pr q < P,\ \pr q\neq \pr p}
    \arry G_{\pr q\to\pr p}^k.
  \end{equation}

To construct all $\arry G_{\pr p\to\pr q}^k$ according to
\eqref{eqn:ptoqghost}, process $\pr p$ would have to perform intersection
tests between the boundary set of every leaf in $\localoctants$ and every other
process's subdomain.  We can reduce the number of intersection tests by noting
that if $\Dom(\oc c)\cap \overline{\Omega_{\pr q}} \neq \emptyset$, then by a
corollary to Proposition~\ref{prop:suppisect}, $\overline{\Omega_{\pr q}}$
must overlap some octant $\oc s$ in the support set $\supp(\oc c)$ that
surrounds $\oc c$.  Locating the first and last process subdomains that
overlap $\Dom(\oc s)$ for $\oc s\in\supp(\oc c)$ takes $O(\log P)$ time and
typically reduces the number of intersection tests per point from $O(P)$ to
$O(1)$.
We give pseudocode for computing which $\pr p$-to-$\pr q$ ghost layers contain a leaf $\oc o$ in \algref{ghost2}.
\vfill
\begin{algorithm}[H]
  \caption{\fxn{Add\_ghost} (octant $\oc o$, integer $k$)}
  \alglab{ghost2}

  \Input{%
    octant $\oc o\in\localoctants$; 
    $1 \leq k \leq d$.
  }

  \Result{%
    the set $\st Q$ of processes $\pr q$ such that $\oc o\in\arry G_{\pr p\to\pr
      q}^k$ \eqref{eqn:ptoqghost}.
  }

  \algorule

  $\st Q\leftarrow\emptyset$

  \ForAll{$\oc c\in \bound(\oc o)$ \textbf{such that} $\dim(\oc c) \geq
    d-k$}{%
    \ForAll{$\oc s\in \supp(\oc c)\backslash\{\oc o\}$}{%
      $(\oc f_{\oc s},\oc l_{\oc s})\leftarrow \range(\oc s)$

      $({\pr q}_{\rfirst}, {\pr q}_{\rlast}) \leftarrow
        (\locate(\oc f_{\oc s}),\ \locate(\oc l_{\oc s}))$
      \Comment{overlapping processes}


      \ForAll{$\pr q_{\rfirst} \leq \pr q \leq \pr q_{\rlast}, \pr q \neq \pr p$}{%

        $(\oc f_{\pr q},\oc l_{\oc q}) \leftarrow \range(\pr q)$
        \Comment{%
          \ $\overline{\Omega_{\pr q}}=\overline{\bigcup\Dom([\oc
          f_{\pr q}, \oc l_{\pr q}])}$%
        }

        $(\oc f, \oc l) \leftarrow (\max \{\oc f_{\oc s},\oc f_{\pr q}\},
          \ \min \{\oc l_{\oc s},\oc l_{\pr q}\})$
        \Comment{%
          \ $\overline{\bigcup\Dom([\oc f,\oc l])}=\overline{\Omega_{\pr
              q}}\cap \overline{\Dom(\oc s)}$
        }


        \lIf{%
          $\overline{\bigcup\Dom([\oc f,\oc l])}\cap \Dom(\oc c) \neq
          \emptyset$\label{intersecttest}
        }{%
          $\st Q\leftarrow \st Q\cup \{\pr q\}$
        }
      }
    }
  }
  \Return{$\st Q$}
\end{algorithm}

\subsection{Finding a range's boundaries recursively}
\seclab{rangebound}

The kernel of \algref{ghost2} is the intersection test on line
\ref{intersecttest},
\begin{equation}
  \left(\overline{\bigcup\Dom([\oc f,\oc l])}\cap \Dom(\oc c) = \emptyset ?
  \right),
\end{equation}
where $\oc c$ is a point, $[\oc f,\oc l]$ is the range between two atoms, and
$\oc f$ and $\oc l$ are both descendants of an octant $\oc s$ in the support
set $\supp(\oc c)$.  We must specify how this intersection test is to be
performed.
Because $\oc s\in\supp(\oc c)$ implies $\oc c\in\bound(\oc s)$ (see
Propositions~\ref{prop:duality1} and \ref{prop:duality2}), the point $\oc c$
must be equal to $(\oc s,\ix b)$ for some boundary index $b\in\st B$, so we
can rephrase the test as $(\overline{\bigcup\Dom([\oc f,\oc l])}\cap \Dom((\oc
s, b)) = \emptyset ?)$.  This test is a specific case of the
problem of constructing the \emph{range-boundary intersection}
$\st B_{\cap}(\oc f,\oc l,\oc s)$, the set of all 
boundary indices $b$ such that $\Dom((\oc s,b))$ intersects the range,
%
  %
  \begin{equation}\eqnlab{rangebound}
    \st B_{\cap}(\oc f,\oc l,\oc s):=\{b\in\st B:\overline{\bigcup\Dom([\oc f,\oc
      l])} \cap
    \Dom((\oc s,b))\neq \emptyset\}.
  \end{equation}

If the range $[\oc f,\oc l]$ contains only descendants of some child of
$\oc s$, say $\child(\oc s)[i]$, then the range-boundary intersection must be
a subset of the \emph{child-boundary intersection} $\st B_{\cap}^i$, which is
the subset of $\st B$ corresponding to points in $\bound(\oc s)$ intersected
by $\overline{\Dom(\child(\oc s)[i])}$,
\begin{equation}\eqnlab{childisect}
  \st B_{\cap}^i := \{b\in\st B:\overline{\Dom(\child(\oc s)[i])}\cap
  \Dom((\oc s,b))\neq \emptyset\}.
\end{equation}
The child-boundary intersections $\{\st B_{\cap}^i\}_{0\leq i<2^d}$
(\figref{boundset}, right) are the same for all (non-atom) octants.
%
%
The following proposition, illustrated in \figref{boundset} (left), shows how
the child-boundary intersection $\st B_{\cap}^i$ relates $\st B_{\cap}(\oc f,
\oc l, \child(\oc s)[i])$ to $\st B_{\cap}(\oc f, \oc l, \oc s)$.

\begin{propn}[Range-boundary intersection recursion]\label{prop:isect}
  If $\oc f$ and $\oc l$ are atoms, $\oc f \leq \oc l$, and both are
  descendants of $\child(\oc s)[i]$, then
  \begin{equation}\label{findrangeeq}
    (b\in \st B_{\cap}(\oc f, \oc l, \oc s)) \Leftrightarrow
    (b\in \st B_{\cap}(\oc f, \oc l, \child(\oc s)[i])\cap \st B_{\cap}^i).
  \end{equation}
\end{propn}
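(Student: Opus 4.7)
The plan is to split on whether $b\in\st B_{\cap}^i$. If $b\notin\st B_{\cap}^i$, then by the definition \eqref{eqn:childisect} of $\st B_{\cap}^i$, $\overline{\Dom(\child(\oc s)[i])}\cap \Dom((\oc s,b))=\emptyset$. Since $\oc f,\oc l\in\desc(\child(\oc s)[i])$, every atom in $[\oc f,\oc l]$ is a descendant of $\child(\oc s)[i]$, so $\overline{\bigcup\Dom([\oc f,\oc l])}\subseteq \overline{\Dom(\child(\oc s)[i])}$. Thus the left-hand side of the claimed equivalence is false, and the right-hand side is also false since $b\notin\st B_{\cap}^i$, so the equivalence holds vacuously.

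For the remaining case $b\in\st B_{\cap}^i$, I would reduce to an atom-by-atom statement. Because $[\oc f,\oc l]$ is a finite set of atoms and closure commutes with finite unions, $\overline{\bigcup\Dom([\oc f,\oc l])}=\bigcup_{\oc a\in[\oc f,\oc l]}\overline{\Dom(\oc a)}$. Consequently, non-empty intersection with either $\Dom((\oc s,b))$ or $\Dom((\child(\oc s)[i],b))$ reduces to the existence of some atom $\oc a\in[\oc f,\oc l]$ whose closed mapped domain meets the respective boundary piece. Thus the proposition follows once I establish the atom-level claim: for every atom $\oc a\in\desc(\child(\oc s)[i])$ and every $b\in\st B_{\cap}^i$,
\[
\overline{\Dom(\oc a)}\cap \Dom((\oc s,b))\neq\emptyset \;\Longleftrightarrow\; \overline{\Dom(\oc a)}\cap \Dom((\child(\oc s)[i],b))\neq\emptyset.
\]

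Because $\oc a$, $\child(\oc s)[i]$, and $\oc s$ lie in the same octree and $\varphi^{\oc s.\ix t}$ is a bijection, this reduces to the same statement with unmapped domains $\dom$, which I would verify coordinate by coordinate. The index $b$ either leaves $x_j$ free or pins it to $\oc s$'s low or high face, and the assumption $b\in\st B_{\cap}^i$ says precisely that, in every pinned coordinate $j$, the corresponding bit of $i$ places $\child(\oc s)[i]$ against that same face. Writing out $\overline{\dom(\oc a)}\cap \dom_b(\oc s)$ and $\overline{\dom(\oc a)}\cap \dom_b(\child(\oc s)[i])$ per coordinate, the free coordinates are satisfied automatically by $\oc a\in\desc(\child(\oc s)[i])$, and in each pinned coordinate both intersections reduce to the same extremal condition on $\oc a.\ix x_j$ (either $0$ or $2^{\lmax-\oc s.\ix l}-1$). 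The main obstacle is organizing the coordinate bookkeeping carefully, particularly verifying that the open/closed distinctions in $\dom_b$ align for $\oc s$ and $\child(\oc s)[i]$ under the hypothesis $b\in\st B_{\cap}^i$; no analytically delicate step is involved.
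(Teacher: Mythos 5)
The paper states Proposition~\ref{prop:isect} without proof, offering only the illustration in \figref{boundset}, so there is no in-paper argument to compare against; your proposal supplies a correct proof. The case split on $b\in\st B_{\cap}^i$ is the right organizing idea. When $b\notin\st B_{\cap}^i$, the containment $\overline{\bigcup\Dom([\oc f,\oc l])}\subseteq\overline{\Dom(\child(\oc s)[i])}$ makes both sides vacuously false; this containment rests on the fact (worth making explicit) that the atom descendants of any octant form a contiguous block in the total order, so $\oc f,\oc l\in\desc(\child(\oc s)[i])$ forces $[\oc f,\oc l]\subseteq\desc(\child(\oc s)[i])$. When $b\in\st B_{\cap}^i$, the atom-by-atom reduction is valid because $[\oc f,\oc l]$ is finite, and the coordinate-by-coordinate verification works exactly as you outline: the boundary domains are products of intervals, free coordinates are satisfied automatically by any atom $\oc a\in\desc(\child(\oc s)[i])$ (using $\oc s.\ix l<\lmax$), and in pinned coordinates the hypothesis $b\in\st B_{\cap}^i$ forces $\child(\oc s)[i]$ to abut the same face of $\oc s$, so the pinned values coincide and the two extremal conditions on $\oc a.\ix x_j$ agree. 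You could shorten the argument slightly by observing that once $b\in\st B_{\cap}^i$, one has $\Dom((\child(\oc s)[i],b))\subseteq\Dom((\oc s,b))$ outright, which gives the $(\Leftarrow)$ direction immediately and leaves only $(\Rightarrow)$ for the coordinate check.
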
%

\begin{figure}
  \raisebox{-0.5\height}{\includegraphics{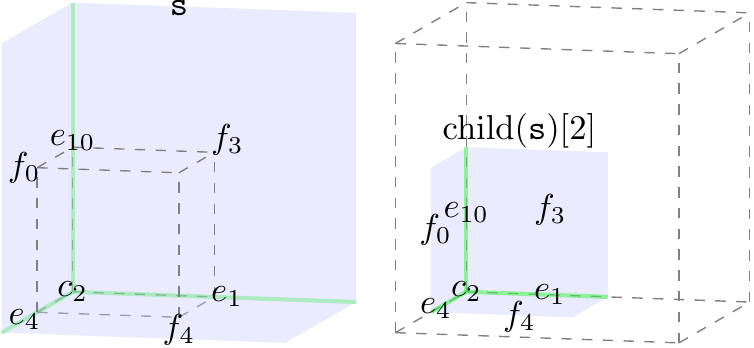}}
  \hfill
  \renewcommand{\arraystretch}{1.2}
  \begin{tabular}{|l|l|}
    \hline
    $\st B_{\cap}^0$ & $\{c_0,e_0,e_4,e_8,f_0,f_2,f_4\}$   \\
    $\st B_{\cap}^1$ & $\{c_1,e_0,e_5,e_9,f_1,f_2,f_4\}$   \\
    $\st B_{\cap}^2$ & $\{c_2,e_1,e_4,e_{10},f_0,f_3,f_4\}$   \\
    $\st B_{\cap}^3$ & $\{c_3,e_1,e_5,e_{11},f_1,f_3,f_4\}$   \\
    $\st B_{\cap}^4$ & $\{c_4,e_2,e_6,e_8,f_0,f_2,f_5\}$   \\
    $\st B_{\cap}^5$ & $\{c_5,e_2,e_7,e_9,f_1,f_2,f_5\}$   \\
    $\st B_{\cap}^6$ & $\{c_6,e_3,e_6,e_{10},f_0,f_3,f_5\}$   \\
    $\st B_{\cap}^7$ & $\{c_7,e_3,e_7,e_{11},f_1,f_3,f_5\}$   \\
    \hline
  \end{tabular}
  \caption{%
    (left) An illustration of a specific instance of Proposition~\ref{prop:isect}.
    If the range $[\oc f,\oc l]$ contains only descendants of $\child(\oc
    s)[2]$, then its domain intersects $\Dom((\oc s,b))$ (left) if and only
    if it intersects $\Dom((\child(\oc s)[i],b))$ (right) and $b \in \st
    B_{\cap}^2$. (right) The child-boundary intersections are enumerated.
  }%
  \figlab{boundset}
\end{figure}%

This result allows us to construct $\st B_{\cap}(\oc f,\oc l,\oc s)$ by
partitioning $[\oc f,\oc l]$ into ranges for all of the overlapping children,
\begin{equation}
  [\oc f,\oc l] = \bigsqcup_{i\in \mathcal{I}} [\oc f_i,\oc l_i],\quad
  \mathcal{I}:=\{i:\exists \oc a\in\desc(\child(\oc s)[i]),\ 
  \oc f\leq \oc a\leq \oc l\},
  \label{runion}
\end{equation}
and constructing the
range-boundary intersection for those children,
%
\begin{equation}
  \st B_{\cap}(\oc f,\oc l,\oc s) = \bigcup_{i\in\mathcal{I}}
  \st B_{\cap}(\oc f_i,\oc l_i,\child(\oc s)[i]) \cap \st B_{\cap}^i.
  \label{dunion}
\end{equation}
This leads to the recursive algorithm \fxn{Find\_\-range\_\-boundaries}
(\algref{findrange}), which computes $\st B_{\cap}(\oc f,\oc l,\oc s)\cap\st
B_{\query}$ for a set $\st B_{\query}\subseteq\st B$.

\begin{algorithm}
  \caption{\fxn{Find\_range\_boundaries} (\Octants\ $\oc f$, $\oc l$, $\oc s$,
    index set $\st B_\query$) }
  \alglab{findrange}

  \Input{%
    $\oc f$ and $\oc l$ are atom descendants of $\oc s$,
    $\oc f \leq \oc l$; 
    $\st B_\query \subseteq \st B$.
  }

  \Result{%
    $\st B_{\cap}(\oc f, \oc l, \oc s) \cap \st
    B_\query$ \eqref{eqn:rangebound}.
  }

  \algorule


  \lIf{$\st B_\query = \emptyset$ \algor $\oc s.l = \lmax$}{%
    \Return{$\st B_\query$}
    \label{maxexit}
  }


  $j \leftarrow \fxn{Ancestor\_id}\ (\oc f, \oc s.l+1)$
  \Comment{index of first child whose range overlaps $[\oc f,\oc l]$}

  $k \leftarrow \fxn{Ancestor\_id}\ (\oc l, \oc s.l+1)$
  \Comment{index of last child whose range overlaps $[\oc f,\oc l]$}

  \lIf%
  {$j = k$}{%
    \Return $\fxn{Find\_range\_boundaries}\ (\oc f, \oc l, \child(\oc s)[j],
    \st B_\query \cap \st B_{\cap}^{j})$
    \label{equalchildren}
  }

  $\st B_\rmatch \leftarrow \mathop{\bigcup}_{j < i < k} \st B_\query \cap
    \st B_{\cap}^i$
  \Comment{boundary touched by wholly-covered children}
  \label{distinctchildrenmid}

  $\st B_\rmatch^j \leftarrow (\st B_\query \cap
    \st B_{\cap}^j)\backslash \st B_\rmatch$

  $(\oc f_j,\oc l_j) \leftarrow \range(\child(\oc s)[j])$
  \Comment{\ $[\oc f,\oc l]\cap[\oc f_j,\oc l_j] = [\oc f,\oc
    l_j]$: if $\oc f\neq \oc f_j$, recursion is needed}

  \lIf%
  {$\oc f \neq \oc f_j$ \label{distinctchildrenfirst}}{%
    $\st B_\rmatch^j \leftarrow  
      \fxn{Find\_range\_boundaries}\ (\oc f, \oc l_j, \child(\oc s)[j], \st
      B_\rmatch^j)$
  }

  $\st B_\rmatch^k \leftarrow ((\st B_\query \cap \st
    B_{\cap}^j)\backslash\st B_{\rmatch}) \backslash B_{\rmatch}^j$

  $(\oc f_k,\oc l_k) \leftarrow \range(\child(\oc s)[k])$
  \Comment{\ $[\oc f,\oc l]\cap[\oc f_k,\oc l_k] = [\oc
    f_k,\oc l]$: if $\oc l\neq \oc l_k$, recursion is needed}

  \lIf%
  {$\oc l \neq \oc l_k$ \label{distinctchildrenlast}}{%
    $\st B_\rmatch^k \leftarrow \fxn{Find\_range\_boundaries}\ (\oc f_k, \oc
      l, \child(\oc s)[k], \st B_\rmatch^k)$
  }

  \Return{$(\st B_\rmatch\cup \st B_{\rmatch}^j \cup \st B_{\rmatch}^k)$
    \label{distinctchildrenreturn}}
\end{algorithm}

To compute the intersection test $(\overline{\bigcup\Dom([\oc f,\oc
  l])}\cap\Dom((\oc s,b))=\emptyset?)$ in \algref{ghost2}, we choose $\st
B_{\query}=\{b\}$ and use \fxn{Find\_\-range\_\-boundaries} to check whether
$(\st B_{\cap}(\oc f, \oc l, \oc s)\cap\st B_{\query}=\emptyset?)$.  A proof
of the correctness of \algref{findrange} is given in \appref{findrangeproof}.
The recursive procedure is also illustrated in \figref{figfindrange}.

\begin{figure}
  \centering
  \begin{tabular}{|>{\centering}p{0.46\columnwidth}|>{\centering}p{0.46\columnwidth}|}
    \hline
    \vspace{-2ex}
    \includegraphics{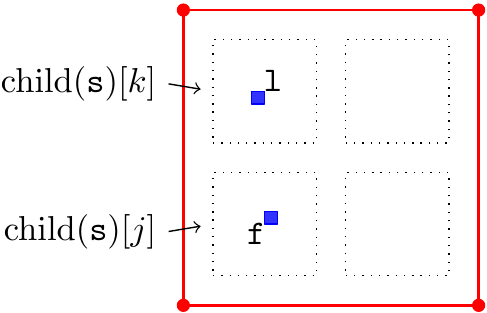}

    $j\leftarrow \fxn{Ancestor\_id}(\oc f,\oc s.l + 1)$

    $k\leftarrow \fxn{Ancestor\_id}(\oc l,\oc s.l + 1)$
    &
    \vspace{-2ex}
    \includegraphics{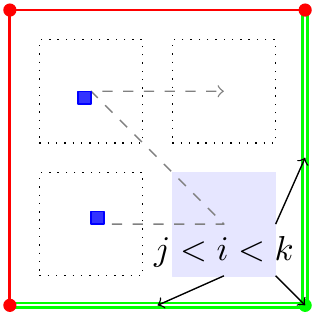}

    $\st B_{\rmatch} \leftarrow \bigcup_{j<i<k} \st B_{\query}\cap \st
      B_{\cap}^i$
    \tabularnewline \hline

    \vspace{-2ex}
    \includegraphics{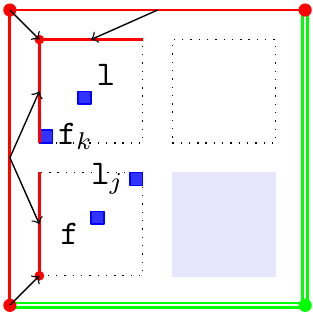}

    $\st B_{\query}^j\leftarrow (\st B_{\query} \cap B_{\cap}^j)
      \backslash \st B_{\rmatch}$

    $\st B_{\query}^k\leftarrow (\st B_{\query} \cap B_{\cap}^k)
      \backslash \st B_{\rmatch}$
    &
    \vspace{-2ex}
    \includegraphics{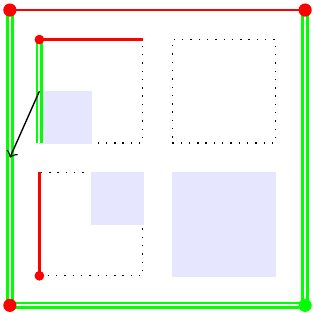}

    $\st B_{\rmatch}^j \leftarrow \st B_{\cap}(\oc f,\oc l_j,
      \child(\oc s)[j])\cap \st B_{\query}^j$

    $\st B_{\rmatch}^k \leftarrow \st B_{\cap}(\oc f_k,\oc l,
      \child(\oc s)[k])\cap \st B_{\query}^k$

    $\mathbf{return}\ \st B_{\rmatch} \cup \st B_{\rmatch}^j \cup \st
    B_{\rmatch}^k$
    \tabularnewline \hline
  \end{tabular}
  \caption{%
    An illustration of \fxn{Find\_range\_boundaries}, listed in
    \algref{findrange}.  (top left) Solid red lines indicate the points in
    $\bound(\oc s)$ indexed by $\st B_\query$; the children containing $\oc f$
    and $\oc l$ are determined. (top right) The contribution to $\st
    B_{\cap}(\oc f,\oc l,\oc s)\cap \st B_{\query}$ (double green lines) of
    children in the middle of the range is calculated; light blue indicates
    that this portion of the range $[\oc f,\oc l]$ has been processed. (bottom
    left) The arguments for the recursive calls are constructed. (bottom
    right) The sets returned by the recursive calls are added to the return
    set.%
  }
  \figlab{figfindrange}
\end{figure}

\subsection{Notes on implementation}

Ghost layer construction in \fxn{Ghost} has a few optimizations not
given in the pseudocode in \algref{ghost2}.  Most leaves in $\localoctants$ do
not touch the boundary of $\overline{\Omega_{\pr p}}$, and so cannot be in
$\arry G^k_{\pr p\to\pr q}$ for any $\pr q\neq \pr p$. To avoid the
intersection tests for these leaves, we first check to see if $\oc o$'s
$3\times 3$ neighborhood (or ``insulation layer''
\cite{SampathAdavaniSundarEtAl08}) is owned by $\pr p$: this can be
accomplished with two comparisons, for the first and last atoms of the
neighborhood, against the first-atoms array $\arry f$ \eqref{eqn:firstatoms}.
We also note that if $\oc c$ is a $0$-point then $\overline{\Omega_{\pr q}}$
intersects $\Dom(\oc c)$ if and only $\pr q = \locate(\oc a)$ for some atom
$\oc a$ in $\oc c$'s atomic support set \eqref{eqn:pointasupp}.  Because this
simpler test is available, we do not call $\fxn{Find\_\-range\_\-boundaries}$
for $0$-points.

If an instance of \fxn{Find\_\-range\_\-boundaries} calls two recursive
copies of itself, all future instances will call only one recursive copy. We
use this fact to take advantage of tail-recursion optimization in our
implementation.  Because $|\st B|<32$ for $d=2$ and $d=3$, we can perform the
set intersection, unions, and differences in \algref{findrange} by assigning
each $b\in\st B$ to a bit in an integer and performing bitwise operations.


\section {A universal topology iterator}
\seclab{iterate}

A forest is first of all a storage scheme for a mesh refinement topology.
Applications use this topological information in ways that we do not wish to
restrict or anticipate.  We focus instead on designing an interface that
conveys this information to applications in a complete and efficient manner,
with the main goal of minimizing the points of contact between \pforest on the
one hand and the application on the other.

As we will see in our discussion of a specific node numbering algorithm in
\secref{lnodes}, some applications need to perform operations not just on
leaves, but also on their boundary points.  Our algorithm that facilitates
this is called \fxn{Iterate}.

\begin{remark}\upshape
  The algorithm \fxn{Iterate} requires that the leaves in the micro layer
  $\cO$ are 2:1 balanced.  This is assumed for the remainder of this section
  and \secref{lnodes}.  This assumption is not very limiting: most
  applications that need topological information (e.g., finite element or fast
  multipole calculations) require 2:1 balance, either for numerical reasons
  or to avoid the complexity of handling general adjacencies.
  On the other hand, extending the algorithm to larger refinement ratios
  when the need arises appears to be a practical option since this case is
  covered by the recursion as well.
\end{remark}

\subsection{Definitions}

The algorithm \fxn{Iterate} is distributed and communication-free (assuming
that the ghost layer $\arry G_{\pr p}^d$ \eqref{eqn:ghost} has already been
constructed): on process $\pr p$, it takes a user-supplied callback and
executes it once for every leaf and leaf-boundary point $\oc c$ that is
relevant to $\localoctants$, supplying information about the neighborhood
around $\oc c$.  We define exactly what this means here.

The union of all leaves with their closure sets, $\bigcup\clos(\cO)$, defines
a covering of $\Omega$, $\overline{\Omega}= \bigcup\Dom(\bigcup\clos(\cO))$.
This covering may not be a partition, because $\bigcup\clos(\cO)$ may contain
\emph{hanging} points: $n$-points of dimension $n< d$ that are in the child
partition sets \eqref{eqn:pointpart} of other points in
$\bigcup\clos(\cO)$.  We can define a partition by removing these hanging
points.
%
  The \emph{global partition} $\partoctants$ is
  \begin{equation}
    \partoctants :=
    \bigcup\clos(\cO) \backslash
    \{\oc c:
    \exists\ \oc e \in \bigcup\clos(\cO),
    \ \oc c\in\part(\oc e)\}.
  \end{equation}

If there is just one process, the function \fxn{Iterate} executes a
user-supplied callback function for every point $\oc c\in\partoctants$.  For a
distributed forest, \fxn{Iterate} as called by process $\pr p$ executes the
callback function only for the subset of $\partoctants$ that is relevant to
$\localoctants$.  In the \pforest implementation, we allow for two definitions
of what is relevant.  The first is the \emph{locally relevant partition},
which is the subset $\localpart\subseteq \partoctants$ that overlaps
$\overline{\Omega_{\pr p}}$,
%
  %
  \begin{equation}
    \localpart :=\{\oc c \in \partoctants:
    \exists\ \oc o\in\localoctants,\ \Dom(\oc c)\cap\overline{\Dom(\oc
      o)}\neq\emptyset\}.
  \end{equation}

One potential problem with $\localpart$ is that, because of hanging points, it
may not be closed: if $\oc c\in\localpart$, there may be $\oc e\in
\clos(\oc c)$ such that $\oc e\not\in\localpart$.  As we will show in
\secref{lnodes}, closedness is necessary for some applications, so we also
define the \emph{closed locally relevant partition} $\overline{\localpart}$,
%
  \begin{equation}\eqnlab{closedpart}
    \overline{\localpart}:=\bigcup\{\clos(\oc e):\ \oc e\in\localpart\}.
  \end{equation}
%
The sets we have defined thus far---$\partoctants$, $\localpart$, and
$\overline{\localpart}$---are illustrated in \figref{hanging}.

\begin{figure}\centering
  \includegraphics{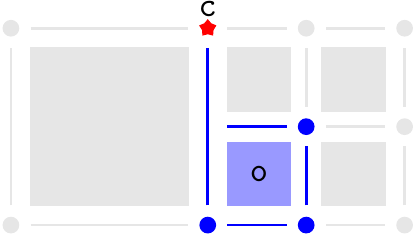}
  \caption{%
    Suppose process $\pr p$ owns only octant $\oc o$ in this two-dimensional
    illustration.  $\partoctants$ is the set of all points shown: note that
    because some points in $\clos(\oc o)$ are hanging, they are not included.
    The set $\localpart$ of points that overlap $\overline{\Omega_p}$ is shown
    in blue.  The $0$-point $\oc c$ shown as a red star is not in
    $\localpart$, but is in its closure, $\overline{\localpart}$.%
  }%
  \figlab{hanging}
\end{figure}

If \fxn{Iterate} only supplied the callback with each relevant point $\oc c$
in isolation, its utility would be limited, because it would say nothing about
the neighborhood around $\oc c$.  We call the neighborhood of adjacent leaves
the \emph{leaf support set} $\lsupp(\oc c)$,
%
  \begin{equation}\eqnlab{lsupp}
    \lsupp(\oc c):=\{\oc o\in \cO : \overline{\Dom(\oc o)} \cap \Dom(\oc c) \neq
    \emptyset\}.
  \end{equation}
%
  Note that the leaf support set $\lsupp(\oc c)$ may differ from the support
  set $\supp(\oc c)$ \eqref{eqn:pointsupp}, which is independent of the leaves
  in the micro layer $\cO$.
%
Because process $\pr p$ does not have access to all leaves in $\cO$, it can
only compute the subset of the leaf support set that is contained in the local
leaves and in the full ghost layer $\arry G_{\pr p}^d$ \eqref{eqn:ghost},
which we call the \emph{local leaf support set} $\lsupp_{\pr p} (\oc c)$,
%
  \begin{equation}\eqnlab{llsupp}
    \lsupp_{\pr p}(\oc c):=\lsupp(\oc c)\cap(\localoctants \cup
    \arry{G}_{\pr p}^d).
  \end{equation}
%

\begin{propn}\label{prop:completeinfo}
  $\lsupp_{\pr p}(\oc c)=\lsupp(\oc c)$ if and only if $\oc c\in\localpart$.
\end{propn}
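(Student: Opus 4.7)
The plan is to reduce both directions of the equivalence to a geometric criterion: since $\lsupp_{\pr p}(\oc c)\subseteq\lsupp(\oc c)$ is automatic, equality holds iff every external leaf $\oc r\in\lsupp(\oc c)$ lies in $\arry G^d_{\pr p}$. Unfolding the ghost definition and using $\overline{\Omega_{\pr p}}=\bigcup_{\oc o\in\localoctants}\overline{\Dom(\oc o)}$, together with the fact that the open interior $\Dom(\oc r)$ of an external leaf is disjoint from $\overline{\Omega_{\pr p}}$, membership of $\oc r$ in $\arry G^d_{\pr p}$ is equivalent to $\overline{\Dom(\oc r)}\cap\overline{\Omega_{\pr p}}\neq\emptyset$.

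For $(\Leftarrow)$, I would fix a local witness $\oc o_0\in\localoctants\cap\lsupp(\oc c)$ supplied by the hypothesis $\oc c\in\localpart$, so that $\overline{\Dom(\oc o_0)}\subseteq\overline{\Omega_{\pr p}}$ meets $\Dom(\oc c)$. For any $\oc r\in\lsupp(\oc c)$ it then suffices to show $\overline{\Dom(\oc r)}\cap\overline{\Dom(\oc o_0)}\neq\emptyset$. The key step is that any two leaves both belonging to $\lsupp(\oc c)$ share a point of $\overline{\Dom(\oc c)}$: the 2:1 balance assumption together with the non-hanging property $\oc c\in\partoctants$ fixes the local patch around $\oc c$ to contain only leaves at levels $\level(\oc c)$ or $\level(\oc c)+1$, and Propositions~\ref{prop:duality1} and \ref{prop:duality2} place $\oc o_0$ and $\oc r$ either directly in $\supp(\oc c)$ or inside a child of an element of $\supp(\oc c)$, so their closures overlap on (a piece of) a common support element lying in $\overline{\Dom(\oc c)}$. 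A point in the resulting overlap lands in $\Dom(\oc e)$ for some $\oc e\in\clos(\oc r)$ by Proposition~\ref{prop:closure}, and non-hanging excludes $\oc e=\oc r$, so $\oc e\in\bound(\oc r)$ witnesses $\oc r\in\arry G^d_{\pr p}$.

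For $(\Rightarrow)$ I would prove the contrapositive: assume $\oc c\notin\localpart$, so $\Dom(\oc c)\cap\overline{\Omega_{\pr p}}=\emptyset$, and exhibit a leaf $\oc r^*\in\lsupp(\oc c)$ with $\overline{\Dom(\oc r^*)}\cap\overline{\Omega_{\pr p}}=\emptyset$. The candidate is a support-descendant leaf $\oc r^*\in\cO$ with $\oc c\in\clos(\oc r^*)$, chosen so that its support octant $\oc s\in\supp(\oc c)$ is the one facing away from $\overline{\Omega_{\pr p}}$; then $\overline{\Dom(\oc r^*)}\subseteq\overline{\Dom(\oc s)}$ gives the required isolation. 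The main obstacle is precisely this selection: one must show that at least one $\oc s\in\supp(\oc c)$ has $\overline{\Dom(\oc s)}\cap\overline{\Omega_{\pr p}}=\emptyset$, and the 2:1 balance hypothesis together with the non-hanging structure of $\oc c$ (which forces the support set to have the regular shape of \tabref{sets}) is what enables this; without these hypotheses one could construct neighborhoods in which every support octant reaches $\overline{\Omega_{\pr p}}$ through a distant boundary element, and the forward implication becomes delicate.
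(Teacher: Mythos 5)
The paper states Proposition~\ref{prop:completeinfo} without proof, so there is no reference argument to compare against; I will review the attempt on its merits.

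Your $(\Leftarrow)$ direction is sound and captures the essential idea, though the phrase ``their closures overlap on (a piece of) a common support element'' is looser than what you actually need. The clean witness is the barycenter $m$ of $\Dom(\oc c)$: since $\oc c\in\partoctants$ and the forest is 2:1 balanced, every $\oc r\in\lsupp(\oc c)$ is either a support octant (whose closure contains all of $\overline{\Dom(\oc c)}$) or a level-$(\level(\oc c)+1)$ child of one that meets $\Dom(\oc c)$ (whose closure covers a quadrant of $\overline{\Dom(\oc c)}$, which always includes $m$). Thus $m\in\overline{\Dom(\oc r)}\cap\overline{\Dom(\oc o_0)}\subseteq\overline{\Dom(\oc r)}\cap\overline{\Omega_{\pr p}}$, and since $m\in\Dom(\oc c)\subseteq\partial\Dom(\oc r)$ (not the interior), it lies in $\Dom(\oc e)$ for some $\oc e\in\bound(\oc r)$ by Proposition~\ref{prop:closure}, giving the ghost membership. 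This is the direction the paper actually uses (Section~\ref{sec:lnodesalg}, step~2).

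Your $(\Rightarrow)$ direction, however, has a genuine gap, and your own hedging about the ``delicate'' selection was warranted: the claimed isolating support octant need not exist, even with 2:1 balance and $\oc c\in\partoctants$. Take $d=2$, a single octree uniformly refined to level $\lmax=2$, with process $\pr p$ owning the bottom row of four unit leaves, so $\overline{\Omega_{\pr p}}=[0,4]\times[0,1]$. Let $\oc c=\{1\}\times(1,2)$, the face shared by the ghost leaves $g_1=(0,1)\times(1,2)$ and $g_2=(1,2)\times(1,2)$. Then $\oc c\in\partoctants$ and $\lsupp(\oc c)=\{g_1,g_2\}$. Both $g_1$ and $g_2$ touch $\overline{\Omega_{\pr p}}$ along their bottom edges, so both lie in $\arry G^d_{\pr p}$ and $\lsupp_{\pr p}(\oc c)=\lsupp(\oc c)$; yet $\Dom(\oc c)\cap\overline{\Omega_{\pr p}}=\emptyset$, so $\oc c\notin\localpart$. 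Here $\emph{every}$ $\oc s\in\supp(\oc c)$ reaches $\overline{\Omega_{\pr p}}$ through a transverse boundary edge, so your candidate $\oc r^*$ cannot be chosen. This shows that the forward implication is actually false as literally stated; it becomes true only under an extra hypothesis (for instance restricting to $\oc c\in\overline{\localpart}$, which is the only regime in which \fxn{Iterate} ever invokes the proposition, and in which the algorithm's early exit $\bigcup\arry S\cap\localoctants=\emptyset$ would have pruned the counterexample's $\oc c$ before the callback is reached). A complete proof should either add that hypothesis explicitly or drop the ``only if''; as written, the paper only relies on the $(\Leftarrow)$ direction.
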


The local leaf support set, though it may not contain all of the leaf support
set, is what \fxn{Iterate} supplies to the user-supplied callback to describe
the neighborhood around $\oc c$.  The local leaf support set can be used to
determine if $\oc c\in\localpart$ or $\oc c\in\overline{\localpart}$, using a
function \fxn{Is\_relevant} (\algref{relevant}).

\begin{algorithm}[H]
  \caption{\fxn{Is\_relevant} (point $\oc c$, octant set $\lsupp_{\pr p}(\oc c)$)}
  \alglab{relevant}

  \Input{%
    $\lsupp_{\pr p}(\oc c)$ is the local leaf support set of $\oc c$
    \eqref{eqn:llsupp}.
  }

  \Result{%
    true if and only if $\oc c$ is in the relevant set ($\localpart$ or
    $\overline{\localpart}$).
  }

  \algorule

  \ForAll{$\oc s\in\lsupp_{\pr p}(\oc c)$}{%
    \lIf%
    {%
      $\oc s\in\localoctants$
    }{%
      \Return \algtrue
    }%
    \Else(\Comment{%
      this else-statement is only used if $\overline{\localpart}$ is
      relevant
    })%
    {%
      \ForAll%
      {$\oc e\in \bound(\oc s)$ \textbf{such that} $\oc c\in\bound(\oc e)$}{%
        \lIf%
        (\Comment{%
          intersection test from
          \secref{rangebound}
        })%
        {$\Dom(\oc e)\cap\overline{\Omega_{\pr p}}\neq\emptyset$}{%
          \Return \algtrue
        }
      }
    }
  }
  \Return \algfalse
\end{algorithm}

For each octant $\oc o\in\lsupp_{\pr p}(\oc c)$, the \pforest implementation
of \fxn{Iterate} supplements the octant data fields $\oc o.\ix l$, $\oc o.\ix
x$, and $\oc o.\ix t$ with additional data passed to the callback function.
We supply a boolean identifying whether $\oc o\in\localoctants$, so no
searching is necessary to determine if $\oc o$ is local or a ghost.  We also
supply the index of $\oc o$ within either $\localtreeoctants$ for $t=\oc o.\ix
t$ (which is easily converted to $j$ such that $\localoctants[j]=\oc o$) or
within the ghost layer $\arry G^d_{\pr p}$ \eqref{eqn:ghost}.  Keeping track
of this information does not change the computational complexity of
\fxn{Iterate}, but introduces additional bookkeeping that we will omit from
our presentation of the algorithm.

\subsection{Iterating in the interior of a point}

A simple implementation of \fxn{Iterate} might take each leaf $\oc
o\in\localoctants$ in turn and, for each $\oc c\in \bound(\oc o)$, compute
$\lsupp_{\pr p}(\oc c)$ by searching through $\localoctants$ and $\arry
G^d_{\pr p}$ to find $\oc o$'s neighbors that are adjacent to $\oc c$.  A
bounded number of binary searches would be performed per leaf, so the total
iteration time would be $O(N_{\pr p}\log N_{\pr p})$.  This is the strategy
used by the \fxn{Nodes} algorithm in \pforest \cite[Algorithm
21]{BursteddeWilcoxGhattas11} and by other octree libraries
\cite{SundarSampathBiros08}. We note two problems with this approach.  The
first is the large number of independent searches that are performed.  The
second is that this approach needs some way of ensuring that the callback is
executed for each relevant point only once, such as storing the set of points
for which the callback has executed in a hash table.

Instead, the implementation of \fxn{Iterate} that we present proceeds
recursively. We take as inputs to the recursive procedure a point $\oc c$ and
a set of arrays $\arry S$, where $\arry S[i]$ contains all leaves that are
descendants of the support set octant $\supp(\oc c)[i]$. If $\oc c$ is in the
global partition $\partoctants$ and is in the relevant set ($\localpart$ or
$\overline{\localpart}$), then the octants in $\lsupp_{\pr p}(\oc c)$ can be
found in the $\arry S$ arrays and the callback function can be executed.
Otherwise, the points of the global partition $\partoctants$ that are
descendants of $\oc c$ can be divided between the points in the child
partition set $\part(\oc c)$ \eqref{eqn:pointpart}.  Each $\oc e\in \part(\oc
c)$ takes the place of $\oc c$ in a call to the recursive procedure: to
compute the new $\arry S$ arrays for $\oc e$, we use the function
\fxn{Split\_array} (described in \secref{split}) on the original arrays $\arry
S$.  This is spelled out in \fxn{Iterate\_interior} (\algref{iterinterior}).

\begin{algorithm}
  \caption{\fxn{Iterate\_interior} (point $\oc c$, octant arrays $\arry
    S$,
    \fxn{callback})}
  \alglab{iterinterior}

  \Input{%
    $\arry S[i]$ is the sorted array of all leaves $\oc
    o\in\localoctants\cup\arry G_{\pr p}^d$ such that $\oc o\in\desc(\supp(\oc
    c)[i])$.
  }

  \Result{%
    if $\oc c\in\localpart$ (or $\overline{\localpart}$), then $\lsupp_{\pr
      p}(\oc c)$ \eqref{eqn:llsupp} is computed and passed to
    \fxn{callback}; 
    otherwise, \fxn{callback}
    is called for each $\oc e\in\localpart$ (or $\overline{\localpart}$)
    such that $\Dom(\oc e)\subset\Dom(\oc c)$.
  }

  \algorule

  \lIf%
  (\Comment{%
    stop if there are no local leaves
  })%
  {$\bigcup \arry{S} \cap\localoctants = \emptyset$}{%
    \Return
  }

  octant set $\mathcal{L}\leftarrow\emptyset$
  \Comment{%
    if $\oc c\in\partoctants$, then $\mathcal{L}$ will equal $\lsupp_{\pr
      p}(\oc c)$%
  }

  boolean $\mathrm{stop} \leftarrow \algfalse$
  \Comment{\ $\mathrm{stop}$ will be true if $\oc c\in\partoctants$}

  \eIf%
  (\Comment{%
    see \figref{itervolfig}
  })%
  {$\dim(\oc c)>0$}{%
    \ForAll{$0 \leq i < |\supp(\oc c)|$}{%
      $\oc s\leftarrow \supp(\oc c)[i]$

      \eIf%
      (\Comment{%
        if {$\oc s$} is a leaf, \dots
      })%
      {$\arry S[i] = \{\oc s\}$}{%
        $\mathrm{stop}\leftarrow \algtrue$
        \Comment{then $\oc c\in\partoctants$, \dots}

        $\mathcal{L}\leftarrow\mathcal{L} \cup\{\oc s\}$
        \Comment{and $\oc s\in \lsupp(\oc c)$}
         \label{iteraddlarge}
      }{%
        $\arry{H}_i \leftarrow \fxn{Split\_array}(\arry S[i], \oc s)$
        \Comment{$O(\log |\arry S[i]|)$ (see \secref{split})}
        \label{itersplit}

        $\arry h_i\leftarrow \child(\oc s)$
        \label{childrensplit}
        \Comment{%
          if $\oc c\in\partoctants$, then by the 2:1 condition,\ $\dots$%
        }

        $b\leftarrow b\text{ such that }\oc c=(\oc s,b)$
        \Comment{%
          children next to $(\oc s,b)=\oc c$ are in $\lsupp(\oc c)$%
        }

        $\mathcal{L}\leftarrow \mathcal{L} \cup \{\arry h_i[j]: b\in\st
          B_{\cap}^j\}$
        \label{iteraddsmall}
        \Comment{%
          $\st B_{\cap}^j$ are child-boundary intersection sets 
          \eqref{eqn:childisect}%
        }
      }
    }
  }
  (\Comment{%
    a $0$-point, see \figref{figitercorner}%
  })%
  {%
    $\mathrm{stop}\leftarrow \algtrue$
    \Comment{%
      anytime we find a $0$-point between leaves, it is in $\partoctants$.
    }

    \For%
    (\Comment{%
      find leaves surrounding $\oc c$:
      use Proposition~\ref{prop:duality2}%
    })%
    {$0 \leq i < |\supp(\oc c)|$}{%
      $\mathcal{L}\leftarrow \mathcal{L}\cup \{\oc o\in\arry S[i]:\asupp(\oc
        c)[i]\in\desc(\oc o)\}$
      \label{itercorneradd}
      \Comment{requires $O(\log |\arry S[i]|)$ search}
    }
  }%
  \eIf{$\mathrm{stop}$}{%
    \lIf{$\fxn{Is\_relevant}(\oc c,\mathcal{L})$\label{callbackrelevant}}{%
      $\fxn{callback} (\oc c,\mathcal{L})$
      \label{itercallback}
    }
  }{%
    \ForAll%
    (\Comment{%
      set up recursion for each point in child partition set
      \eqref{eqn:pointpart}
    })%
    {$\oc e\in \part(\oc c)$\label{interiorloop}}{%
      \ForAll%
      (\Comment{%
        find descendants of support set octants
      })%
      {$0 \leq i < |\supp(\oc e)|$}{%
        $\arry{S}_{\oc e}[i]\leftarrow \arry H_j[k]$ such that $\arry h_j[k] =
          \supp(\oc e)[i]$
        \Comment{subarrays created on line \ref{itersplit}}
      }
    }
    $\fxn{Iterate\_interior}(\oc e,\arry{S}_{\oc e},\fxn{callback})$
    \label{iterrecurse}
  }
\end{algorithm}

We provide some figures to illustrate the recursion in
\fxn{Iterate\_interior}: \figref{itervolfig} shows the cases when $\dim(\oc
c)=d$ and $0 < \dim (\oc c) < d$ and \figref{figitercorner} shows the case
when $\dim(\oc c)=0$.  The correctness of \algref{iterinterior} is proved in
\appref{iterproof}.

\begin{figure}\centering
  \renewcommand{\arraystretch}{1.5}
  \begin{tabular}{l|c|c|}
                                                                 \cline{2-3}
    \multirow{3}{*}[-2em]{{\small
\usetikzlibrary{trees}
\usetikzlibrary{quadtree}
\begin{tikzpicture}[line join=round,scale=1.2]

  \tikzstyle{unknownstyle}=[dotted]
  \tikzstyle{positionstyle}=[draw=yellow,fill=yellow!20]
  \tikzstyle{leafstyle}=[draw=blue!10,ultra thin,fill=blue!10]
  \tikzstyle{augstyle}=[draw=green,fill=green!20]
  \tikzstyle{branchstyle}=[draw=red,dashed,fill=red!20]
  \tikzstyle{iterline}=[draw=red,densely dashed]
  \tikzstyle{callline}=[very thick,draw=green!80!black,double]
  \def\sp{0.1}
  \def\face{+(0,-1) -- +(0,1)}
  \def\corner{circle(\sp * 0.707 * 2cm)}

  \draw[draw=none,use as bounding box](-1.4cm,-1.4cm) rectangle (1.4cm,1.5cm);
  \draw [unknownstyle]\quadrant node {$\oc c =\supp(\oc c)[0]$} (0,-1) node [below]
  {$\arry S[0]$};
  \draw [scale=(1-\sp)] [iterline]\quadrant;

%

\end{tikzpicture}
}

    & \multicolumn{2}{c|}{$\arry S[0] = \{\supp(\oc c)[0]\}?$} \\ \cline{2-3}
    & \textbf{true} & \textbf{false}                           \\ \cline{2-3}
    &
                         {\small
\usetikzlibrary{trees}
\usetikzlibrary{quadtree}
\begin{tikzpicture}[line join=round,scale=1.2]

  \tikzstyle{unknownstyle}=[dotted]
  \tikzstyle{positionstyle}=[draw=yellow,fill=yellow!20]
  \tikzstyle{leafstyle}=[draw=blue!10,ultra thin,fill=blue!10]
  \tikzstyle{augstyle}=[draw=green,fill=green!20]
  \tikzstyle{branchstyle}=[draw=red,dashed,fill=red!20]
  \tikzstyle{iterline}=[draw=red,densely dashed]
  \tikzstyle{callline}=[very thick,draw=green!80!black,double]
  \def\sp{0.1}
  \def\face{+(0,-1) -- +(0,1)}
  \def\corner{circle(\sp * 0.707 * 2cm)}

  \draw[draw=none,use as bounding box](-1.4cm,-1.4cm) rectangle (1.4cm,1.5cm);

  \draw [leafstyle]\quadrant node {$\lsupp(\oc c)[0]$} (0,1);
  \draw [scale=(1-\sp)] [callline]\quadrant;
%

\end{tikzpicture}
}

                         {\small
\usetikzlibrary{trees}
\usetikzlibrary{quadtree}
\begin{tikzpicture}[line join=round,scale=1.2]

  \tikzstyle{unknownstyle}=[dotted]
  \tikzstyle{positionstyle}=[draw=yellow,fill=yellow!20]
  \tikzstyle{leafstyle}=[draw=blue!10,ultra thin,fill=blue!10]
  \tikzstyle{augstyle}=[draw=green,fill=green!20]
  \tikzstyle{branchstyle}=[draw=red,dashed,fill=red!20]
  \tikzstyle{iterline}=[draw=red,densely dashed]
  \tikzstyle{callline}=[very thick,draw=green!80!black,double]
  \def\sp{0.1}
  \def\face{+(0,-1) -- +(0,1)}
  \def\corner{circle(\sp * 0.707 * 2cm)}

  \draw[draw=none,use as bounding box](-1.4cm,-1.4cm) rectangle (1.4cm,1.5cm);

%
  \draw [morton order,quadtree absolute space=\sp] coordinate
    child { [unknownstyle]\quadrant
            { node {} (0,-1) node [below] {$\arry H_0[0]$}}}
    child { [unknownstyle]\quadrant
            { node {} (0,-1) node [below] {$\arry H_0[1]$}}}
    child { [unknownstyle]\quadrant
            { node {} (0,1) node [above] {$\arry H_0[2]$}}}
    child { [unknownstyle]\quadrant
            { node {} (0,1) node [above] {$\arry H_0[3]$}}};
  \draw [morton order,quadtree absolute space=\sp,quadtree scale=(1-\sp)]
    coordinate
    child { [iterline]\quadrant}
    child { [iterline]\quadrant}
    child { [iterline]\quadrant}
    child { [iterline]\quadrant};
  \draw [iterline,scale=0.5*(1-\sp),yshift=-1cm*(1 + \sp)/(1-\sp)] \face;
  \draw [iterline,scale=0.5*(1-\sp),yshift=1cm*(1 + \sp)/(1-\sp)] \face;
  \draw [iterline,scale=0.5*(1-\sp),xshift=-1cm*(1 + \sp)/(1-\sp),rotate=90]
    \face;
  \draw [iterline,scale=0.5*(1-\sp),xshift=1cm*(1 + \sp)/(1-\sp),rotate=90]
    \face;
  \draw [iterline,scale=0.5] \corner;

\end{tikzpicture}
}

  \end{tabular}
  
  \vspace{\baselineskip}
  \hrule
  \vspace{\baselineskip}

  \begin{tabular}{|c|c|c|c|}
    \multicolumn{3}{l}{{\small
\usetikzlibrary{trees}
\usetikzlibrary{quadtree}
\begin{tikzpicture}[line join=round,scale=1.2]

  \tikzstyle{unknownstyle}=[dotted]
  \tikzstyle{positionstyle}=[draw=yellow,fill=yellow!20]
  \tikzstyle{leafstyle}=[draw=blue!10,ultra thin,fill=blue!10]
  \tikzstyle{augstyle}=[draw=green,fill=green!20]
  \tikzstyle{branchstyle}=[draw=red,dashed,fill=red!20]
  \tikzstyle{iterline}=[draw=red,densely dashed]
  \tikzstyle{callline}=[very thick,draw=green!80!black,double]
  \def\sp{0.1}
  \def\face{+(0,-1) -- +(0,1)}
  \def\corner{circle(\sp * 0.707 * 2cm)}

  \draw [draw=none,use as bounding box] (-0.125cm,-2.75cm) rectangle (5.25cm,0.25cm);

  \draw [fill=white,drop shadow] (-0.5cm,-2.65cm) rectangle (5.25cm,0.75cm);

  \draw (-0.5cm,0.25cm) -- (5.25cm,0.25cm);

  \draw [yshift=0.5cm,xshift=2.375cm,scale=(1-\sp)*0.25] node {Legend};

  \draw [xshift=2.375cm,scale=(1-\sp)*0.25] node 
  {\texttt{Iterate\_interior} ($\oc c$, $\arry S$, \texttt{callback}):};
  \draw [yshift=-0.5cm,scale=(1-\sp)*0.25][iterline]\quadrant (2cm,-1cm) -- (2cm,1cm)
    (3cm,0cm) circle(\sp * 0.707 * 4cm) (4cm,0cm) node [right] {$\oc c$};
  \draw [xshift=1.75cm,yshift=-0.5cm,scale=(1-\sp)*0.25][unknownstyle]\quadrant
    (1.5cm,0cm) node [right] {$\arry S[i]$ (leaves in $\supp(\oc c)[i]$)};

  \draw [draw=gray,thin] (-0.5cm,-0.9cm) -- (5.25cm,-0.9cm);

  \draw [yshift=-1.2cm,xshift=2.375cm,scale=(1-\sp)*0.25]
  node {$\arry H_i \leftarrow$ \texttt{Split\_array} ($\mathbf{S}[i]$,
    $\supp(\oc c)[i]$)};

  \draw [draw=gray,thin] (-0.5cm,-1.5cm) -- (5.25cm,-1.5cm);

  \draw [yshift=-1.75cm,xshift=2.375cm,scale=(1-\sp)*0.25] node 
  {\texttt{callback} ($\oc c$, $\lsupp(\oc c)$):};
  \draw [yshift=-2.25cm,xshift=0.5cm,scale=(1-\sp)*0.25][callline]
    \quadrant (2cm,-1cm) -- (2cm,1cm)
    (3cm,0cm) circle(\sp * 0.707 * 4cm) (4cm,0cm) node [right]
    {$\oc c$};
  \draw [xshift=2.625cm,yshift=-2.25cm,scale=(1-\sp)*0.25][leafstyle] \quadrant
    (1.5cm,0cm) node [right] {$\lsupp(\oc c)[i]$};
  
\end{tikzpicture}
}

    \multicolumn{1}{c}{\raisebox{.125\height}{{\small
\usetikzlibrary{trees}
\usetikzlibrary{quadtree}
\begin{tikzpicture}[line join=round,scale=1.2]

  \tikzstyle{unknownstyle}=[dotted]
  \tikzstyle{positionstyle}=[draw=yellow,fill=yellow!20]
  \tikzstyle{leafstyle}=[draw=blue!10,ultra thin,fill=blue!10]
  \tikzstyle{augstyle}=[draw=green,fill=green!20]
  \tikzstyle{branchstyle}=[draw=red,dashed,fill=red!20]
  \tikzstyle{iterline}=[draw=red,densely dashed]
  \tikzstyle{callline}=[very thick,draw=green!80!black,double]
  \def\sp{0.1}
  \def\face{+(0,-1) -- +(0,1)}
  \def\corner{circle(\sp * 0.707 * 2cm)}


  \draw [xshift=-1cm,scale=(1-\sp)] [unknownstyle]\quadrant
     node {$\supp(\oc c)[0]$} (0,-1) node [below] {$\arry S[0]$};
  \draw [xshift=1cm,scale=(1-\sp)] [unknownstyle]\quadrant
     node {$\supp(\oc c)[1]$} (0,-1) node [below] {$\arry S[1]$};
  \draw [scale=(1-\sp)] [iterline]\face;
  \path (0,-1) node [below] {$\oc c$};

\end{tikzpicture}
}

    \\ \cline{3-4}
    \multicolumn{2}{c|}{}                                       &
    \multicolumn{2}{c|}{$\arry S[0] = \{\supp(\oc c)[0]\}?$}    \\ \cline{3-4}
    \multicolumn{2}{c|}{}                                       &
    \textbf{true} & \textbf{false}                              \\ \hline
    \parbox[t]{3mm}{\multirow{2}{*}[-2.5em]{\rotatebox[origin=c]{90}%
    {$\arry S[1] = \{\supp(\oc c)[1]\}?$}}} &
    \parbox[t]{3mm}{\rotatebox[origin=c]{90}{\textbf{true}}}    &
    \raisebox{-.5\height}{{\small
\usetikzlibrary{trees}
\usetikzlibrary{quadtree}
\begin{tikzpicture}[line join=round,scale=1.2]

  \tikzstyle{unknownstyle}=[dotted]
  \tikzstyle{positionstyle}=[draw=yellow,fill=yellow!20]
  \tikzstyle{leafstyle}=[draw=blue!10,ultra thin,fill=blue!10]
  \tikzstyle{augstyle}=[draw=green,fill=green!20]
  \tikzstyle{branchstyle}=[draw=red,dashed,fill=red!20]
  \tikzstyle{iterline}=[draw=red,densely dashed]
  \tikzstyle{callline}=[very thick,draw=green!80!black,double]
  \def\sp{0.1}
  \def\face{+(0,-1) -- +(0,1)}
  \def\corner{circle(\sp * 0.707 * 2cm)}

  \draw[draw=none,use as bounding box](-1.9cm,-1.4cm) rectangle (1.9cm,1.5cm);
  \draw [xshift=-1cm,scale=(1-\sp)] [leafstyle]\quadrant
     node {$\lsupp(\oc c)[0]$};
  \draw [xshift=1cm,scale=(1-\sp)] [leafstyle]\quadrant
     node {$\lsupp(\oc c)[1]$};
  \draw [scale=(1-\sp)] [callline]\face;

\end{tikzpicture}
}

    \raisebox{-.5\height}{{\small
\usetikzlibrary{trees}
\usetikzlibrary{quadtree}
\begin{tikzpicture}[line join=round,scale=1.2]

  \tikzstyle{unknownstyle}=[dotted]
  \tikzstyle{positionstyle}=[draw=yellow,fill=yellow!20]
  \tikzstyle{leafstyle}=[draw=blue!10,ultra thin,fill=blue!10]
  \tikzstyle{augstyle}=[draw=green,fill=green!20]
  \tikzstyle{branchstyle}=[draw=red,dashed,fill=red!20]
  \tikzstyle{iterline}=[draw=red,densely dashed]
  \tikzstyle{callline}=[very thick,draw=green!80!black,double]
  \def\sp{0.1}
  \def\face{+(0,-1) -- +(0,1)}
  \def\corner{circle(\sp * 0.707 * 2cm)}

  \draw[draw=none,use as bounding box](-1.9cm,-1.4cm) rectangle (1.9cm,1.5cm);

  \begin{scope} [xshift=-1cm,scale=(1-\sp)]
    \draw [morton order,quadtree absolute space=\sp] coordinate
      child [missing]
      child { [leafstyle]\quadrant
              { node [left=-.45cm]
                {$\lsupp(\oc c)[0]$}}}
      child [missing]
      child { [leafstyle]\quadrant
              { node [left=-.45cm]
                {$\lsupp(\oc c)[1]$}}};
  \end{scope}
  \draw [xshift=1cm,scale=(1-\sp)] [leafstyle]\quadrant
     node {$\lsupp(\oc c)[2]$};
  \draw [scale=(1-\sp)] [callline]\face;
\end{tikzpicture}
}

    &
    \parbox[t]{3mm}{\rotatebox[origin=c]{90}{\textbf{false}}}   &
    \raisebox{-.5\height}{{\small
\usetikzlibrary{trees}
\usetikzlibrary{quadtree}
\begin{tikzpicture}[line join=round,scale=1.2]

  \tikzstyle{unknownstyle}=[dotted]
  \tikzstyle{positionstyle}=[draw=yellow,fill=yellow!20]
  \tikzstyle{leafstyle}=[draw=blue!10,ultra thin,fill=blue!10]
  \tikzstyle{augstyle}=[draw=green,fill=green!20]
  \tikzstyle{branchstyle}=[draw=red,dashed,fill=red!20]
  \tikzstyle{iterline}=[draw=red,densely dashed]
  \tikzstyle{callline}=[very thick,draw=green!80!black,double]
  \def\sp{0.1}
  \def\face{+(0,-1) -- +(0,1)}
  \def\corner{circle(\sp * 0.707 * 2cm)}

  \draw[draw=none,use as bounding box](-1.9cm,-1.4cm) rectangle (1.9cm,1.5cm);

  \draw [xshift=-1cm,scale=(1-\sp)] [leafstyle]\quadrant
     node {$\lsupp(\oc c)[0]$};
  \begin{scope} [xshift=1cm,scale=(1-\sp)]
    \draw [morton order,quadtree absolute space=\sp] coordinate
      child { [leafstyle]\quadrant
              { node [right=-.4cm]
                {$\lsupp(\oc c)[1]$}}}
      child [missing]
      child { [leafstyle]\quadrant
              { node [right=-.4cm]
                {$\lsupp(\oc c)[2]$}}};
  \end{scope}
  \draw [scale=(1-\sp)] [callline]\face;
\end{tikzpicture}
}

    \raisebox{-.5\height}{{\small
\usetikzlibrary{trees}
\usetikzlibrary{quadtree}
\begin{tikzpicture}[line join=round,scale=1.2]

  \tikzstyle{unknownstyle}=[dotted]
  \tikzstyle{positionstyle}=[draw=yellow,fill=yellow!20]
  \tikzstyle{leafstyle}=[draw=blue!10,ultra thin,fill=blue!10]
  \tikzstyle{augstyle}=[draw=green,fill=green!20]
  \tikzstyle{branchstyle}=[draw=red,dashed,fill=red!20]
  \tikzstyle{iterline}=[draw=red,densely dashed]
  \tikzstyle{callline}=[very thick,draw=green!80!black,double]
  \def\sp{0.1}
  \def\face{+(0,-1) -- +(0,1)}
  \def\corner{circle(\sp * 0.707 * 2cm)}

  \draw[draw=none,use as bounding box](-1.9cm,-1.4cm) rectangle (1.9cm,1.5cm);

  \begin{scope} [xshift=-1cm,scale=(1-\sp)]
    \draw [morton order,quadtree absolute space=\sp] coordinate
      child { [unknownstyle]\quadrant}
      child { [unknownstyle]\quadrant
              { (0,-1) node [below] {$\arry H_0[1]$}}}
      child { [unknownstyle]\quadrant}
      child { [unknownstyle]\quadrant
              { (0,1) node [above] {$\arry H_0[3]$}}};
  \end{scope}
  \begin{scope} [xshift=1cm,scale=(1-\sp)]
    \draw [morton order,quadtree absolute space=\sp] coordinate
      child { [unknownstyle]\quadrant
              {  (0,-1) node [below] {$\arry H_1[0]$}}}
      child { [unknownstyle]\quadrant}
      child { [unknownstyle]\quadrant
              { (0,1) node [above] {$\arry H_1[2]$}}}
      child { [unknownstyle]\quadrant};
  \end{scope}
  \draw [yshift=0.5cm,scale=0.5*(1-\sp)^2] [iterline]\face;
  \draw [yshift=-0.5cm,scale=0.5*(1-\sp)^2] [iterline]\face;
  \draw [scale=0.5] [iterline]\corner;

\end{tikzpicture}
}

  \end{tabular}
  \caption{%
    Illustrations of \fxn{Iterate\_interior} for $\dim(\oc c)=d$ (top) and
    $0 < \dim(\oc c) < d$ (bottom).
    Dashed red lines indicates the argument point $\oc c$ of
    \fxn{Iterate\_interior}.  The dotted squares indicate the arrays $\arry
    S[i]$ of leaves in $\cO$ that are descendants of $\supp(\oc c)[i]$.  If
    $\supp(\oc c)[i]$ is a leaf (which is determined by testing whether $\arry
    S[i] = \{\supp(\oc c)[i]\}$), then it is in $\lsupp(\oc c)$, which is
    shown with the solid color blue;
    otherwise, $\arry S[i]$ is split using \fxn{Split\_array}.  If a leaf has
    been found, the user-supplied callback function is executed, which we
    indicate with double color green lines; otherwise, \fxn{Iterate\_interior} is
    called for each point in the child partition set $\part(\oc c)$: the
    support sets for these points are found in the children of the octants in
    $\supp(\oc c)$, and the arrays of leaves contained in them
    are found in the sets created by \fxn{Split\_array}.%
  }%
  \figlab{itervolfig}%
\end{figure}%
\begin{figure}\centering
  \includegraphics{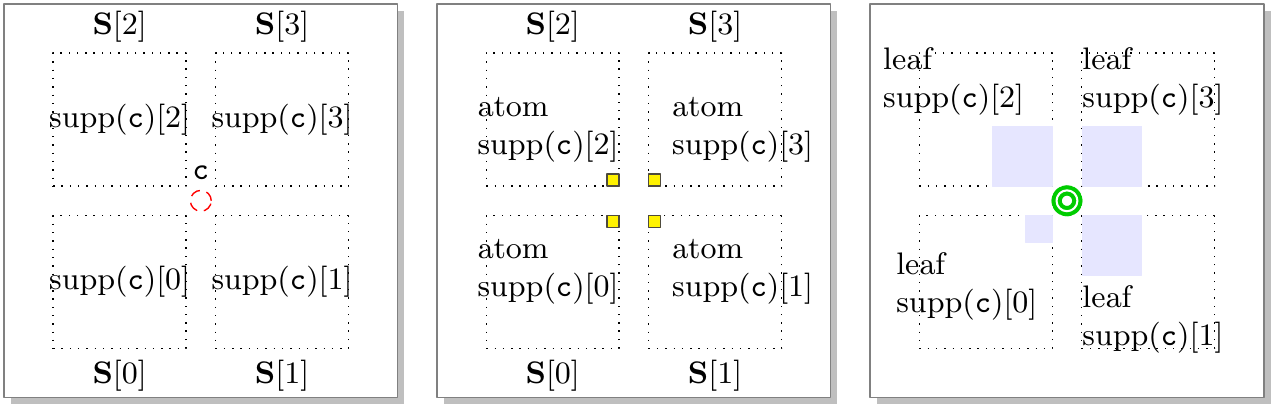}
  \caption{%
    An illustration of \fxn{Iterate\_interior} when $\dim(\oc c)=0$, using the
    same color conventions as \figref{itervolfig}.  The arguments of
    \fxn{Iterate\_interior} are in the left panel.  The small squares (middle
    panel) indicate octants  in $\asupp(\oc c)$: these must be descendants of
    the leaves in $\lsupp(\oc c)$, so we use $\asupp(\oc c)[i]$ as a key to
    search for $\lsupp(\oc c)[i]$ in $\arry S[i]$.  Once $\lsupp(\oc c)$ is
    found, the callback is executed (right panel).
  }\figlab{figitercorner}
\end{figure}

\begin{remark}\upshape
  \remlab{iterorder}
  An instance of \fxn{Iterate\_interior} may call multiple recursive copies of
  itself: one for each point in the child partition set $\part(\oc c)$ (see
  the loop starting on line \ref{interiorloop}).  We have not yet specified an
  order for these recursive calls.  In our implementation, we have chosen to
  order these calls by decreasing point dimension.  This guarantees that, if
  $\oc e\in \bound(\oc c)$, then the callback is executed for $\oc c$ before
  it is executed for $\oc e$.  We take advantage of this order in designing a
  node-numbering algorithm in \secref{lnodes}.
\end{remark}

\subsection{Iterating on a forest}

To iterate on the complete forest, each process must call
\fxn{Iterate\_interior} for the closure set of the root of every tree.  This
is listed in \fxn{Iterate} (\algref{iter}).  Asymptotic analysis of the
performance of \fxn{Iterate} is given in \appref{iteranalysis}: it shows that,
in general, \fxn{Iterate} executes in $\cO(((\max_{\oc o\in \cO} \oc o.\ix l)
+ N_{\pr
  p})\log N_{\pr p})$ time, but if the refinement pattern of the octrees in
the forest is uniform or nearly so, then it executes in $\cO(\log P + N_{\pr
  p})$ time.
%
\begin{algorithm}
  \caption{\fxn{Iterate} (\fxn{callback}, octant array $\arry{G}_{\pr p}^d$)}
  \alglab{iter}

  \Input{%
    $\arry G_{\pr p}^d$ is the full ghost layer \eqref{eqn:ghost}.
  }

  \Result{%
    if $\oc c\in\localpart$ (or $\overline{\localpart}$), $\lsupp_{\pr p}(\oc
    c)$ is computed and passed to \fxn{callback}.
  }

  \algorule

  \ForAll{$0 \leq t < K$}{%
    $\arry G^t \leftarrow \arry{G}_{\pr p}^d \cap \treeoctants$
    \Comment{%
      subset of the ghost layer for tree $t$ ($O(\log |\arry{G}_{\pr p}^d|)$)
    }

    $\arry S^t \leftarrow \localtreeoctants \cup \arry G^t$
    \Comment{%
      \ $\localtreeoctants$ and $\arry G^t$ are already ordered: no sorting is
      necessary%
    }

  }

  \ForAll{$\oc c\in \bigcup_{0\leq t < K} \clos(\Root(t))$}{%
    \ForAll{$0 \leq i < |\supp(\oc c)|$}{%

      $t\leftarrow \supp(\oc c)[i].\ix t$
      \Comment{Each $\supp(\oc c)[i]$ is the root of an octree}

      $\arry {U}[i] \leftarrow \arry S^t$
    }
    \fxn{Iterate\_interior} ($\oc c$, $\arry U$, \fxn{callback})
  }
\end{algorithm}

\subsection{Implementation} 
\seclab{iterimpl}

The implementation of \fxn{Iterate} in \pforest has some differences from the
presented algorithm to optimize performance.  \fxn{Iterate\_interior} is
implemented in while-loop form to keep the stack from growing: all space
needed for recursion (which is $O(\lmax)$) is pre-allocated on the heap.  We
noticed in early tests that \fxn{Split\_array} can be called with the same
arguments multiple times during a call to \fxn{Iterate}.  To avoid some of
this recomputation, we keep an $O(\lmax)$ fixed-size cache of the array
aliases produced by \fxn{Split\_array}.  We allow the user to specify a
separate callback function for each dimension, so that extra recursion can be
avoided.  If, for example, the callback only needs to be executed for faces,
then an instance of \fxn{Interate\_interior} operating on $\oc c$ will only
call a recursive copy for $\oc e\in \part(\oc c)$ if $\dim(\oc
e)\geq d-1$.

\section{A use case for the iterator: higher-order nodal basis construction}

\seclab{lnodes}

Thus far, we have developed algorithms for distributed forests
with no special regard for numerical applications.  In this section, we use
our framework to perform a classic but complex task necessary for finite
element computations, namely the globally unique numbering of degrees of
freedom for a continuous finite element space over hanging-node meshes.  We
call it \fxn{Lnodes} in reference to (Gau\ss-)Lobatto nodes, which means that
some nodes are located on element boundaries and are thus shared between
multiple elements and/or processes, which presents some interesting
challenges.

Hanging-node data structures have been discussed as early as 1980
\cite{RheinboldtMesztenyi80} and adapted effectively for higher-order spectral
element computations \cite{FischerKruseLoth02, SertBeskok06}.
Special-purpose data structures and interface routines have been defined for
many discretization types built on top of octrees, including piecewise linear
tensor-product elements \cite{AkcelikBielakBirosEtAl03,
BursteddeStadlerAlisicEtAl13} and discontinuous spectral elements
\cite{WilcoxStadlerBursteddeEtAl10}.  The \dealii finite element software
\cite{BangerthHartmannKanschat07} uses yet another mesh interface
\cite{BangerthBursteddeHeisterEtAl11}.  In our presentation of \fxn{Lnodes},
we hope to show that the \fxn{Iterate} approach is sufficiently generic that
it could be used to efficiently construct any of these data structures.

\subsection{Concepts}



%
In a hexahedral $n$-order nodal finite element, the Lagrangian basis functions
and the degrees of freedom are associated with $(n + 1)^3$ \emph{$Q^n$-nodes}
located on a tensor grid of locations in the element.  There is one node at
each corner, $(n - 1)$ nodes in the interior of each edge, $(n - 1)^2$ nodes
in the interior of each face, and $(n - 1)^3$ nodes in the interior of the
element, as in \figref{hanging2} (left).  If we endow each leaf in a forest of
octrees with $Q^n$-nodes, we get $N \times (n+1)^3$ \emph{element nodes}.
$Q^n$-nodes are numbered lexicographically, and element-local nodes are then
numbered to match the order of their associated leaves.  The basis functions
associated with the element nodes span a discontinuous approximation space.

\begin{figure}\centering
  \includegraphics{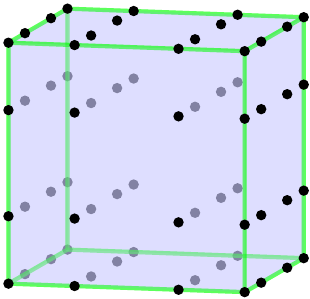}\hfill
  \includegraphics{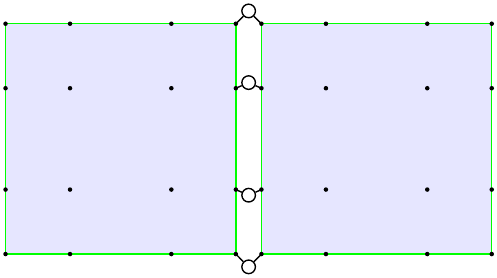}\hfill
  \includegraphics{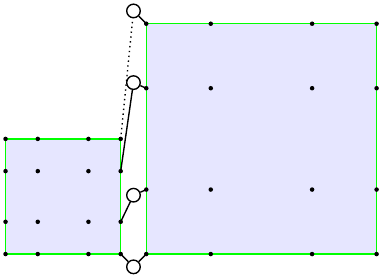}
  \caption{%
    (left) $Q^n$-nodes for $n=3$, with one node at each corner, $n-1$ nodes on
    (the interior of) each edge, $(n-1)^2$ nodes on each face, and $(n-1)^3$
    nodes in each element.
    (middle, right) For both conformal and non-conformal
    interfaces, each element node corresponds to exactly one global node.
    (right) At non-conformal interfaces, an element may reference a global node
    remotely, as the small element references the top node.%
  }%
  \figlab{hanging2}%
\end{figure}%

We want to create a nodal basis for a $C^0$-conformal approximation space on
$\Omega$ such that the restriction of the space to any leaf is spanned by the
$Q^n$-nodes' basis functions.  The nodes for the continuous basis functions
are called \emph{global nodes}.  Each of the element nodes in the interior of
a leaf can be associated with a unique global node, but on the boundary of a
leaf, element nodes from multiple leaves may occupy the same location: in this
case, the two element nodes are associated with the same global node, as in
\figref{hanging2} (middle).  For non-conformal interfaces, the element nodes
of the smaller leaves are not at the same locations as those of the larger
leaf, but they cannot introduce new degrees of freedom, because every function
in the space, when restricted to the non-conformal interface, must be
representable using the larger leaf's basis functions.  Conceptually, we can
place the global nodes at the locations of the larger leaf's nodes and
associate each element node from the smaller leaves with a single global node,
as shown in \figref{hanging2} (right).  In reality, the value of a function at
an element node on a non-conformal interface must be interpolated from the
values at multiple global nodes, but the conceptual one-to-one association
between a leaf's element nodes and global nodes is sufficient, in that it
identifies all global nodes whose basis functions are supported on that leaf.

It is important to note that an element node of a leaf $\oc o$ may reference a
global node that is contained in the domain of a point $\oc c$ that is outside
the closure $\Dom(\oc o)$, and that $\oc o$ is therefore not in the set of
adjacent leaves $\lsupp(\oc c)$ defined in \eqref{eqn:lsupp}.  In this
situation, we say that $\oc o$ \emph{remotely references} the global nodes in
$\oc c$.
%
Formally, a leaf $\oc o$ remotely references a point $\oc c$ in the
global partition set $\partoctants$ if
\begin{equation}\eqnlab{remote}
  \oc c\not\in\lsupp(\oc c)\text{ and }\exists \oc e
  \text{ such that }
  \oc o\in\lsupp(\oc e)\text{ and }\oc c\in\bound(\oc e).
\end{equation}
%
This relationship is also shown in \figref{hanging2} (right). 
From this definition, we can see that the global nodes referenced by leaves in
$\localoctants$ will be contained in the closed locally relevant partition
$\overline{\localpart}$ \eqref{eqn:closedpart}.  We note that a point $\oc
c$ can be remotely referenced only if $\dim(\oc c) < d - 1$.


\subsection{Data structures}

On process $\pr p$, we can represent the global nodes using an array $\arry
{N}_{\pr p}$ and the element nodes using a double array $\arry {E}_{\pr p}$,
where $\arry {E}_{\pr p}[j][k]$ maps the $k$th element node of
$\localoctants[j]$ to its global node in $\arry {N}_{\pr p}$.
$\arry {N}_{\pr p}$ and $\arry {E}_{\pr p}$ only reference locally relevant
global nodes and thus implement fully distributed parallelism.

In presenting the \fxn{Lnodes} algorithm, we consider a global node $\oc g$ to
have the following data fields:
%
  %
  \begin{itemize}
    \item
      $\oc g.i$:
      the globally unique \emph{index} of this node,
    \item
      $\oc g.\pr p$: the \emph{process} that owns $\oc g$ for the
      purposes of scatter/gather communication of node values,
    \item
      $\oc g.\st S_{\rshare}$:
      the \emph{sharer set} of all processes that reference this node.%
  \end{itemize}%
%
We include the sharer set $\st S_{\rshare}$ so that, in addition to the
scatter/gather communication paradigm, the global nodes can also be used in
the share-all paradigm, wherein any process that shares a node can send
information about that node to all other processes that share that node.%
\footnote{%
  In \pforest, the implementation differs: rather than the per-node lists $\oc
  g.\st S_{\rshare}$, we store lists of nodes that are shared by each
  process, which is a more useful layout for filling communication
  buffers.%
} %
If each process generates new information about a node, the former paradigm
requires two rounds of communication for information to disseminate, one
gather and one scatter, while the latter requires one round, but with an
increased number of messages.  Each paradigm can be faster than the other,
depending on communication latency, bandwidth, and other factors.  We tend to
place the highest weight on latency, hence our added support for share-all.

%

\begin{remark}\upshape
  Most applications do not require higher-order finite element nodes, but the
  \fxn{Lnodes} data structure can be used in much more general settings.  In
  particular, the \fxn{Lnodes} data structures for $n=2$ assign one unique
  global index to every point in $\overline{\mathcal{P}_{\pr p}}$, and a map
  from each leaf to the points in its closure.  If one symmetrizes these
  mappings, i.e., if one saves the leaf support sets $\lsupp_{\pr p}(\oc c)$
  for $\oc c\in\overline{\mathcal{P}_{\pr p}}$ generated by \fxn{Iterate},
  then one has essentially converted the forest-of-octrees data structures
  into a graph-based unstructured mesh format with $O(1)$ local topology
  traversal.  This format is typical of generic finite element libraries.
  \fxn{Lnodes} can therefore serve as the initial step in converting a forest
  of octrees into the format of an external library, with the remaining steps
  requiring little or no communication between processes. 
\end{remark}

\subsection{Assigning global nodes}
\seclab{gnumsec}

We want global nodes to be numbered independently of the number of processes
$P$ and the partition of the leaves.
For this reason, it is useful for each global node to be owned by one leaf,
because a partition-independent order is then induced by combining
lexicographic ordering of element nodes with the total octant order (see
\algref{comparison}).  This computation is shown in \algref{simplenum}, which
assumes that we have already determined which leaf $\localoctants[j]$ owns
each global node $\oc g$, and temporarily stored that leaf's index in the
global node's $\oc g.i$ field.

\begin{algorithm}
  \caption{\fxn{Global\_numbering} (\dtype{global node array}\ $\arry{N}_{\pr p}$,
                                    \dtype{double array} $\arry{E}_{\pr p}$)}
  \alglab{simplenum}

  \Input{%
    $\forall\oc g\in\arry{N}_{\pr p}$, $\oc g.\pr p=\pr q$ such that the leaf
    that owns $\oc g$ is in $\cO_{\pr q}$; if $\oc g.\pr p = \pr p$, then $\oc
    g.\st S_{\rshare}$ is correct and $\oc g.i=j$ is (temporarily abused as)
    the index of the leaf that owns $\oc g$, $\localoctants[j]$.%
  }

  \Result{%
    correct global node data for all $\oc g\in\arry{N}_{\pr p}$.
  }

  \algorule 

  integer array $\arry M[|\arry N_{\pr p}|]$
  \Comment{temporarily stores the local indices of global nodes}


  $m \leftarrow 0$
  \Comment{the number of global nodes owned by $p$}

  \For{$j=0$ \algto $|\localoctants| - 1$}{%
    \For{$l = 0$ \algto $(n+1)^d - 1$}{%
      $k\leftarrow\arry{E}_{\pr p}[j][l]$
      \Comment{index in $\arry N_{\pr p}$ of the global node referenced by
        this element node}

      \If{$\arry N_{\pr p}[k].\pr p=\pr p$ \algand $\arry N_{\pr p}[k].i=j$}{%
        $\arry M[k]\leftarrow m$
        \Comment{$\oc g$'s index among the locally owned nodes}

        $m\leftarrow m + 1$
      }
    }
  }

  $\arry t\leftarrow$ \fxn{Prefix\_sums}(\fxn{Allgather}($m$))
  \Comment{%
    \ $\arry t[\pr q]$ is the offset to the first node owned by $\pr q$%
  }

  \ForAll{$0 \leq k < |\arry N_{\pr p}|$}{%
    $\oc g\leftarrow \arry{N}_{\pr p}[k]$

    \eIf{$\oc g.\pr p=\pr p$}{%
      $\oc g.i \leftarrow \arry M[k] + \arry t[\pr p]$
      \Comment{all fields of $\oc g$ are now complete}
      \label{gnumsend}

      send $\oc g$ to each $\pr q\in \oc g.\st S_{\rshare}$
      \Comment{in practice, grouped into one message per process}
    }{%
      receive updated $\oc g$ from $\oc g.\pr p$
    }
  }
\end{algorithm}%

We need a policy for assigning the ownership of nodes to leaves.
%
We assign point $\oc c\in\partoctants$ and its
nodes to the first leaf $\oc o$ in $\lsupp(\oc c)$,
\begin{equation}\eqnlab{owner}
  \owner(\oc c):=\min\ \lsupp(\oc c).
\end{equation}
%
%
In the next subsection, we will show how this assignment policy allows for the
global nodes to be constructed without any more communication between
processes than the communication in \fxn{Global\_numbering}.

\subsection{The \fxn{Lnodes} algorithm}
\seclab{lnodesalg}

The \fxn{Lnodes} algorithm (\algref{lnodes}) creates the global nodes and
element nodes by iterating a callback \fxn{Lnodes\_callback} over all points
in the closed locally relevant partition $\overline{\localpart}$, which sets
up a call to \fxn{Global\_numbering} (\algref{simplenum}).

\begin{algorithm}
  \caption{\fxn{Lnodes} (integer $n$, ghost layer $\arry{G}_{\pr p}^d$)}
    \alglab{lnodes}

  \Input{%
    $n>0$, the order of the nodes; 
    the full ghost layer $\arry{G}_{\pr p}^d$ \eqref{eqn:ghost}.
  }
  \Result{%
    a double array $\arry E_{\pr p}$ of $N_{\pr p}\times (n+1)^d$ indices that
    maps element nodes to global nodes; 
    an array $\arry N_{\pr p}$ of global nodes.
  }

  \algorule

  global node array $\arry N_{\pr p}\leftarrow\emptyset$

  integer double array $\arry E_{\pr p}[|\localoctants|][(n + 1)^d]$

  \fxn{Iterate} (\fxn{Lnodes\_callback}, $\arry G_{\pr p}^d$)
  \Comment{initialize $\arry N_{\pr p}$, $\arry E_{\pr p}$}

  \fxn{Global\_numbering} ($\arry{N}_{\pr p}$, $\arry{E}_{\pr p}$)
  \Comment{finalize $\arry{N}_p$}
\end{algorithm}

Given the assumptions of \fxn{Global\_numbering}, \fxn{Lnodes\_callback} has
to accomplish the following for each global node $\oc g$ located at a point
$\oc c\in\overline{\localpart}$ visited by \fxn{Iterate}:
\begin{enumerate}
  \item
    determine the process $\oc g.\pr p$;
  \item
    if $\oc g.\pr p = \pr p$, determine the index $j$ of $\owner(\oc c)$
    in $\localoctants$;
  \item
    if $\oc g.\pr p = \pr p$, determine the set $\oc g.\st S_{\rshare}$ of
    processes that share $\oc g$;
  \item
    complete the entries in $\arry E_{\pr p}$ that refer to $\oc g$.
\end{enumerate}
We will not give pseudocode for \fxn{Lnodes\_callback} here; we only hope to
convince the reader that the information that \fxn{Iterate} supplies to the
\fxn{Lnodes\_callback}---i.e., the local leaf support set $\lsupp_{\pr p}(\oc
c)$ for each point $\oc c$ in the closed locally relevant partition
$\overline{\localpart}$---is sufficient to accomplish the above listed tasks.

\paragraph{1. Determine $\oc g.\pr p$}
The policy that defines $\owner(\oc c)$ \eqref{eqn:owner} guarantees that $\oc
c$ and its nodes will be owned by a leaf in the partition of the first process
$\pr q$ such that $\overline{\Omega_\pr q}$ intersects $\Dom(\oc c)$.  This
means that each process $\pr p$ that references $\oc c$ can determine the
processes that own all of the nodes it references, even if $\lsupp_{\pr p}(\oc
c)$ is incomplete (see \algref{determine}).

\begin{algorithm}
  \caption{\fxn{Determine\_owner\_process} (point $\oc c$)}
  \alglab{determine}
  \Result{%
    if $\oc c\in\partoctants$, the process that owns $\oc c$ and its global
    nodes.
  }

  \algorule

  \For%
  (\Comment{this set always contains one point})%
  {%
    $\oc e \in \{\oc c\}\cup\part(\oc c)$ \textbf{such that}
    $\dim(\oc e) = 0$}{
    \Return{$\min \{\locate(\oc a):\oc a\in\asupp(\oc e))\}$}
  }
\end{algorithm}

\paragraph{2. If $\oc g.\pr p=\pr p$, determine $\owner(\oc c)$}
Suppose $\owner(\oc c)=\oc o\in\localoctants$: $\oc o$ and $\oc c$ intersect,
$\overline{\Dom(\oc o)}\cap \Dom(\oc c) \neq \emptyset$.  By definition
\eqref{eqn:lsupp}, $\oc c$ must be in $\localpart$, so by
Proposition~\ref{prop:completeinfo}, $\lsupp_{\pr p}(\oc c)=\lsupp (\oc c)$.
Therefore, $\owner(\oc c)$ will be in $\lsupp_{\pr p}(\oc c)$, and its index in
$\localoctants$ was calculated by \fxn{Iterate}, so we can set $\oc g.i$ equal
to that index for each global node $\oc g$ located at $\oc c$.

\paragraph{3. If $\oc g.\pr p=\pr p$, determine $\oc g.\st S_{\rshare}$}%
We use both Proposition~\ref{prop:completeinfo} and the 2:1 balance condition
to design an algorithm called \fxn{Reconstruct\_remote} to reconstruct the
octant data for all leaves that remotely reference \eqref{eqn:remote} the
point $\oc c$ and its nodes (\algref{reconstruct}, \figref{reconstruct}).  By
locating the processes that overlap the octants in $\lsupp_{\pr p}(\oc c)$ and
the octants returned by \fxn{Reconstruct\_remote}($\oc c$, $\lsupp_{\pr
  p}(\oc c))$, process $\pr p$ can determine all processes that reference
$\oc c$'s nodes.

\begin{figure}\centering
  \includegraphics{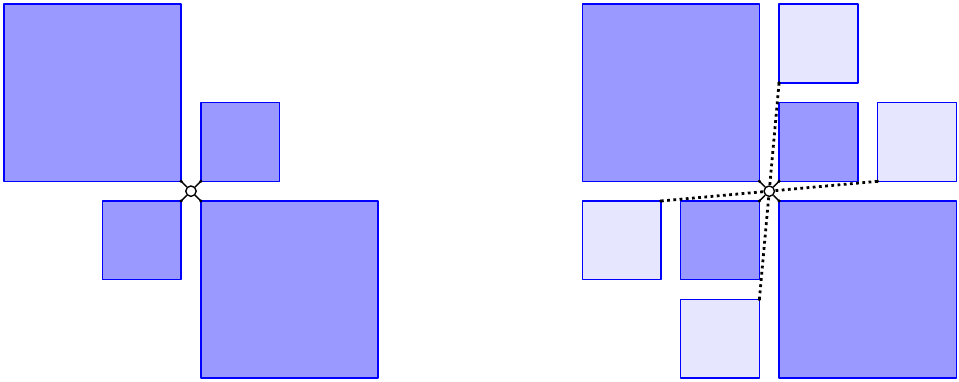}
  \caption{%
    Because of the 2:1 condition, \fxn{Reconstruct\_remote}
    (\algref{reconstruct}) can use the octants in $\lsupp(\oc c)$ (left) to
    reconstruct remotely referencing leaves (right).
  }
  \figlab{reconstruct}
\end{figure}

\begin{algorithm}
  \caption{\fxn{Reconstruct\_remote} (point $\oc c$, octant set $\lsupp(\oc
    c)$)}
  \alglab{reconstruct}
  \Input{%
    the leaf support set $\lsupp(\oc c)$ \eqref{eqn:lsupp}.
  }
  \Result{%
    the set $\st R$ of all octants that are leaves that remotely reference $\oc
    c$.
  }

  \algorule

  $\st R\leftarrow\emptyset$

  \ForAll%
  (\Comment{%
    for every leaf's boundary point \dots
  })%
  {$\oc e\in\bigcup \bound(\lsupp(\oc c))$}{%
    \If%
    (\Comment{%
      \dots that is adjacent to $\oc c$
    })%
    {$\oc c\in\bound(\oc e)$}{%
      \ForAll%
      (\Comment{%
        for every octant $\oc s$ adjacent to a child of $\oc e$
      })%
      {$\oc s\in \bigcup \supp(\child(\oc e))$}{%
        \If%
        (\Comment{%
          if $\oc s$ is not a descendant of a leaf
        })%
        {\algnot $\oc s\in \bigcup\desc(\lsupp(\oc c))$}{%
          $\st R\leftarrow \st R \cup \{\oc s\}$
          \Comment{$s$ remotely references $\oc c$ \eqref{eqn:remote}}
        }
      }
    }
  }

  \Return{$\st R$}
\end{algorithm}

\paragraph{4. Complete the references to $\oc g$ in $\arry{E}_{\pr p}$}%
For each leaf $\oc o\in \lsupp_{\pr p}(\oc c)\cap\localoctants$ \fxn{Iterate}
provides the index $j$ in $\localoctants$ such that $\localoctants[j]=\oc o$,
so determining the values of $k$ such that $\arry{E}_{\pr p}[j][k]$ refers to
the global nodes at point $\oc c$ is a matter of comparing the orientation of
$\oc o$ and $\oc c$ relative to each other.
For each leaf $\oc r\in\localoctants$ that remotely references $\oc c$ and its
nodes, by definition~\eqref{eqn:remote} there must be a point $\oc e$ such
that $\oc c\in\bound(\oc e)$ and $\oc r\in\lsupp(\oc e)$.  The instance of
\fxn{Lnodes\_callback} called for the point $\oc e$ has the index $j$ such
that $\localoctants[j]=\oc r$ and can ``link'' it to $\oc c$, so that the
correct $\arry{E}_{\pr p}[j][\star]$ entries can be completed (see
\remref{iterorder}).

The previously presented algorithm \fxn{Nodes} \cite[Algorithm
21]{BursteddeWilcoxGhattas11} produces data structures equivalent to those
produced by \fxn{Lnodes} for $n=1$.  The ownership rule in
\fxn{Nodes}---associating each node with a unique level-$(\lmax + 1)$ octant
(i.e., allowing the octant data structure to be more refined than an atom
for storing nodes),
and assigning ownership based on the process whose range contains that
octant---is similar in principle to the ownership rule given in
\eqref{eqn:owner}.  \fxn{Nodes} does not have symmetric communication,
however, because it does not construct the neighborhood $\lsupp_{\pr p}(\oc
c)$ when it creates a node at $\oc c$, and so it cannot perform a calculation
like \fxn{Reconstruct\_remote}.  Since it does not deduce the presence of
remotely-sharing processes, \fxn{Nodes} requires a handshaking step, where the
communication pattern is determined.

\section{Performance evaluation}
\seclab{performance}

\begin{figure}
  \begin{center}
  \includegraphics[width=.38\columnwidth]{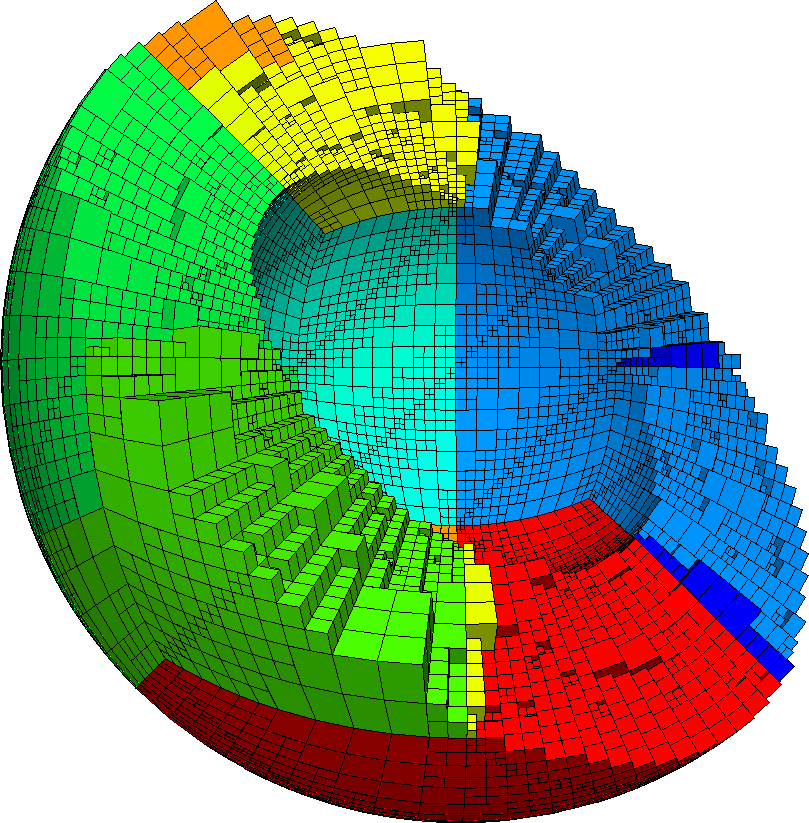}
  \hfill
  \includegraphics[width=.38\columnwidth]{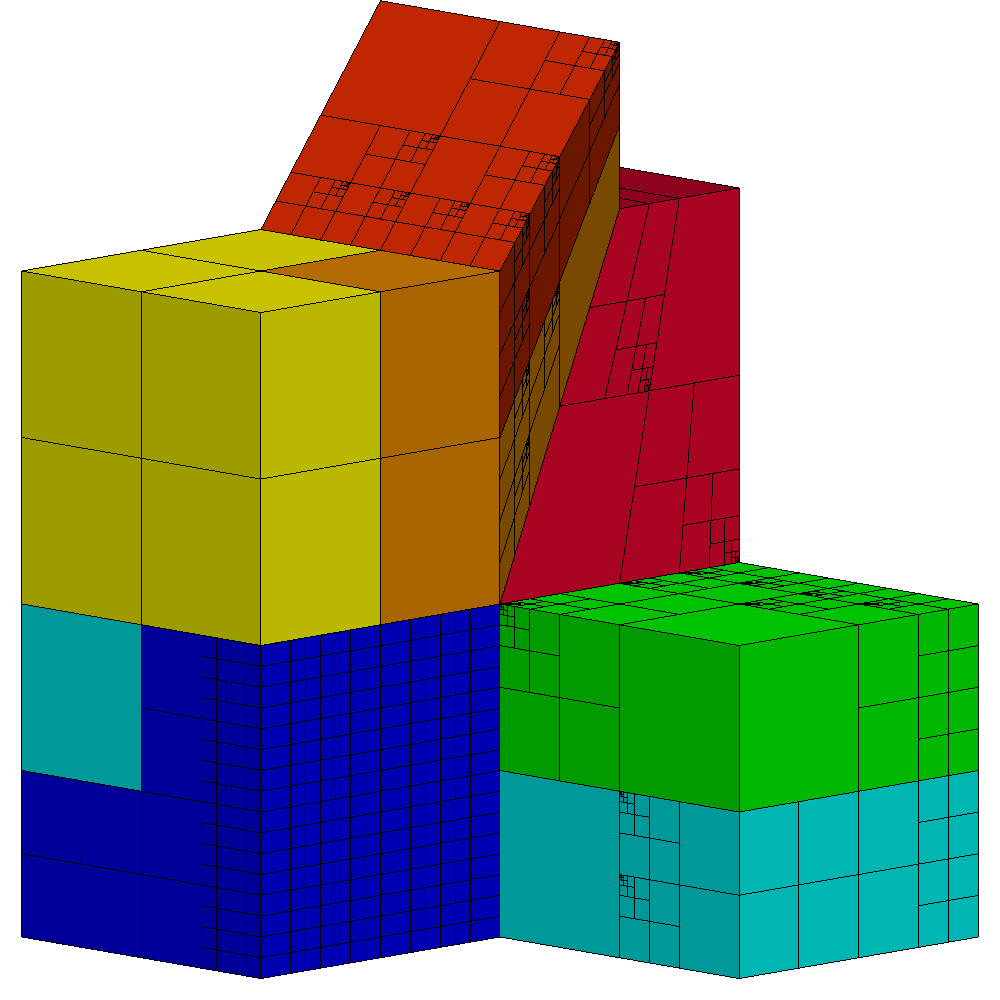}
  \end{center}
  \caption{%
    Example forests of octrees for nontrivial domain topologies.  (left)  A
    cutaway of a shell geometry, composed of 24 mapped octrees.  (right)  A
    collection of six rotated and mapped octrees connected in an irregular
    topology.  Both show adhoc refinement patterns; a 2:1-balance condition
    has been enforced in the left hand plot, but not the right.  Color
    indicates the partitions of different processes.%
  }%
  \figlab{connexamples}%
\end{figure}%

In this section we evaluate the efficiency and scalability of the algorithms
presented in this work as they have been implemented in \pforest.  The
parallel scalability is assessed on the Blue Gene/Q supercomputer JUQUEEN,
which is configured with 28,672 compute nodes, each with 16 GB of memory and
16 cores, for a total of 458,752 cores.  Additional concurrency is available
through simultaneous multithreading: each core has two instruction/integer
pipelines, and can issue one instruction to each of these pipelines per
cycle, as long as they come from different threads.  Where appropriate, we
will compare results for 16, 32, and 64 MPI processes per node (\pforest uses
only MPI for parallelism).
We have compiled the \pforest library and executables with IBM's XL C compiler
in version 12.1.

\subsection{Search}

To test the performance of \fxn{Search} (\algref{search}), we consider the
problem of identifying the leaves that contain a set of randomly generated
points.  We choose a spherical shell domain typical for simulations of earth's
mantle convection, with inner radius $r=0.55$ and outer radius $r=1$, as
illustrated in \figref{connexamples} (left).  For each test, we generate $M$
points at random, independently and uniformly distributed in the cube
containing the shell, and use \fxn{Search} as implemented by the \pforest
function \fxn{p8est\_search} to identify the leaves that contain them. 

The mappings $\varphi^t$ for the octrees \eqref{eqn:macro} are given
analytically for this domain.  In the callback that we provide to
\fxn{Search}, we have two tests to determine whether the mapped domain
$\Dom(\oc o)$ of an octant $\oc o$  contains a point $x$, one fast and
inaccurate in the sense of allowing false positive matches, the other slower
but accurate.  In the accurate test, the mapping $\varphi^t$ is inverted to
get the preimage $\xi$ of $x$, and a bounding box calculation determines
whether $\xi\in \dom(\oc o)$.  In the inaccurate test, the image $x_{\oc o}$
of the octant's center is computed, as well as an upper bound $r_{\oc o}$ on
the radius of the bounding sphere of $\Dom(\oc o)$, and we test whether
$|x-x_o|\leq r_o$.  In \fxn{Search} the accurate test is performed only when
$\oc o$ is a leaf. We perform our tests on a series of forests with increasing
numbers of leaves $N$, but with each forest refined so that the finest leaves
are four levels more refined than the coarsest.

\begin{figure}\centering
  \begin{tikzpicture}[baseline]
    \begin{groupplot}
      [
        group style={group size=2 by 1,horizontal sep=0.17cm,vertical sep=2cm},
        width=6.6cm,
        xmode=log,
        ymode=log,
        ymin=1.e-5,
        ymax=1.e3,
        ytickten={-2,-1,0,1},
        log basis y=64,
        xtickten={-3,-2,-1,0,1,2,3},
        log basis x=64,
        xticklabels={$64^{-3}$,,$64^{-1}$,,$64^1$,,$64^3$},
        xlabel=$M/P$,
        grid=both,
        cycle list name=p4est2search,
      ]

      \nextgroupplot[
        title={$M$ searches for one point,\\ runtime in seconds},
        title style={align=center,text width=6.4cm},
        only marks,
      ]

      \foreach \i/\j in {
        64/1,
        64/2,
        4k/1,
        4k/2,
        4k/3,
        256k/1,
        256k/2,
        256k/3%
      }%
      {
        \addplot table [
            x expr=\thisrowno{2}/\thisrowno{0},
            y index=3,
          ]
          {timings/tsearch-shell/tsrana.data.\i.d\j};
                  
      }
      \addplot[dotted,no markers,sharp plot,update limits=false] coordinates
      {(1.e-7,1.e-6) (1.e7,1.e8)};
      \addplot[dotted,no markers,sharp plot,update limits=false] coordinates
      {(1.e-5,1.e-6) (1.e9,1.e8)};
      \addplot[dotted,no markers,sharp plot,update limits=false] coordinates
      {(1.e-3,1.e-6) (1.e11,1.e8)};

      \nextgroupplot[
        title={One search for $M$ points,\\ runtime in seconds},
        title style={align=center,text width=6.4cm},
        yticklabels={},
        legend cell align=right,
        legend pos=outer north east,
        only marks,
      ]

      \foreach \i/\j/\iname\jname in {
        64/1/$P_0$/$D_0$,
        64/2/$P_0$/$D_1$,
        4k/1/$P_1$/$D_0$,
        4k/2/$P_1$/$D_1$,
        4k/3/$P_1$/$D_2$,
        256k/1/$P_2$/$D_0$,
        256k/2/$P_2$/$D_1$,
        256k/3/$P_2$/$D_2$%
      }%
      {
        \addplot table
          [
            x expr=\thisrowno{2}/\thisrowno{0},
            y index=4,
          ]
          {timings/tsearch-shell/tsrana.data.\i.d\j};
          \addlegendentryexpanded{\iname, \jname}
      }
      \addplot[dotted,no markers,sharp plot,update limits=false] coordinates
      {(1.e-7,1.e-6) (1.e7,1.e8)};
      \addplot[dotted,no markers,sharp plot,update limits=false] coordinates
      {(1.e-5,1.e-6) (1.e9,1.e8)};
      \addplot[dotted,no markers,sharp plot,update limits=false] coordinates
      {(1.e-3,1.e-6) (1.e11,1.e8)};
    \end{groupplot}
  \end{tikzpicture}
  \caption{%
    Scaling results for searching for $M$ points in a shell domain using
    \fxn{Search} as implemented by \fxn{p8est\_search}.  We examine various
    values of $M$, $P$, and $N$.  Three different values of $P$ are used:
    $P_0=64$, $P_1=64^2$, and $P_2=64^3 = 262144$.  Three
    different values of $N/P=D$ are used: $D_0\approx 15\mathrm{k}$, $D_1\approx
    122\mathrm{k}$, and $D_2 \approx 979\mathrm{k}$.  (left) $M$ separate calls of
    \fxn{Search} are used to find the $M$ points.  (right) One call of
    \fxn{Search} is used to find all $M$ points.  The dotted lines symbolize
    linear weak scaling; points on top of each other demonstrate the
    independence of the runtime from the local octant count $D_i$.  The
    largest number of octants reached is $2.568 \times 10^{11}$.
  }%
  \figlab{searchanalysis}
\end{figure}

In \figref{searchanalysis}, we present the scaling results for our tests.
Each of the $P$ processes must determine which of the $M$ points are in its
partition.  This means that each process must perform the inaccurate test at
least $M$ times.  This is why, for fixed values of $P$, we see a scaling with
$O(M)$.  Indeed, the fraction of points that fall in a given processes
partition is on average $1/P$, so for large values of $P$ the majority of the
runtime is spent on points that are not in the partition.  This is why, in
\figref{searchanalysis}, the number of leafs on a node $N/P$ has so little
effect on the runtime.
When we take advantage of the algorithm's ability to
search for multiple points simultaneously, however, the setup costs of the
inaccurate test, such as computing the bounding radius $r_o$, can be amortized
over multiple comparisons.  Hence we see significant speedup when searching
for multiple points simultaneously: in \figref{searchanalysis}, we see that for
large values of $P$ and for $M/P\geq 1$ the simultaneous search is roughly $64$
times faster than searching for the same points individually.

\subsection{Ghost}

We test the performance of ghost layer construction as implemented by the
\pforest function \fxn{p8est\_ghost}, on the irregular geometry shown in
\figref{connexamples} (right).  We again create a series of meshes with
increasing $N$ and four levels of difference between the coarsest and finest
leaves.  Our ghost layer construction uses \fxn{Find\_range\_boundaries}
(\algref{findrange}) to determine which processes' partitions border an octant
$\oc o$.  When a partition $\Omega_{\pr p}$ with $N_{\pr p}$ leaves is
well-shaped, we expect $O(N_{\pr p}^{2/3})$ of those leaves to be on the
boundary of $\Omega_{\pr p}$, so we consider $O((N/P)^{2/3})$ to be ideal
scaling.  In the algorithm \fxn{Add\_ghost} (\algref{ghost2}), which is called
for each boundary leaf, $\log(P)$ work is performed to determine a set of
potentially neighboring processes; the remaining leaves in the interior of the
domain are skipped without calling \fxn{Add\_ghost}, and so they should
contribute very little to the runtime of \fxn{Ghost}.  We therefore expect our
performance to be $O((N/P)^{2/3}\log P)$.

\begin{figure}\centering
  \begin{tikzpicture}[baseline]
    \begin{groupplot}
      [
        group style={group size=2 by 1,horizontal sep=1.5cm},
        width=6.7cm,
        xmode=log,
        ymode=log,
      ]

      \nextgroupplot[xmin=1,xmax=2,ymin=1,ymax=2,xtick={},ytick={},yticklabels={},xticklabels={},tick style={draw=none}]

      \nextgroupplot[
        axis y line*=right,
        axis x line=none,
        xmin=1.e1,
        xmax=1.e7,
        ytick={9.75e-5,1.3e-4,1.734e-4,2.312e-4,3.083e-4},
        ymajorgrids=true,
        yticklabels={},
        tick style={draw=none},
                    ]
      \addplot[draw=none] coordinates {(1.e1,3.81e-4) (1.e7,3.81e-4)};
      \addplot[draw=none] coordinates {(1.e1,9.75e-5) (1.e7,9.75e-5)};
    \end{groupplot}
    \begin{groupplot}
      [
        group style={group size=2 by 1,horizontal sep=1.5cm},
        width=6.7cm,
        xmode=log,
        ymode=log,
        cycle list name=p4est2,
      ]

      \nextgroupplot[
        title=\fxn{Ghost} runtime in seconds\strut,
        xlabel=$P$,
        xmin=4,
        ymax=10,
        legend columns=7,
        legend style={/tikz/every even column/.append style={column sep=5pt}},
        legend style={at={(1.11,-0.27)},anchor=north,legend cell align=left},
                    ]

      \addlegendimage{empty legend}
      \addlegendentry{$P$, 16-way:}

      \foreach \i in {16,128,1024,8192,65536,458752} {%
        \addplot+[only marks] table [x index=0, y index=6]
        {timings/lnodes-rotcubes-b9e7/timana.data.\i};
        \addlegendentryexpanded{$\i$}
      }

      \addlegendimage{empty legend}
      \addlegendentry{$P$, 32-way:}

      \foreach \i in {32,256,2048,16384,131072,917504} {%
        \addplot+[only marks] table [x index=0, y index=6]
        {timings/lnodes-rotcubes-b9e7/timana.data.\i};
        \addlegendentryexpanded{$\i$}
      }

      \addplot+[no markers,gray,solid] table [x expr=\thisrowno{0}, y index=6]
        {timings/lnodes-rotcubes-b9e7/timana.data.strong.ghost};

      \node [above] at (axis cs:16,0.005) {5.7k};
      \node [above] at (axis cs:16,0.015) {28k};
      \addplot[dotted,no markers,sharp plot,update limits=false] coordinates
      {(16,0.0151785) (128,0.0037946)};
      \node [above] at (axis cs:16,0.064) {240k};
      \addplot[dotted,no markers,sharp plot,update limits=false] coordinates
      {(16,0.063772) (1024,0.004024)};
      \node [above] at (axis cs:16,0.297) {2M};
      \addplot[dotted,no markers,sharp plot,update limits=false] coordinates
      {(16,0.29741) (8192,0.004647)};
      \node [above] at (axis cs:16,1.46) {16M};
      \addplot[dotted,no markers,sharp plot,update limits=false] coordinates
      {(16.,1.45934) (65536,0.0057005)};
      \node [above] at (axis cs:128,1.86) {130M};
      \addplot[dotted,no markers,sharp plot,update limits=false] coordinates
      {(128.,1.85936) (458752,0.007939)};
      \node [above] at (axis cs:1024,2.12) {1B};
      \addplot[dotted,no markers,sharp plot,update limits=false] coordinates
      {(1024.,2.11914) (458752,0.0362)};
      \node [above] at (axis cs:8192,2.48) {8B};
      \addplot[dotted,no markers,sharp plot,update limits=false] coordinates
      {(8192.,2.48478) (458752,0.1698)};
      \node [above] at (axis cs:65535,2.8) {64B};
      \addplot[dotted,no markers,sharp plot,update limits=false] coordinates
      {(65536.,2.82067) (458752,0.7708)};
      \node [above] at (axis cs:458752,3.4) {510B};

      \nextgroupplot[
        title=\fxn{Ghost} runtime in secs.$/(N/P)^{2/3}$,
        only marks,
        xlabel=$N/P$,
        ylabel near ticks,
        ylabel=$\cdot 10^{-4}$,
        ytick={1.e-4,2.e-4,4.e-4},
        yticklabels={$1$,$2$,$4$},
        axis y line*=left,
        y label style={rotate=-90,at={(rel axis cs:0.0,1.0)}},
                    ]

      \foreach \i in {16,128,1024,8192,65536,458752} {%
        \addplot table [x expr=\thisrowno{2}/\thisrowno{0},
        y expr=\thisrowno{6}/((\thisrowno{2}/\thisrowno{0})^(2./3.)) ]
        {timings/lnodes-rotcubes-b9e7/timana.data.\i}; } \foreach \i in
      {32,256,2048,16384,131072} {%
        \addplot table [x expr=\thisrowno{2}/\thisrowno{0},
        y expr=\thisrowno{6}/((\thisrowno{2}/\thisrowno{0})^(2./3.)) ]
        {timings/lnodes-rotcubes-b9e7/timana.data.\i}; } \addplot+[update
      limits=false] table [x expr=\thisrowno{2}/\thisrowno{0}, y
      expr=\thisrowno{6}/((\thisrowno{2}/\thisrowno{0})^(2./3.))]
      {timings/lnodes-rotcubes-b9e7/timana.data.917504}; \end{groupplot}
  \end{tikzpicture}
  \caption{%
    The scalability of ghost layer construction.  The meshes used are
    described in the text: the largest number of leaf octants $N$ is $5.1
    \times 10^{11}$.  (left) Runtime as a function of $P$: for 16-way process
    distribution, we compare strong scaling (solid lines) to ideal
    $O((N/P)^{2/3})$ scaling (dotted).  The total number of leaves $N$ in each
    mesh is indicated.  (right) Runtime scaled by $(N/P)^{2/3}$ as a function
    of $N/P$.  Weak-scaling is assessed by comparing the vertical distance
    between points: each grid line represents a $25\%$ loss
    of weak-scaling efficiency relative to the ideal $O((N/P)^{2/3})$.%
  }%
  \figlab{ghostscale}%
\end{figure}%
In \figref{ghostscale}, we give plots for assessing the strong- and
weak-scalability of ghost layer construction, relative to the ideal
$O((N/P)^{2/3})$ scaling. (For almost all problems, assigning 64 processes per
node was slower than 32, so we omit this data from the figure.) We see good
strong-scalability for $16\leq P\leq 65\mathrm{k}$ and $N/P \geq 1\mathrm{k}$.
For the full machine, when $P=459\mathrm{k}$ and $P=918\mathrm{k}$, the
communication latency and the small amount of $O(P)$ workspace and work in the
implementation (two scans of arrays of 32-bit integers) limits the efficiency
for $N/P \leq 10\mathrm{k}$.  In the weak-scaling plot, for the largest values
of $N/P$, we see that the relative efficiency (the efficiency of $(8N,8P)$
relative to $(N,P)$) improves slightly as $N$ and $P$ increase, which suggests
that we are seeing $O((N/P)^{2/3}\log P)$ scaling asymptotically. 



\subsection{Serial comparison of \fxn{Lnodes} and \fxn{Nodes}}

For polynomial degree $n=1$, the data structures constructed by \fxn{Nodes}
\cite[Algorithm 21]{BursteddeWilcoxGhattas11} and \fxn{Lnodes}
(\algref{lnodes}) are essentially equivalent.  For a general forest of octrees
on a single process, both have $\cO(N\log N)$ runtimes.  While
\fxn{Nodes} uses repeated binary searches and hash table queries and
insertions, \fxn{Lnodes} uses \fxn{Iterate} (\algref{iter}) to recursively
split the forest and operates on subsets of leaves.  This divide-and-conquer
approach should make better use of a typical cache hierarchy.  In this
subsection, we present a small experiment that confirms this fact.

The experiment is conducted on a single octree using a single process.  We
again create a series of meshes with increasing $N$ and four levels of
difference between the coarsest and finest leaves.  For each forest in the
series, we have three programs: one that calls \fxn{Nodes}, one that calls
\fxn{Lnodes}, and one that calls neither.  We use the Linux utility
\texttt{perf}\footnote{%
  \url{https://perf.wiki.kernel.org}%
} %
to estimate the number of instructions, cache misses, and branch prediction
misses in each program, calling each program 30 times to compensate for the
noise in \texttt{perf}'s sampling.  The averages of the events from the
program calling neither routine is subtracted from the other two averages,
giving an estimate of the events that can be attributed to the two routines.

The experiment is performed on a laptop with two Intel Ivy Bridge Core
i7-3517U dual core processors.  Each core has a 64 kB on-chip L1 cache, a 256
kB L2 cache, and each processor has a 4 MB L3 cache: \texttt{perf} counts L3
cache misses.  The \pforest library and the executable are compiled by
\texttt{gcc} 4.6.4 with \texttt{-O3} optimization.

The results of the experiment are given in \tabref{serialperf}.  The table
shows that the advantages of \fxn{Lnodes} over \fxn{Nodes} in terms of the
number of instructions and the number of branch misses do not grow much with
$N$, but the advantage in terms of cache misses grows from a factor of 2 on
the smallest problem size to a factor of 11 on the largest.
\begin{table}
  \caption{%
    Serial performance comparison of \fxn{Nodes} (top) and \fxn{Lnodes}
    (bottom) for $n=1$, as implemented by the \pforest functions
    \fxn{p8est\_nodes} and \fxn{p8est\_lnodes}, on a series of single-octree
    forests.%
  }%
  \tablab{serialperf}
  \centering
  \renewcommand{\arraystretch}{1.2}%
  \begin{tabular}{|l|l|l|l|l|}
    \hline
    $N$ & runtime (ms) & instructions & branch misses & cache misses
    \\
    \hline
    $4.6 \times 10^3$ & $9.5 \times 10^0$ & $4.3\times 10^7$    & $2.1 \times 10^5$ & $2.2 \times 10^4$ \\
                      & $1.0 \times 10^1$ & $3.7\times 10^7$    & $5.3 \times 10^4$ & $1.1 \times 10^4$ \\
    \hline                                                                         
    $3.9 \times 10^4$ & $8.6 \times 10^1$ & $4.2\times 10^8$    & $1.7 \times 10^6$ & $2.2 \times 10^5$ \\
                      & $4.0 \times 10^1$ & $3.1\times 10^8$    & $3.6 \times 10^5$ & $5.1 \times 10^4$ \\
    \hline                                                                         
    $3.2 \times 10^5$ & $8.4 \times 10^2$ & $3.7\times 10^9$    & $1.3 \times 10^7$ & $4.8 \times 10^6$ \\
                      & $3.5 \times 10^2$ & $2.5\times 10^9$    & $2.7 \times 10^6$ & $4.5 \times 10^5$ \\
    \hline                                                                         
    $2.6 \times 10^6$ & $8.0 \times 10^3$ & $3.3\times 10^{10}$ & $1.0 \times 10^8$ & $6.1 \times 10^7$ \\
                      & $2.8 \times 10^3$ & $2.0\times 10^{10}$ & $2.2 \times 10^7$ & $5.4 \times 10^6$ \\
    \hline

  \end{tabular}
\end{table}

\subsection{Parallel scalability of \fxn{Lnodes}}

In the previous subsection we compared the per-process efficiency of
\fxn{Lnodes} and \fxn{Nodes}.  Here we compare their parallel scalability on
the same series of test forests used to test \fxn{Ghost} above.

\begin{figure}\centering
  \begin{tikzpicture}[baseline]
    \hspace{0.2cm}
    \begin{groupplot}
      [
        group style={group size=2 by 1,horizontal sep=1.5cm},
        width=6.7cm,
        xmode=log,
        ymode=log,
      ]

      \nextgroupplot[xmin=1,xmax=2,ymin=1,ymax=2,xtick={},ytick={},yticklabels={},xticklabels={},tick style={draw=none}]

      \nextgroupplot[
        axis y line*=right,
        axis x line=none,
        xmin=1.e1,
        xmax=1.e7,
        ytick={2.03e-5,2.72e-5,3.6e-5,4.8e-5,6.4e-5,8.54e-5,1.14e-4,1.52e-4,2.02e-4,2.7e-4},
        ymajorgrids=true,
        yticklabels={},
        tick style={draw=none},
                    ]
      \addplot[draw=none] coordinates {(1.e1,2.61e-4) (1.e7,2.61e-4)};
      \addplot[draw=none] coordinates {(1.e1,2.03e-5) (1.e7,2.03e-5)};
    \end{groupplot}
    \begin{groupplot}
      [
        group style={group size=2 by 1,horizontal sep=1.5cm,vertical sep=2cm},
        width=6.7cm,
        xmode=log,
        ymode=log,
        cycle list name=p4est2,
        legend columns=-1,
        legend style={/tikz/every even column/.append style={column sep=5pt}},
      ]

      \nextgroupplot[
        title={\fxn{Lnodes} ($n=1$) runtime in seconds\strut},
        xmin=4,
        ymax=100,
        xlabel=$P$,
        legend columns=7,
        legend style={/tikz/every even column/.append style={column sep=5pt}},
        legend style={at={(1.11,-0.27)},anchor=north,legend cell align=left},
                    ]

      \addlegendimage{empty legend}
      \addlegendentry{$P$, 16-way:}

      \foreach \i in {16,128,1024,8192,65536,458752} {%
        \addplot+[only marks] table [x index=0, y index=8]
        {timings/lnodes-rotcubes-b9e7/timana.data.\i};
        \addlegendentryexpanded{$\i$}
      }

      \addlegendimage{empty legend}
      \addlegendentry{$P$, 32-way:}

      \foreach \i in {32,256,2048,16384,131072,917504} {%
        \addplot+[only marks] table [x index=0, y index=8]
        {timings/lnodes-rotcubes-b9e7/timana.data.\i};
        \addlegendentryexpanded{$\i$}
      }

      \addlegendimage{empty legend}
      \addlegendentry{$P$, 64-way:}

      \foreach \i in {64,512,4096,32768,262144} {%
        \addplot+[only marks] table [x index=0, y index=8]
        {timings/lnodes-rotcubes-b9e7/timana.data.\i};
        \addlegendentryexpanded{$\i$}
      }

      \addplot+[no markers,gray,solid,mark repeat=3] table [x expr=\thisrowno{0}, y index=8]
        {timings/lnodes-rotcubes-b9e7/timana.data.strong.lnodes};
      \node [above] at (axis cs:16,0.017) {5.7k};
      \node [above] at (axis cs:16,0.058) {28k};
      \addplot[dotted,no markers,sharp plot,update limits=false] coordinates
      {(16,0.05835) (128,0.004729)};
      \node [above] at (axis cs:16,0.355) {240k};
      \addplot[dotted,no markers,sharp plot,update limits=false] coordinates
      {(16,0.3555) (1024,0.005546)};
      \node [above] at (axis cs:16,2.6) {2M};
      \addplot[dotted,no markers,sharp plot,update limits=false] coordinates
      {(16,2.60413) (8192,0.005086)};
      \node [above] at (axis cs:16,19.7) {16M};
      \addplot[dotted,no markers,sharp plot,update limits=false] coordinates
      {(16,19.7192) (65536,0.00481)};
      \node [above] at (axis cs:128,20.3) {130M};
      \addplot[dotted,no markers,sharp plot,update limits=false] coordinates
      {(128,20.2824) (458752,0.004951)};
      \node [above] at (axis cs:1024,20.8) {1B};
      \addplot[dotted,no markers,sharp plot,update limits=false] coordinates
      {(1024,20.7861) (458752,0.0406)};
      \node [above] at (axis cs:8192,21.4) {8B};
      \addplot[dotted,no markers,sharp plot,update limits=false] coordinates
      {(8192,21.3726) (458752,0.3339)};
      \node [above] at (axis cs:65536,21.8) {64B};
      \addplot[dotted,no markers,sharp plot,update limits=false] coordinates
      {(65536,21.8257) (458752,2.7282)};
      \node [above] at (axis cs:458752,25.2) {510B};

      \nextgroupplot[
        title={\hspace{-2em}\fxn{Lnodes} ($n=1$) runtime in secs.$/(N/P)$},
        ylabel=$\cdot 10^{-5}$,
        ylabel near ticks,
        axis y line*=left,
        y label style={rotate=-90,at={(rel axis cs:0.0,1.0)}},
        ytick={2.e-5,4.e-5,8.e-5,1.6e-4},
        yticklabels={$2$,$4$,$8$,$16$},
        xlabel=$N/P$,
        xmin=40.,
        xmax=2.e6,
        only marks,
                    ]

      \foreach \i in {16,128,1024,8192,65536} {%
        \addplot table [x expr=\thisrowno{2}/\thisrowno{0},
                        y expr=\thisrowno{8}/(\thisrowno{2}/\thisrowno{0})
                       ]
        {timings/lnodes-rotcubes-b9e7/timana.data.\i};
      }
      \addplot+[update limits=false] table [x expr=\thisrowno{2}/\thisrowno{0},
                      y expr=\thisrowno{8}/(\thisrowno{2}/\thisrowno{0})
                     ]
      {timings/lnodes-rotcubes-b9e7/timana.data.458752};
      \foreach \i in {32,256,2048,16384,131072} {%
        \addplot table [x expr=\thisrowno{2}/\thisrowno{0},
                        y expr=\thisrowno{8}/(\thisrowno{2}/\thisrowno{0})
                       ]
        {timings/lnodes-rotcubes-b9e7/timana.data.\i};
      }
      \addplot+[update limits=false] table [x expr=\thisrowno{2}/\thisrowno{0},
                      y expr=\thisrowno{8}/(\thisrowno{2}/\thisrowno{0})
                     ]
      {timings/lnodes-rotcubes-b9e7/timana.data.917504};
      \foreach \i in {64,512,4096,32768,262144} {%
        \addplot+[update limits=false] table [x expr=\thisrowno{2}/\thisrowno{0},
                        y expr=\thisrowno{8}/(\thisrowno{2}/\thisrowno{0})
                       ]
        {timings/lnodes-rotcubes-b9e7/timana.data.\i};
      }
    \end{groupplot}
  \end{tikzpicture}

  \vspace{\baselineskip}
  \begin{tikzpicture}[baseline]
    \begin{groupplot}
      [
        group style={group size=2 by 1,horizontal sep=1.5cm},
        width=6.7cm,
        ymin=1,
        ymax=5.e6,
        xmode=log,
        ymode=log,
        xlabel=$N/P$,
        ylabel=$P$,
        ylabel near ticks,
      ]

      \nextgroupplot[
        align=center,
        title={\fxn{Nodes} runtime / \\ \fxn{Lnodes} $(n=1)$ runtime},
        point meta=explicit,
        only marks,
        no marks,
        scatter,
        nodes near coords*={\pgfmathprintnumber\myvalue},
        visualization depends on={\thisrowno{7}/\thisrowno{8}\as\myvalue},
        every node near coord/.style={text=mapped color},
        colormap={lnodesvnodes}{color(0cm)=(red); color(1cm)=(green!50!black); color(16cm)=(blue)},
      ]

      \foreach \i in {2,16,128,1024,8192,65536} {%
        \addplot table [
          x expr=\thisrowno{2}/\thisrowno{0},
          y index=0,
          meta expr=\thisrowno{7}/\thisrowno{8},
                       ]
        {timings/lnodes-rotcubes-b9e7/timana.data.\i};
      }

      \nextgroupplot[
        align=center,
        title={\fxn{Lnodes} $(n=7)$ runtime / \\ \fxn{Lnodes} $(n=1)$ runtime \phantom{/}},
        point meta=explicit,
        only marks,
        no marks,
        scatter,
        nodes near coords*={\pgfmathprintnumber\myvalue},
        visualization depends on={\thisrowno{5}/\thisrowno{3}\as\myvalue},
        every node near coord/.style={text=mapped color},
        colormap={lnodesvnodes}{color(0cm)=(blue); color(1cm)=(green!50!black); color(2cm)=(red)},
      ]

      \foreach \i in {2,16,128,1024,8192,65536,458752} {%
        \addplot table [
          x expr=\thisrowno{2}/\thisrowno{0},
          y index=0,
          meta expr=\thisrowno{5}/\thisrowno{3},
                       ]
        {timings/multiple-lnodes-b9e7/timana.multiple-lnodes.data.\i};
      }
    \end{groupplot}

  \end{tikzpicture}

  \caption{%
    The parallel scalability of the \fxn{Lnodes} algorithm, as implemented by
    the \pforest function \fxn{p8est\_lnodes}.  (top) Runtimes for $n=1$.
    (top left) Runtime as a function of $P$, comparing strong scaling (solid lines)
    to ideal $O(N/P)$ scaling (dotted).  The total number of leaves $N$ in
    each mesh is indicated.  (top right) Runtime scaled by $N/P$ as a function
    of $N/P$.  Weak-scaling is assessed by comparing the vertical distance
    between points: each grid line represents a $25\%$ loss of
    weak-scaling efficiency.  (bottom left) The speedup of \fxn{Lnodes} versus
    \fxn{Nodes} as implemented by \fxn{p8est\_nodes} is shown for the same
    meshes as above.
    The color scale indicates whether \fxn{Lnodes} performs better (blue) or
    worse (red) than \fxn{Nodes}.
    (bottom right) The runtime for $n=7$, scaled by the
    runtime for $n=1$, with an analogue meaning of the colors.
  }%
  \figlab{nodescale}
\end{figure}

In \figref{nodescale} (top), we show the runtimes of \fxn{Lnodes} for $n=1$.
As discussed in \secref{iterimpl}, the implementation of \fxn{Iterate} has
been optimized for large values of $N/P$: this optimization requires
$O(\lmax)$ workspace and setup time.  The weak-scaling plot shows that the
optimization is effective, in that for $N/P \geq 10\mathrm{k}$ and $P\leq
262\mathrm{k}$ the weak-scalability is nearly ideal, and that the absolute
runtime is small (~20 seconds for $1M$ leafs/process). The optimization
requires redudant work, however, and this affects the efficiency for $N/P <
1\mathrm{k}$.  The strong-scaling plot shows good scalability for $P\leq
262\mathrm{k}$ and $(N/P) > 1\mathrm{k}$, and in this range the algorithm
benefits from 32 and 64 processes per node as well.  As in the scaling for
ghost layer construction, the communication latency and the small amount of
$O(P)$ work in the implementation finally limit the scalability for the
smallest meshes that were timed on the full machine.

In the same figure (bottom left) we compare the runtimes for \fxn{Lnodes} for
$n=1$ to the runtimes of \fxn{Nodes} \cite[Algorithm
21]{BursteddeWilcoxGhattas11}.  For most tests, \fxn{Lnodes} is faster than
\fxn{Nodes}: although the relative advantage is smaller on the Blue Gene/Q
architecture of JUQUEEN than on the Ivy Bridge architecture used in the serial
test, we still see the advantage increasing as $N/P$ increases, which is
suggestive of better cache performance.  The communication pattern of
\fxn{Lnodes}, consisting of one allgather and one round of point-to-point
communication, is more scalable than the communication pattern of \fxn{Nodes},
which includes a handshake component, hence the better performance of
\fxn{Lnodes} for small values of $N/P$ and large values of $P$.


Finally, \figref{nodescale} also compares the scalability of \fxn{Lnodes} for
higher polynomial orders to the scalability for $n=1$ (bottom left).  We see
that the runtime to construct \nth{7}-order nodes is never more than six times
the runtime to construct \nth{1}-order nodes, even though there are 64 times
as many element nodes and roughly 500 times as many global nodes.%
\footnote{%
  The number of global nodes depends on the forest topology and the refinement
  pattern.  For a single octree with uniform refinement, the number of global
  nodes is asymptotically equivalent to $n^3 N$, in which case the number of
  \nth{7}-order nodes would be 343 times the number of \nth{1}-order nodes.
  Because of non-conformal elements, however, we see a higher ratio.%
} %
For large values of $P$ the communication costs, which do not increase
significantly with $n$, dominate the runtime, so that the cost of constructing
high-order nodes is essentially the same as \nth{1}-order nodes.


\section{Conclusion}
\seclab{conclusion}

In this work, we introduce new recursive algorithms that operate on the
distributed forest-of-octrees data structures that the \pforest software
defines and uses to support scalable parallel AMR.  The algorithms developed
here exploit a recursive space partition from a topological point of view.
They constitute \pforest's high-level reference interface, which is designed to
be used directly from third-party numerical applications.

With the \fxn{Search} algorithm, we demonstrate how to efficiently traverse a
linear octree downward from the root, even though the flat storage of leaves
has no explicit tree structure.  This search operation is in some sense purely
hierarchical: a similar search could be performed even if the nodes and leaves
of the tree were not interpreted as a space partition in $\mathbb{R}^d$.

As a component of the \fxn{Ghost} algorithm, we propose a recursive algorithm
for determining the intersections between lower-dimensional boundary cubes and
ranges of leaves that are specified only by the first and last leaves in the
range.  This algorithm is notable in that, while the procedure is recursive on
the implicit octree structure, the result that it computes---a set of
intersections---is purely topological in nature.

In the \fxn{Iterate} algorithm, we present a method of performing
callback-based iteration over leaves and leaf boundaries that construct local
topological information for the callback on the fly.  This procedure combines
aspects of the two previous algorithms: it involves recursion over the octree
hierarchy and recursion over topological dimension.  The divide-and-conquer
nature of the algorithm makes better use of the cache hierarchy than
approaches to iteration that rely on repeated searches through the array of
leaves, as we demonstrate in practice.

We use \fxn{Iterate} in the construction of fully-distributed higher-order
$C^0$ finite element nodes in the algorithm \fxn{Lnodes}.  The topological
information provided by \fxn{Iterate} simplifies the handling of non-conformal
interfaces, and provides sufficient information to allow for node assignments
to be made without communication, and for the communication pattern between
referencing processes to be determined without handshaking.  In practice, this
gives us good scalability, which we have demonstrated to nearly a half million
processes on the JUQUEEN supercomputer. The implementation has been tuned for
granularities of a thousand leaves per MPI process and above, and in this
range we see good scalability, although room for improvement remains for
smaller granularities.

The scalability of \fxn{Lnodes} that we have demonstrated is important for
more applications than just higher-order finite element nodes, because the
data structures returned by the \fxn{Lnodes} algorithm can also serve as the
basis for converting a linear forest of octrees into an unstructured mesh
adjacency graph.  \fxn{Lnodes} includes all of the communication necessary for
this conversion, so the same scalability should be achievable by third-party
numerical codes that use \fxn{Lnodes} (or a similar approach based on
\fxn{Iterate}) to interface \pforest with their own mesh formats.

\section*{Reproducibility}

The algorithms presented in this article are implemented in the \pforest
reference software \cite{Burstedde10}.  \pforest, including the programs used
in the performance analysis presented above, is free and freely downloadable
software published under the GNU General Public License version 2, or (at your
option) any later version.

\section*{Acknowledgments}

The first author thanks the U.S.\ Department of Energy for support by the
Computational Science Graduate Fellowship (DOE CSGF) and by the Office of
Science (DOE SC), Advanced Scientific Computing Research (ASCR), Scientific
Discovery through Advanced Computing (SciDAC) program, under award number
DE-FG02-09ER25914.
The second author is supported by the Hausdorff Center for Mathematics (HCM) at
Bonn University and the Transregio 32 research collaborative, both funded by
the German Research Foundation (DFG).

The authors gratefully acknowledge the Gau\ss{} Centre for Supercomputing (GCS)
for providing computing time through the John von Neumann Institute for
Computing (NIC) on the GCS share of the supercomputer JUQUEEN at J\"ulich
Supercomputing Centre (JSC).  GCS is the alliance of the three national
supercomputing centres HLRS (Universit\"at Stuttgart), JSC (Forschungszentrum
J\"ulich), and LRZ (Bayerische Akademie der Wissenschaften), funded by the
German Federal Ministry of Education and Research (BMBF) and the German State
Ministries for Research of Baden-W\"urttemberg (MWK), Bayern (StMWFK) and
Nordrhein-Westfalen (MIWF).

The authors are indebted to three anonymous reviewers, whose remarks led to
significant improvements in the final form of this paper, and to Jose A.\
Fonseca and Johannes Holke for their editorial help.

\bibliographystyle{siam}
\bibliography{p4est-2,ccgo}

\appendix

\section{Proof of the correctness of \fxn{Find\_\-range\_\-boundaries}
  (\algref{findrange})}
\applab{findrangeproof}

\begin{theo}%
  Given a range $[\oc f,\oc l]$, where $\oc f$ and $\oc l$ are atoms 
  with a common ancestor $\oc s$, and given a set of boundary indices $\st
  B_\query \subseteq \st B$, \algref{findrange}
  returns the set $\st B_{\cap}(\oc f,\oc l,\oc s)\cap \st B_\query$ 
  \eqref{eqn:rangebound}.
\end{theo}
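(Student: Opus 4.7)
The plan is to prove the theorem by strong induction on $\lmax - \oc s.l$, i.e., the number of refinement levels between $\oc s$ and its atomic descendants. This matches the recursion structure of \fxn{Find\_range\_boundaries}, since each recursive call replaces $\oc s$ by one of its children.

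For the base case, I would handle the two exit conditions on line \ref{maxexit} separately. If $\st B_\query = \emptyset$, then $\st B_{\cap}(\oc f,\oc l,\oc s) \cap \st B_\query = \emptyset = \st B_\query$, so returning $\st B_\query$ is correct. If $\oc s.l = \lmax$, then $\oc s$ is an atom and so $\oc f = \oc l = \oc s$; in this case every boundary hypercube of $\oc s$ meets $\overline{\Dom(\oc s)}$, so $\st B_{\cap}(\oc f,\oc l,\oc s) = \st B$ and the correct return value is again $\st B_\query$.

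For the inductive step I would split into the two branches of the algorithm. When $j = k$, the entire range lies in $\desc(\child(\oc s)[j])$, so Proposition~\ref{prop:isect} gives directly $\st B_{\cap}(\oc f,\oc l,\oc s) \cap \st B_\query = \st B_{\cap}(\oc f,\oc l,\child(\oc s)[j]) \cap (\st B_\query \cap \st B_{\cap}^j)$, which is exactly what the inductive hypothesis yields for the recursive call on line \ref{equalchildren}. When $j < k$, I would use the partition of $[\oc f,\oc l]$ given by \eqref{runion} together with \eqref{dunion} to write the target set as a union of contributions from child $j$, the middle children $j < i < k$ (each fully covered by the range), and child $k$. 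For the middle children, the range covers the whole child, so $\st B_{\cap}(\oc f_i,\oc l_i,\child(\oc s)[i]) = \st B$, and the total middle contribution collapses to $\bigcup_{j<i<k} \st B_\query \cap \st B_{\cap}^i$, which is exactly $\st B_\rmatch$ as computed on line \ref{distinctchildrenmid}.

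The main obstacle will be verifying that the algorithm's optimization of subtracting $\st B_\rmatch$ (and $\st B_\rmatch^j$) from the query sets passed into the $j$- and $k$-recursions does not discard any indices needed in the final answer. The key identity I would invoke repeatedly is $A \cup (X \backslash A) = A \cup X$: it shows that pruning an already-accounted-for subset $\st B_\rmatch$ from the recursive query set $\st B_\rmatch^j$ is harmless once we union the recursive return value back with $\st B_\rmatch$. Applying this identity, together with the inductive hypothesis for each of the (at most two) recursive calls, I would verify that
\[
\st B_\rmatch \cup \st B_\rmatch^j \cup \st B_\rmatch^k
= \st B_\rmatch \cup \bigl(\st B_{\cap}(\oc f,\oc l_j,\child(\oc s)[j]) \cap \st B_{\cap}^j \cap \st B_\query\bigr) \cup \bigl(\st B_{\cap}(\oc f_k,\oc l,\child(\oc s)[k]) \cap \st B_{\cap}^k \cap \st B_\query\bigr),
\]
which matches the decomposition from \eqref{dunion} intersected with $\st B_\query$. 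The cases $\oc f = \oc f_j$ and $\oc l = \oc l_k$, where recursion is skipped, must be treated as instances where the corresponding $\child(\oc s)[j]$ or $\child(\oc s)[k]$ is fully in the range, so its boundary contribution is the whole $\st B_{\cap}^j$ (respectively $\st B_{\cap}^k$) intersected with $\st B_\query$; this aligns with $\st B_\rmatch^j$ (respectively $\st B_\rmatch^k$) as defined in those branches. Assembling these observations completes the induction.
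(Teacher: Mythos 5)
Your proposal is correct and follows essentially the same route as the paper's proof: induction on the level of $\oc s$, base case at $\oc s.\ix l = \lmax$, the $j=k$ case via Proposition~\ref{prop:isect}, and the $j<k$ case via \eqref{runion} and \eqref{dunion} with the observation that fully covered children contribute all of $\st B_{\cap}^i$. The only cosmetic difference is that you make the set identity $A \cup (X\backslash A) = A\cup X$ explicit when justifying the subtraction of $\st B_\rmatch$ and $\st B_\rmatch^j$ from the recursive query sets, whereas the paper carries out the same reassembly in a single displayed union without naming the identity.
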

\begin{proof}
  The proof is inductive on the refinement level, $\oc s.l$. 

  If $\oc s.l = \lmax$, then the only descendant of $\oc s$ is itself, so $\oc
  s=\oc f=\oc l$.  Therefore $\overline{\bigcup\Dom([\oc f,\oc l])} =
  \overline{\Dom(\oc s)}$ and $\st B_{\cap}(\oc f,\oc l, \oc s) = \st B$, so
  $\st B_{\cap}(\oc f, \oc l, \oc s)\cap \st B_{\query} = B_\query$. This is
  correctly returned on line
  \ref{maxexit}.

  Now suppose that \fxn{Find\_\-range\_\-boundaries} returns correctly
  if $\lmax \ge \oc s.l \geq m$, and suppose $\oc s.l = m - 1$.  Let
  $\oc f\in\desc(\child(\oc s)[j])$ and $\oc l\in\desc(\child(\oc s)[k])$: the
  set $\mathcal{I}$ defined in \eqref{runion} is equal to $\{j,\dots,k\}$.
  
  If $j=k$, then by Proposition \ref{prop:isect}, $\st B_{\cap}(\oc f,\oc
  l,\oc s) = \st B_{\cap}(\oc f, \oc l, \child(\oc s)[j])\cap \st B_{\cap}^j$.
  By the inductive assumption, line \ref{equalchildren} returns
  \begin{equation}
    \st
    B_{\cap}(\oc f, \oc l, \child(\oc s)[j])\cap (\st B_{\cap}^j \cap \st
    B_{\query})=\st B_{\cap}(\oc f, \oc l, \oc s)\cap \st B_{\query}.
  \end{equation}

  Now suppose $j<k$.  If the range $[\oc f,\oc l]$ overlaps all of $\child(\oc
  s)[i]$, then the set $\st B_{\cap}(\oc f_i,\oc l_i,\child(\oc s)[i])$ is
  equal to $\st B$.  This is the case if $j<i<k$, so the set $\st
  B_{\rmatch}^1$ computed on line
  \ref{distinctchildrenmid} is
  \begin{equation}
    \st B_{\rmatch} =
    \bigcup_{j<i<k} \st B_\query \cap \st B_{\cap}^i =
    \bigcup_{j<i<k} \st B_\query \cap (\st B_{\cap}(\oc f_i,\oc
    l_i,\child(\oc s)[i]) \cap B_{\cap}^i).
  \end{equation}
  This is also the case for $i=j$ if $\oc f=\oc f_j$, so on each branch 
  of the condition on line \ref{distinctchildrenfirst} the set $\st
  B_{\rmatch}^j$ is computed as
  \begin{equation}
    \begin{aligned}
      \st B_{\rmatch}^j
      &= \st B_{\cap}(\oc f_j,\oc l_j,\child(\oc s)[j]) \cap ((\st B_\query
      \cap \st B_{\cap}^j)\backslash \st B_{\rmatch})
      \\
      &= (\st B_{\query} \cap (\st B_{\cap}(\oc f_j,\oc l_j,\child(\oc s)[j])
      \cap \st B_{\cap}^j))\backslash \st B_{\rmatch}.
    \end{aligned}
  \end{equation}
  By the same reasoning, on each branch of the conditional on line
  \ref{distinctchildrenlast}, the set $\st B_{\rmatch}^k$ is computed as
  \begin{equation}
    \st B_{\rmatch}^k
    = (\st B_{\query} \cap (\st B_{\cap}(\oc f_k,\oc l_k,\child(\oc s)[k])
    \cap \st B_{\cap}^k))\backslash \st B_{\rmatch} \backslash \st
    B_{\rmatch}^j.
  \end{equation}
  The union $\st B_{\rmatch}\cup \st B_{\rmatch}^j \cup \st B_{\rmatch}^k$ is 
  therefore equal to
  \begin{equation}
    \st B_{\query} \cap
    \bigcup_{j\leq i \leq k}
    \st B_{\cap}(\oc f_i,\oc l_i,\child(\oc s)[i])
    \cap \st B_{\cap}^i =
    \st B_{\query} \cap \st B_{\cap}(\oc f, \oc l, \oc s).
  \end{equation}
  By induction, the proof is complete.
  \hfill
\end{proof}

\section{Proof of the correctness of \fxn{Iterate\_interior}
  (\algref{iterinterior})}
\applab{iterproof}

Let the definitions in \secref{iterate} be given.  We prove the correctness of
\fxn{Iterate\_\-interior} (\algref{iterinterior}) when the relevant set is
$\overline{\localpart}$.  The proof for the case when
$\localpart$ is the relevant set is very similar.

\begin{theo}
  Assume that the requirements for the arguments of \algref{iterinterior} are
  met.  If $\oc c\in\overline{\localpart}$, then $\lsupp_{\pr p}(\oc c)$ is
  correctly computed.  If there is a subset of $\overline{\localpart}$ whose
  domain is contained in $\Dom(\oc c)$, then the callback function is executed
  for all points in that subset.
\end{theo}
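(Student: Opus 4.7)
The plan is to proceed by strong induction on $\lmax - \level(\oc c)$, which bounds the remaining recursion depth. First I would establish a loop invariant for the recursion: upon entering \fxn{Iterate\_interior} with arguments $(\oc c, \arry S)$, each $\arry S[i]$ contains exactly the leaves in $\localoctants \cup \arry G_{\pr p}^d$ that descend from $\supp(\oc c)[i]$. The base cases cover the early return when no local leaves intersect the support of $\oc c$, and the case where $\oc c$ is at the maximum refinement level (so that $\part(\oc c)$ is trivial and stop must be true).

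Next I would analyze the ``stop'' flag. For $\dim(\oc c) > 0$, stop is set to true precisely when some support octant $\supp(\oc c)[i]$ appears as a leaf (detected by $\arry S[i] = \{\supp(\oc c)[i]\}$). By the 2:1 balance condition, each other support octant is then either a leaf itself or has children incident to $\oc c$ that are leaves, so $\oc c \in \partoctants$. The assignments on lines~\ref{iteraddlarge} and~\ref{iteraddsmall} collect, respectively, the ``large'' leaves and the ``small'' leaves incident to $\oc c$, the latter identified via the child-boundary intersection sets $\st B_{\cap}^j$. For $\dim(\oc c) = 0$, Proposition~\ref{prop:duality2} yields that every leaf in $\lsupp_{\pr p}(\oc c)$ contains a unique atom of $\asupp(\oc c)$, so the binary searches on line~\ref{itercorneradd} recover $\lsupp_{\pr p}(\oc c)$ exactly, and whenever a local leaf is present $\oc c \in \partoctants$ by construction. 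In either case, $\mathcal{L} = \lsupp_{\pr p}(\oc c)$, so the guard \fxn{Is\_relevant} on line~\ref{callbackrelevant} correctly restricts the callback to $\overline{\localpart}$, which establishes the first conclusion.

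For the inductive step, when stop is false I would argue that every $\oc e \in \overline{\localpart}$ with $\Dom(\oc e) \subset \Dom(\oc c)$ has its domain contained in the domain of exactly one point of $\part(\oc c)$, using the partition property of the child partition set. Verifying that each recursive call satisfies the loop invariant requires showing that every $\oc s \in \supp(\oc e)$ is a child of some $\supp(\oc c)[i]$; the splits produced by \fxn{Split\_array} on line~\ref{itersplit} then furnish exactly the required arrays $\arry S_{\oc e}[i] = \arry H_j[k]$ where $\arry h_j[k] = \oc s$. Applying the inductive hypothesis to each recursive call on line~\ref{iterrecurse} yields the second conclusion.

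The main obstacle I anticipate is this bookkeeping at the recursion step: one must show carefully that for each $\oc e \in \part(\oc c)$ and each $\oc s \in \supp(\oc e)$, there is a unique pair $(j,k)$ with $\arry h_j[k] = \oc s$, so that the assignment $\arry S_{\oc e}[i] \leftarrow \arry H_j[k]$ is well-defined and still satisfies the invariant. This reduces to a purely geometric claim relating the support set of a child-partition point to the children of the support octants of its parent, and in the forest-of-octrees setting one must further invoke \remref{modify} to accommodate macro-layer boundaries, where support sets can be irregular in shape or size.
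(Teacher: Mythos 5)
Your overall plan — induction on recursion depth, case split on $\dim(\oc c)$, then a stop-flag analysis followed by the recursive step — mirrors the structure of the paper's proof, and your observation about the $(j,k)$ bookkeeping at the recursion step is exactly the point the paper dispatches with Proposition~\ref{prop:suppisect}. But there is one genuine missing idea, and one step that does not go through as stated.

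The missing idea is the paper's opening assertion that whenever $\oc c\in\overline{\localpart}$, the local leaf support $\lsupp_{\pr p}(\oc c)$ is a \emph{subset} of $\bigcup_i\arry S[i]$. Your loop invariant only guarantees that the arrays hold all leaves that are \emph{descendants} of the support octants $\supp(\oc c)[i]$; it says nothing about leaves whose closure touches $\Dom(\oc c)$ but that are coarser than $\oc c$ and therefore ancestors of the support octants. Such leaves would be in $\lsupp_{\pr p}(\oc c)$ by its domain-intersection definition \eqref{eqn:lsupp}, yet absent from every $\arry S[i]$. Your argument shows that whatever the algorithm places in $\mathcal{L}$ is correct, but it never shows that nothing is \emph{missing} from $\mathcal{L}$. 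The paper's proof handles this with Proposition~\ref{prop:closure} to produce a point $\oc e\in\clos(\oc o)$ whose domain is comparable to $\Dom(\oc c)$, then uses the membership $\oc c\in\partoctants$ (which $\overline{\localpart}\subseteq\partoctants$ gives for free under the theorem's hypothesis) to rule out the case $\Dom(\oc c)\subsetneq\Dom(\oc e)$, hence to conclude $\level(\oc o)\geq\level(\oc c)$, and finally Proposition~\ref{prop:suppisect} to place $\oc o$ under a support octant. Without this lemma the equality $\mathcal{L}=\lsupp_{\pr p}(\oc c)$ is not established.

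Separately, your inference ``by the 2:1 balance condition, each other support octant is \dots, so $\oc c\in\partoctants$'' is not justified by the argument you give. The 2:1 condition permits a leaf exactly one level \emph{coarser} than $\oc c$ to be adjacent across a face or edge, and such a leaf would make $\oc c$ a hanging point, i.e., not in $\partoctants$. Ruling this out requires either an argument about the recursion structure (if stop were false at the parent, no coarser support octant was a leaf, and by Proposition~\ref{prop:suppisect} every adjacent coarse octant at that level is a support octant) or, as the paper does, simply not asserting the converse at all: the theorem's first conclusion is conditional on $\oc c\in\overline{\localpart}\subseteq\partoctants$, so the membership is available as a hypothesis rather than something to derive from the stop flag.
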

\begin{proof}
  We first assert that if $\oc c\in\overline{\localpart}$, then its local leaf
  support $\lsupp_{\pr p}(\oc c)$ is a subset of $\bigcup_i \arry S[i]$.  By
  the definition of $\lsupp(\oc c)$, $\overline{\Dom(\oc o)}\cap \Dom(\oc c)
  \neq \emptyset$. By Proposition \ref{prop:closure}, there is a point $\oc e$
  in the closure set of $\oc o$, $\oc e\in\clos(\oc o)$, such that $\Dom(\oc
  e)\subseteq \Dom(\oc c)$ or $\Dom(\oc c)\subseteq\Dom(\oc e)$.  By the
  definition of the global partition set $\partoctants$, the former must be
  true: otherwise, $\oc c$ could not be in $\overline{\localpart}\subseteq
  \partoctants$.  Therefore $\oc o$ cannot be less refined than $\oc c$,
  $\level(\oc o)\geq \level(\oc c)$.  By Proposition~\ref{prop:suppisect},
  there must be some support octant $\oc s=\supp(\oc c)[i]$ such that $\oc
  o\in\desc(\oc s)$.  By the definition of $\arry S[i]$, it must contain $\oc
  o$.  This proves the first assertion.

  From here, we split the proof into two
  cases, $\dim(\oc c)=0$, and $\dim(\oc c)>0$.

  Suppose $\dim(\oc c)=0$. If $\oc o\in\lsupp_{\pr p}(\oc c)$, then there is
  $i$ such that $\oc o\in \arry S[i]$.  By Proposition~\ref{prop:duality2},
  $\oc o$ must be an ancestor of the atom $\asupp(\oc c)[i]$.  Therefore $\oc
  o$ is added to $\mathcal{L}$ on line \ref{itercorneradd}.  Conversely, if
  $\oc o$ is added to $\mathcal{L}$ on line \ref{itercorneradd}, then
  $\asupp(\oc c)[i]$ is a descendant of $\oc o$, and by definition its domain
  is in $\oc o$'s domain, $\overline{\Dom(\asupp(\oc c)[i])}\subseteq
  \overline{\Dom(\oc o)}$.  Because $\overline{\Dom(\asupp(\oc c)[i])}\cap
  \Dom(\oc c)\neq \emptyset$,
  it must be that $\overline{\Dom(\oc o)}\cap \Dom(\oc c)\neq\emptyset$.
  Therefore $\oc o$ is a leaf in $\arry S[i]\subset\localoctants\cup
  \arry{G}_{\pr p}^d$ whose closure intersects $\oc c$, which matches the
  definition of $\lsupp_{\pr p}(\oc c)$.  Thus, if $\oc c\in\partoctants$,
  the set $\mathcal{L}$ computed is equal to $\lsupp_{\pr p}(\oc c)$, and the
  callback will be executed on line \ref{itercallback} if and only if $\oc
  c\in \overline{\localpart}$.

  Now suppose $\dim(\oc c) > 0$.  Let $L$ be the minimum level of a leaf
  $\oc o\in\cup_{i} \arry S[i]$.  The remainder
  of the proof is inductive on the difference $\delta=L-\level(\oc c)$.

  Suppose $\delta = 0$, and let $\oc o\in\arry S[i]$ be a leaf with level
  $L=\level(\oc c)$.  Because $\oc o\subseteq \supp(\oc c)[i]$ and because
  $\level(\supp(\oc c)[i])=\level(\oc c)$ by definition, $\oc o=\supp(\oc
  c)[i]$.  Therefore $\overline{\Dom(\oc o)}\cap\Dom(\oc c)\neq\emptyset$ and
  $\oc o\in\lsupp_{\pr p}(\oc c)$.  Because leaves do not overlap, it must be
  that $\arry S[i] = \{\oc o\}$.  Therefore $\oc o$ is added to $\mathcal{L}$
  on line \ref{iteraddlarge}.

  Because of the 2:1 condition, all remaining leaves in $\lsupp_{\pr p}(\oc
  c)$ have level $L + 1$.  Let $\oc o\in\arry S[j]$ be a leaf with level $L +
  1$.  This implies that $\arry S[j]\neq \{\supp(\oc c)[i]\}$, so the children
  of $\supp(\oc c)[i]$ are assigned to $\arry h_i$ on line
  \ref{childrensplit}: $\oc o$ must be one of these children.  On line
  \ref{iteraddsmall}, $\oc o$ is added to $\mathcal{L}$ if and only if
  $\overline{\Dom(\oc o)}\cap\Dom(\oc c)\neq\emptyset$, which matches the
  definition of $\lsupp_{\pr p}(\oc c)$.  Therefore, if $\oc
  c\in\partoctants$, the constructed set $\mathcal{L}$ matches $\lsupp_{\pr
  p}(\oc c)$, and the callback executes on line \ref{itercallback} if and only
  if $\oc b\in \overline{\localpart}$.

  Now suppose the algorithm is correct for $0\leq \delta < k$, and suppose
  $\delta = k$.  There can be no $i$ such that $\arry S[i]=\{\supp(\oc
  c)[i]\}$, so the arrays $\arry H_i$ and octants $\arry h_i$ are computed on
  lines \ref{itersplit} and \ref{childrensplit} for every $i$.  Let $\oc e$ be
  in the child partition set $\part(\oc c)$: $\oc e$ has level $\level(\oc c)
  + 1$.  By definition, each octant in the support set $\supp(\oc e)$ also has
  level $\level(\oc c) + 1$ and $\overline{\Dom(\supp(\oc e)[i])}\cap \Dom(\oc
  e)\neq\emptyset$, which implies $\overline{\Dom(\supp(\oc e)[i])}\cap \Dom
  (\oc c) \neq \emptyset $.  Proposition~\ref{prop:suppisect} implies that
  there must be $j$ and $k$ such that $\supp(\oc e)[i] =  \child(\supp(\oc
  c)[j])[k]$.  Therefore $\supp(\oc e)[i] =  \arry h_j[k]$ and the set $\arry
  {S}_{\oc e}[i]=\arry H_j[k]$ is equal to $(\localoctants \cup \arry{G}_{\pr
    p}^d)\cap \desc(\supp(\oc e)[j])$.  This means that the arguments of the
  recursive call on line \ref{iterrecurse} are correct for each $\oc e\in
  \part(\oc c)$.  By the inductive assumption, the callback function is
  executed for the subset of $\overline{\localpart}$ whose domains are in
  $\Dom(\oc c)=\bigsqcup\Dom(\part(\oc c)$.  By the principle of induction,
  the proof is complete.
  \hfill
\end{proof}

\section{Asymptotic analysis of \fxn{Iterate} (\algref{iter})}
\applab{iteranalysis}

We first present the asymptotic analysis of the complexity of the algorithm
in a single-process, single-octree setting.

\begin{theo}
Ignoring the time taken by the callbacks,
\fxn{Iterate} executes in the worst case in $O(N\log N)$ time.
\end{theo}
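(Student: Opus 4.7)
The plan is to factor the total runtime as (number of invocations of \fxn{Iterate\_interior}) times (work done per invocation, excluding recursion and the user callback), and to bound each factor separately. The outer loop in \fxn{Iterate} (\algref{iter}) contributes only $O(K)$ top-level invocations of \fxn{Iterate\_interior} plus $O(K \log N)$ for preparing the per-tree arrays $\arry S^t$, both of which are dominated, so the analysis reduces to bounding the recursive call tree of \fxn{Iterate\_interior}.

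First, I would establish that the per-invocation work is $O(\log N)$. Inside \fxn{Iterate\_interior}($\oc c, \arry S, \fxn{callback}$), the loop over $\supp(\oc c)$ has bounded length $|\supp(\oc c)| \leq 3^d$, and its body is dominated by the call \fxn{Split\_array}($\arry S[i], \oc s$), which costs $O(\log |\arry S[i]|) \subseteq O(\log N)$ by the binary search of \algref{split}. In the $\dim(\oc c) = 0$ branch, the searches on line~\ref{itercorneradd} also cost $O(\log N)$ each. The assembly of $\arry{S}_{\oc e}$ for each $\oc e \in \part(\oc c)$ merely reshuffles the alias buckets produced by \fxn{Split\_array}, contributing $O(1)$, and the early-exit check plus the relevance test add no more than $O(\log N)$. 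Hence each invocation contributes $O(\log N)$ to the total, exclusive of its recursive children.

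Second, I would bound the total number of invocations by $O(N)$. Call an invocation \emph{active} if $\bigcup \arry S \cap \localoctants \neq \emptyset$, and \emph{trivial} otherwise; a trivial invocation does $O(1)$ work and is spawned by an active parent, so trivial invocations contribute at most $|\part(\oc c)| = 3^d$ per active invocation. By the correctness argument sketched in \appref{iterproof}, the leaves of the active-recursion tree are exactly the locally relevant points of $\partoctants$. Since each local leaf $\oc o \in \localoctants$ contributes only $|\clos(\oc o)| = 3^d = O(1)$ points to this set, the active tree has $O(N)$ leaves. With constant branching factor $|\part(\oc c)| \leq 3^d$, the total number of active internal nodes is also $O(N)$ up to chains of single-active-child nodes whose aggregate length is bounded by the recursion depth $\lmax$ times the number of such chains; treating $\lmax$ as bounded (by the coordinate bit-width in the \pforest implementation) absorbs this factor, and the total number of invocations is $O(N)$.

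Combining the two bounds yields the stated $O(N \log N)$ runtime. The main obstacle I anticipate is the second step: the active-recursion tree need not have every internal node branching into $\geq 2$ active children, because several siblings in $\part(\oc c)$ may be trivial while one remains active, producing chains that can be as long as $\lmax$. A naive charging argument then charges each local leaf for up to $\lmax$ ancestor nodes, giving $O(N\lmax \log N)$; collapsing this to $O(N\log N)$ requires either the observation that $\lmax$ is a data-format constant (consistent with the tighter parenthetical bound $O((\lmax + N)\log N)$ indicated in the accompanying discussion) or an amortization that assigns each chain node uniquely to the leaf its recursion guides toward.
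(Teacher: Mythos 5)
Your decomposition — bounding the total time as (number of invocations) $\times$ (work per invocation) — is a reasonable plan, and your $O(\log N)$ bound on per-invocation work is fine. But the argument has a real gap in the second step, and that gap arises from a misdiagnosis: you worry about chains of single-active-child invocations whose aggregate length scales with $\lmax$, and then try to escape by declaring $\lmax$ a data-format constant. In the serial, single-octree setting to which this theorem applies, that worry is spurious. If you reach \fxn{Iterate\_interior} at all, each $\supp(\oc c)[i]$ arrived as a child of some support octant of the parent invocation, and that parent support octant was neither a leaf (else the parent would have stopped) nor a strict descendant of a leaf (by induction starting from the root). Any child of an octant that is a strict ancestor of a leaf is itself either a leaf or a strict ancestor of a leaf, so $\arry S[i]$ is never empty, and every child $\oc e \in \part(\oc c)$ of a non-stopping invocation is \emph{active}. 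Hence the active-recursion tree has branching factor $3^{\dim(\oc c)} \geq 3$ at every internal node, the number of internal nodes is at most half the number of leaves, and since the leaves correspond to points of $\partoctants \subseteq \bigcup\clos(\cO)$ (of size $\leq 3^d N$) the invocation count is cleanly $O(N)$ without any appeal to $\lmax$. The $\lmax$ term you invoke is an artifact of the \emph{parallel} analysis (where a process's arrays can indeed become empty along a long ancestor chain, giving the $O((\lmax + N_{\pr p})\log N_{\pr p})$ bound mentioned later), not the serial one.

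It is also worth noting that the paper's own proof avoids this accounting entirely. Rather than count invocations, it charges each non-constant-time operation — the $O(\log|\arry S[i]|)$ splits — to the octant $\supp(\oc c)[i]$ it acts upon, observes that such an octant must be a non-leaf ancestor of a leaf and can appear in $\supp(\oc c)$ for at most $|\st B|$ points $\oc c$, and uses the fact that a valid linear octree with $N$ leaves has $O(N)$ non-leaf ancestors. This sidesteps the question of how many invocations occur and goes directly to the $O(N)$ bound on the number of logarithmic-cost searches, which is a cleaner route to the same conclusion.
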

\begin{proof}
  The only operations in each instance of \fxn{Iterate\_interior} that are not
  $O(1)$ are the $O(\log |\arry S[i]|)$  terms for the input arrays $\arry
  S[i]$.  Each of these arrays is associated with an octant $\supp(\oc
  c)[i]$ that is an ancestor of a leaf.  An octant $\oc o$ can only be in
  $\supp(\oc c)$ if $\oc c\in \bound(\oc o)$ and $\oc c=(\oc o,b)$ for some $b
  \in \st B$.  Therefore each ancestor octant can be associated with at most
  $|\st B|$ terms with $O(\log |\arry S[i]|)$ complexity.  An octree has
  $O(N)$ ancestors that are not leaves, so $O(N)$ searches are conducted.
  Each array $\arry S[i]$ contains a subset of leaves, so each $O(\log |\arry
  S[i]|)$ is $\cO(\log N)$.  We conclude that an upper bound on the running
  time is $O(N \log N)$.
  \hfill
\hfill\end{proof}

\begin{theo}\label{theo:linear}
Ignoring callbacks, \fxn{Iterate} executes in $O(N)$ time on a uniformly
refined octree.
\end{theo}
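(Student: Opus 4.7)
The plan is to charge the total work of \fxn{Iterate\_interior} level by level and exploit the geometric decay of the input array sizes that uniform refinement forces.

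First, let $L$ be the common level of the leaves, so that $N = K\cdot 2^{dL}$. Whenever \fxn{Iterate\_interior} is invoked on a point $\oc c$ at level $\ell\le L$, each input array $\arry S[i]$ consists of the leaves descending from $\supp(\oc c)[i]$, of which there are exactly $2^{d(L-\ell)}$ in the uniform case. Consequently every call to \fxn{Split\_array} from this invocation runs in $O(L-\ell)$ time, and since $|\supp(\oc c)|\le 2^d$ and $|\part(\oc c)|\le 3^d$, the non-recursive work of the invocation is $O(L-\ell+1)$.

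Next I would bound the number of invocations per level. The driver of \algref{iter} issues one top-level call for each of the $K\cdot 3^d$ points in $\bigcup_{0\le t<K}\clos(\Root(\ix t))$, and afterwards the recursion only descends via $\part(\cdot)$. Because $\oc e\in\part(\oc c)$ forces $\dom(\oc e)\subset\dom(\oc c)$, and the domains of distinct points at a common level are disjoint, each descendant point has a unique ancestor chain up to a root-closure point and is therefore entered at most once. The count of level-$\ell$ points in the uniformly refined macro mesh, summing volume, face, edge and corner points, is $O(K\cdot 2^{d\ell})$.

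Summing the work across levels gives
\[
 \sum_{\ell=0}^{L} O(K\cdot 2^{d\ell})\cdot O(L-\ell+1)
 \;=\; O\!\left(K\cdot 2^{dL}\sum_{k=0}^{L}(k+1)\,2^{-dk}\right)
 \;=\; O(N),
\]
using that $\sum_{k\ge 0}(k+1)\,2^{-dk}$ converges for every $d\ge 1$. The setup work in \algref{iter} is $O(K\log|\arry G_{\pr p}^d|)$ and is absorbed by the $O(N)$ iteration cost.

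The main obstacle I anticipate is the uniqueness-of-visit claim in the second step, since a lower-dimensional point at level $\ell+1$ could a priori appear in the child-partition set of several level-$\ell$ parents. I would discharge it by the domain-disjointness argument above, which is exactly what turns the previous theorem's $O(N\log N)$ bound into the stronger $O(N)$ estimate through the telescoping series.
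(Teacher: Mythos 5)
Your proof is correct and follows essentially the same approach as the paper's: both observe that an invocation at level $\ell$ does $O(L-\ell)$ work (via \fxn{Split\_array} on arrays of size $2^{d(L-\ell)}$) and that there are $O(2^{d\ell})$ such invocations, so the total is bounded by the convergent geometric-decay sum $\sum_{\ell} 2^{d\ell}(L-\ell) = O(2^{dL}) = O(N)$. The uniqueness-of-visit argument you spell out is left implicit in the paper, which instead charges a bounded number of splits directly to each internal octant node of the implicit tree, but the resulting series is the same.
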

\begin{proof}%
  The leaves are all at the same level $L$, so $N=2^{d L}$, and there are
  $2^{d l}$ nodes in level $l$ of the tree.  Because leaves are evenly
  distributed, each node at level $l$ has $2^{d(L - l)}$ leaf descendants.
  Each node is associated with a bounded number of binary searches and calls
  to \fxn{Split\_array}, each with logarithmic complexity in the number of
  leaves beneath it.  So, ignoring leading coefficients, the time complexity
  is
  \begin{equation}
    \begin{aligned}
      \sum_{ l  = 0}^{ L  - 1}2^{d l } \log 2^{d( L  -  l )}
      &=
      d\sum_{ l  = 0}^{ L  - 1}2^{d l } ( L  -  l ) \\
      &=
      d\sum_{ l  = 0}^{ L  - 1}\frac{2^{d L }}{2^{d( L  
      -  l )}}
      ( L  -  l ) \\
      [\hat l  =  L  -  l ]\quad &=
      d2^{d L }\sum_{\hat l  = 
      1}^{ L }\frac{\hat l }{2^{d\hat l }} = d2^{d L }O(1) = d O(N).
    \end{aligned}
  \end{equation}
  Because the dimension $d$ is fixed, \fxn{Iterate} runs in $O(N)$ time.
\hfill\end{proof}

A uniformly refined octree is just a regular grid, so the indices of neighbors
follow a predictable rule: a linear-time algorithm can be achieved without a
recursive algorithm and without searching through the leaf arrays.  We outline
a class of octrees which has no rule for neighboring indices,
but for which \fxn{Iterate} still runs in linear time.

\begin{defn}[$\Delta$-uniform octrees]
  A class of octrees is \emph{$\Delta$-uniform} if the difference
  $(\max_{\oc o\in\cO} \oc o.l - \min_{\oc o\in\cO} \oc o.l)$
  is uniformly bounded by $\Delta$ for all octrees in the class.
\end{defn}%

\begin{theo} \fxn{Iterate} executes in $O(N)$ time on a class of
  $\Delta$-uniform octree.
\end{theo}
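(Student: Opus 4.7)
The plan is to adapt the level-by-level accounting in the proof of Theorem~\ref{theo:linear} to the $\Delta$-uniform setting by splitting the internal nodes of the octree into a ``coarse'' regime and a ``fine'' regime. Let $L_{\min} := \min_{\oc o\in\cO} \oc o.\ix l$ and $L_{\max} := \max_{\oc o\in\cO} \oc o.\ix l$, so that $L_{\max}-L_{\min}\leq \Delta$. Just as in the uniform case, every non-constant contribution to the cost of \fxn{Iterate\_interior} is a binary search within \fxn{Split\_array} whose cost is $O(\log m)$, where $m$ bounds the number of leaves beneath some support-set octant; and the total number of such searches, aggregated over the entire run, is proportional (with a multiplicity of at most $|\st B|+1$) to the number of internal nodes of the tree.

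In the coarse regime $0\leq l < L_{\min}$, no leaves exist at these levels, so every one of the $2^{dl}$ positions at level $l$ is an internal node (the leaves tile $\dom(\Root(\cdot))$, so each such position has at least one leaf descendant). The subtree beneath it holds at most $2^{d(L_{\max}-l)}$ leaves, so each binary search runs in $O(d(L_{\max}-l)) = O((L_{\min}-l)+\Delta)$ time. Summing,
\[
  \sum_{l=0}^{L_{\min}-1} 2^{dl}\,\bigl((L_{\min}-l)+\Delta\bigr)
  \;=\; O\bigl(2^{dL_{\min}}\bigr) + O\bigl(\Delta\cdot 2^{dL_{\min}}\bigr)
  \;=\; O\bigl(2^{dL_{\min}}\bigr),
\]
where the first term telescopes exactly as in the proof of Theorem~\ref{theo:linear} and the second uses that $\Delta$ is a class-wide constant. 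Finally $2^{dL_{\min}}\leq N$: the $2^{dL_{\min}}$ positions at level $L_{\min}$ tile $\dom(\Root(\cdot))$, and since every leaf has level $\geq L_{\min}$, each such position either equals or strictly contains at least one leaf, so the coarse regime contributes $O(N)$.

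In the fine regime $L_{\min}\leq l < L_{\max}$, every subtree holds at most $2^{d\Delta}=O(1)$ leaves, so each binary search costs $O(1)$; and the total number of internal nodes across all levels of an octree with $N$ leaves is $(N-1)/(2^d-1)=O(N)$. Hence the fine regime also contributes $O(N)$, and the theorem follows by addition. The main subtlety I anticipate is the coupling between the two estimates: the additive $\Delta$ factor in the coarse regime and the $O(2^{d\Delta})$ blow-up of the fine-regime array sizes both rely on $\Delta$ being a class-wide constant, which is precisely the hypothesis. Without it, each regime's bound would degrade by a factor growing with $\Delta$, recovering only the generic $O(N\log N)$ estimate of the first theorem in \appref{iteranalysis}.
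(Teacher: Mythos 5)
Your proof is correct, and the decomposition you chose differs from the paper's. The paper argues directly by domination: since at every level $l$ there are at most $2^{dl}$ nodes and at most $2^{d(L-l)}$ leaves in each subtree, the full cost is bounded by that of the uniform octree of depth $L=L_{\max}$, namely $O(2^{dL})$ from Theorem~\ref{theo:linear}; then $N\geq 2^{dl_{\min}}\geq 2^{d(L-\Delta)}$ converts $O(2^{dL})$ into $O(2^{d\Delta}N)=O(N)$. You instead split the levels at $L_{\min}$: for $l<L_{\min}$ every one of the $2^{dl}$ positions is an internal node and the per-search cost is $O((L_{\min}-l)+\Delta)$, which you sum to $O(2^{dL_{\min}})\leq O(N)$ by reusing the telescope from Theorem~\ref{theo:linear}; for $l\geq L_{\min}$ each subtree holds at most $2^{d\Delta}=O(1)$ leaves so each search is constant cost and there are $O(N)$ internal nodes. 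Both arguments hinge on the same two facts---the telescope and the lower bound on $N$ in terms of $2^{dL_{\min}}$---but the paper folds them into a single comparison to a uniform tree, while your two-regime split makes it more explicit where the $\Delta$-uniformity hypothesis enters (as an additive $\Delta$ in the coarse regime and as a $2^{d\Delta}$ bound on subtree sizes in the fine regime). The paper's version is shorter; yours spells out the structure that makes the constant-depth hypothesis work, and would degrade gracefully (as you note) if $\Delta$ were allowed to grow.
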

\begin{proof}
  Let $L=\max_{\oc o\in\cO}\oc o.l$ and $l_{\min}=\min_{\oc o\in \cO}\oc o.l$.
  For $l \min_{\oc o\in\cO} \oc o.l$,
  $2^{d\ell}$ is now an upper bound on the number of nodes at level $l$, and
  for every $l$, $2^{d(L-l)}$ is an upper bound on the number of descendant
  leaves of a level $l$ node.  Therefore the $O(2^{d L})$ runtime for a uniform
  octree is an upper bound on the runtime of \fxn{Iterate}, while a lower
  bound on $N$ is $2^{d l_{\min}} = 2^{d(L-\Delta)}$.  Therefore
  $2^{d L}\leq 2^{d\Delta}N$, so the runtime of \fxn{Iterate} is
  $O(2^{d\Delta}N)=O(N)$.  \hfill
\end{proof}

We now consider the \fxn{Iterate} algorithm in the multiple process, single
octree setting, and derive bounds in terms of the local number of leaves
$N_{\pr p}$ and the number of processes $P$.  A key component of the above
analysis for the serial runtime, that the number of ancestor nodes is $O(N)$,
is no longer true in a parallel setting: the number of ancestors of the leaves
in $\localoctants\cup\arry G_{\pr p}^d$ is not necessarily $O(N_{\pr p})$.
Suppose $\oc a$ is the smallest common ancestor of every leaf in
$\localoctants\cup\arry G_{\pr p}^d$ and $\oc a.l=\hat{l}$.  The
number of branches below $\oc a$ must be $O(N_{\pr p})$, so the analysis for
the runtime after level $\hat{l}$ is the same as for a single process,
substituting $\oc a$ for the root, so the time spent below $\oc a$ is
$O(N_{\pr p} \log N_{\pr p})$ in general or $O(N_{\pr p})$ for a
$\Delta$-uniform tree.  Thus an upper bound for the runtime is to add
$O(\hat{l}\log N_{\pr p})$ to that time.  We can bound $\hat{l}$ by
$L=\max_{\oc o\in\cO}\oc o.l$, and in the $\Delta$-uniform case $L\in O(\log
N)$. If we assume an even partitioning of the leaves, $N= P N_{\pr p}$, then
$L\in O(\log P + \log N_{\pr p})$.  The runtime for \fxn{Iterate} on an evenly
distributed octree is thus $O((L + N_{\pr p}) \log N_{\pr p})$ in general and
$O(\log P + N_{\pr p})$ for $\Delta$-uniform octrees.

Introducing multiple trees does not affect the analysis significantly:
maintaining separate arrays for each tree can only reduce the sizes of the
subarrays that are split by \fxn{Split\_array}.  Some time is taken to set up
the calls to \fxn{Iterate\_interior} for the interfaces between octrees, but
this time is negligble, especially if the forest realizes the common use case
$K \ll N$.

\end{document}